\def\const{\mathrm{const}}
\def\Real{\mathrm{Re\,}}
\def\Pic{\mathrm{Pic}}
\def\PV{\mathrm{P}_{\mathrm{V}}}
\numberwithin{equation}{section}
\newtheorem{theorem}{Theorem}[section]
\newtheorem{proposition}[theorem]{Proposition}
\newtheorem{corollary}[theorem]{Corollary}
\newtheorem{lemma}[theorem]{Lemma}
\newtheorem{remark}[theorem]{Remark}
\newtheorem{definition}[theorem]{Definition}
\title[Asymptotics of $\PV$]{Asymptotic behaviour of the fifth Painlev\'e transcendents in the space of initial values}
\author{Nalini Joshi}
\address{School of Mathematics and Statistics F07, The University of Sydney, New South Wales 2006, Australia}
\email{nalini.joshi@sydney.edu.au}
\author{Milena Radnovi\'c}
\address{School of Mathematics and Statistics F07, The University of Sydney, New South Wales 2006, Australia}
\email{milena.radnovic@sydney.edu.au}
\thanks{This research was supported by an Australian Laureate Fellowship \# FL120100094 from the Australian Research Council. The research of M.R.~was partially supported by by the Serbian Ministry of Education and Science (Project no. 174020: Geometry and Topology of Manifolds and Integrable Dynamical Systems).
}
\date{}
\begin{document}

\maketitle

\begin{abstract}
We study the asymptotic behaviour of the solutions of the fifth Painlev\'e equation as the independent variable approaches zero and infinity in the space of initial values.
We show that the limit set of each solution is compact and connected and, moreover, that any solution with an essential singularity at zero has an infinite number of poles and zeroes, and any solution with an essential singularity at infinity has infinite number of poles and, moreover, takes the value unity infinitely many times.
\end{abstract}


\section{Introduction}
Following the construction of initial value spaces of Painlev\'e equations by \citeauthor{Okamoto1979} \ycites{Okamoto1979}, and more recent asymptotic analysis of the solutions of the first, second and fourth Painlev\'e equations in such spaces, we investigate the solutions of the fifth Painlev\'e equation:
\begin{equation}\label{eq:PV}
\begin{split}
\PV\ : \ \frac{d^2y}{dx^2}=&\left(\frac{1}{2y}+\frac{1}{y-1}\right) \left(\frac{dy}{dx}\right)^2-\frac{1}{x}\frac{dy}{dx}+\frac{(y-1)^2}{x^2}\left(\alpha y+\frac{\beta}{y}\right)\\
&\qquad+\frac{\gamma y}{x}+\frac{\delta y(y+1)}{y-1},
\end{split}
\end{equation}
in an asymptotic limit in its initial value space. Complete information about the limit sets of transcendental solutions and their behaviours near the infinity set are found. Unlike earlier asymptotic investigations of $\PV$, we do not impose any reality constraints; here $y$ is a function of the complex variable $x$, and $\alpha$, $\beta$, $\gamma$, $\delta$ are given complex constants.

Noting that $\PV$ has (fixed) essential singularities only when the independent variable $x$ takes the values $0$ and $\infty$, we investigate the behaviour of the solutions near $x=0$. A similar analysis can be carried out near $\infty$ and we also include an outline of the main results for this limit.  We show that each solution that is singular at $x=0$ has infinitely many poles and zeroes in every neighbourhood of this point. Similarly, each solution singular at $x=\infty$ has infinitely many poles and, moreover, takes the value $1$ infinitely many times in each neighbourhood of infinity.

The starting point for our analysis is the compactification and regularisation of the initial-value space. To make explicit analytic estimates possible, we calculate detailed information about the Painlev\'e vector field after each resolution (or blow-up) of this space. A similar approach was carried out for the first, second, and fourth Painlev\'e equations respectively by \fullocite{DJ2011}, \fullocite{HJ2014}, and \fullocite{JR2016}. However, the construction of the initial-value spaces in each of these earlier works consisted of exactly nine blow-ups, while in the present paper, we will see that eleven blow-ups are needed, followed by two blow-downs.
The initial-value space is then obtained by removing the set, denoted by $\mathcal{I}$, of points which are not attained by any solution.

The main results obtained in this paper fall into four parts:
\vspace{2pt}
\begin{list}{}
  {\usecounter{enumi}
    \setlength{\parsep}{2pt}
    \setlength{\leftmargin}{12pt}\setlength{\rightmargin}{12pt}
    \setlength{\itemindent=-12pt}
  }

\item {\em Existence of a repeller set:} Theorem \ref{th:estimates} in Section \ref{sec:infinity} shows that $\mathcal I$ is a repeller for the flow. The theorem also provides the range of the independent variable for which a solution may remain in the vicinity of $\mathcal{I}$.
\item {\em Numbers of poles and zeroes:} In Corollary \ref{cor:infinity}, we prove that each solution that is sufficiently close to $\mathcal{I}$ has a pole in a neighbourhood of the corresponding value of the independent variable. Moreover, Theorem \ref{th:zeroespoles} shows that each solution with essential singularity at $x=0$ has infinitely many poles and infinitely many zeroes in each neighbourhood of that point.
\item {\em The complex limit set:} We prove in Theorem \ref{th:limit} that the limit set for each solution is non-empty, compact, connected, and invariant under the flow of the autonomous equation obtained as $x\to0$.
\item {\em Asymptotic behaviour as $x\to\infty$:} We show in Section \ref{sec:x_infinity} that each solution with an essential singularity at $x=\infty$ has infinitely many poles and takes the value unity infinitely many times in each neighbourhood of that point.
\end{list}
\vspace{2pt}

The asymptotic analysis of the fifth Painlev\'e transcendent has been studied by many authors, including
\citeauthor{AK2000} \ycites{AK1997a,AK1997b,AK2000}, \ocites{ZZ2016,BP2012,QS2006}, \citeauthor{LM1999} \ycites{LM1999,LM1999b}, \citeauthor{McCoyTang1986} \ycites{McCoyTang1986,McCoyTang1986b,McCoyTang1986c}, and \ocite{Jimbo1982}.
However, the literature on the asymptotic behaviours of the fifth Painlev\'e transcendent concentrates on behaviours on the real line, often focusing on special behaviours or solutions, while we consider all solution behaviours for $x\in\mathbb C$.
For other mathematical results related to $\PV$, see \cites{BFSVZ2013,Shimomura2011,KO2007,Sasaki2007,Clark2005b,LS2004,GJP2001,GJP2001b},
while for applications in physics see \cites{JMMS1980,Dyson1995,  Schief1994}, and references therein.

This paper is organised as follows.
In Section \ref{sec:space}, we construct and describe the space of the initial values for equation (\ref{eq:PV}), with
complete details of all the necessary calculations provided in Appendix \ref{sec:resolution}.
In Section \ref{sec:special}, we consider the special solutions of $\PV$.
Section \ref{sec:infinity} contains the analysis of the behaviours of solutions near the infinity set in the space of initial values. Results on the complex limit sets of solutions when the independent variable approaches $0$ are provided in Section \ref{sec:limit}.
The behaviours of the fifth Painlev\'e transcendent in the limit $x\to\infty$ are outlined in Section \ref{sec:x_infinity}.
A summary of the notation used in this paper is given in Appendix \ref{sec:notation}.

\section{The space of initial values}
\label{sec:space}
Since the fifth Painlev\'e equation is a second-order ordinary differential equation, solutions are locally defined by two initial values. Therefore, the space of initial values is two complex-dimensional. However, standard existence and uniqueness theorems only cover values of $y$ that are not arbitrarily close to $0$, $1$ or infinity (where the second derivative given by $\PV$ becomes ill-defined). In this section, we explain how to construct a regularized, compactified space of all possible initial values that overcomes these issues.

We start by formulating $\PV$ as an equivalent system of equations in Section \ref{sec:system} and describing its autonomous limiting form obtained as $x\to0$ in Section \ref{sec:auto}. The mathematical construction of the space of initial values is then given in Section \ref{sec:okamoto}. Where $y$ is arbitrarily close to the singular values $0$, $1$, $\infty$, the solutions have singular power series expansions, which become regularized Taylor expansions in corresponding domains of the initial value space. These regular expansions are provided in Section \ref{sec:poles}.

\subsection{A system equivalent to $\PV$}\label{sec:system}

With the change of the independent variable $t=\log x$, Equation (\ref{eq:PV}) becomes:
\begin{equation}\label{eq:PVlog}
\begin{split}
\frac{d^2y}{dt^2}=
\left(\frac{1}{2y}+\frac{1}{y-1}\right) \left(\frac{dy}{dt}\right)^2+(y-1)^2\left(\alpha y+\frac{\beta}{y}\right)
+e^t\gamma y +e^{2t}\frac{\delta y(y+1)}{y-1}.
\end{split}
\end{equation}
We rewrite Equation (\ref{eq:PVlog}) in the following way:
\begin{equation}\label{eq:PVlog-system}
\begin{split}
\frac{dy}{dt}=&
 2 y (y-1)^2z-(\theta_0+\eta)y^2+(2\theta_0+\eta-\theta_1 e^t)y-\theta_0,
\\
\frac{dz}{dt}=&-
(y-1)(3y-1)z^2+\big(2(\theta_0+\eta)y-2\theta_0-\eta+\theta_1 e^t\big)z
-\frac12\epsilon(\theta_0+\eta-\theta_{\infty}),
\end{split}
\end{equation}
where
$\theta_{\infty}^2=2\alpha$,
$\theta_0^2=-2\beta$,
$\theta_1^2=-2\delta$ $(\theta_1\neq0)$,
$\eta=-\frac{\gamma}{\theta_1}-1$, and
$\epsilon=\frac12(\theta_0+\theta_{\infty}+\eta)$.

\begin{remark}
Here, we assumed that $\delta\neq0$, which is a generic case of the fifth Painlev\'e equation.
When $\delta=0$, the $\PV$ is equivalent to the third Painlev\'e equation \cite{OO2006}.
\end{remark}

The system (\ref{eq:PVlog-system}) is Hamiltonian:
$$
\frac{dy}{dt}=\frac{\partial H}{\partial z},
\qquad
\frac{dz}{dt}=-\frac{\partial H}{\partial y},
$$
with Hamiltonian function:
\begin{equation}\label{eq:H}
H=y(y-1)^2z^2-(\theta_0+\eta)y^2z+(2\theta_0+\eta-\theta_1 e^t)yz-\theta_0z+\frac12\epsilon(\theta_0+\eta-\theta_{\infty})y.
\end{equation}

\subsection{The autonomous equation}
\label{sec:auto}

The autonomous equation corresponding to (\ref{eq:PVlog}) is:
\begin{equation}\label{eq:PVlog-auto}
\frac{d^2y}{dt^2}=\left(\frac{1}{2y}+\frac{1}{y-1}\right) \left(\frac{dy}{dt}\right)^2+(y-1)^2\left(\alpha y+\frac{\beta}{y}\right),
\end{equation}
which is equivalent to the autonomous version of (\ref{eq:PVlog-system}):
\begin{equation}\label{eq:PVlog-system-auto}
\begin{aligned}
&\frac{dy}{dt}=
(y-1)^2(2yz-\theta_0)+\eta y(y-1),
\\
&\frac{dz}{dt}=
(y-1)z(2\eta+2\theta_0+z-3yz)+\eta z-\frac12\epsilon(\theta_0+\eta-\theta_{\infty}).
\end{aligned}
\end{equation}
System (\ref{eq:PVlog-system-auto}) is also Hamiltonian:
$$
\frac{dy}{dt}=\frac{\partial E}{\partial z},
\qquad
\frac{dz}{dt}=-\frac{\partial E}{\partial y},
$$
with Hamiltonian:
\begin{equation}\label{eq:E}
E=y(y-1)^2z^2-(\theta_0+\eta)y^2z+(2\theta_0+\eta)yz-\theta_0z+\frac12\epsilon(\theta_0+\eta-\theta_{\infty})y.
\end{equation}

Using the first equation of (\ref{eq:PVlog-system-auto}) to express $z$, and using the fact that $E$ is constant along solutions, we get:
$$
\left(\frac{dy}{dt}\right)^2=(y-1)^2(4 C y + \theta_0^2 - 2 \theta_0 (\eta + \theta_0) y  + \theta_{\infty}^2  y^2 ),
\qquad
C=\const.
$$

It is worth observing that the constant function $y\equiv1$ is the only solution of this equation taking the value $1$.
From (\ref{eq:PVlog-system-auto}), the corresponding function $z$ is the solution of
$$
\frac{dz}{dt}=
\eta z-\frac12\epsilon(\theta_0+\eta-\theta_{\infty}).
$$
That is, we have
$$
z=c_1 e^{\eta t}+\frac{\epsilon(\theta_0+\eta-\theta_{\infty})}{2\eta},
$$
where $c_1$ is a constant.

The flow (\ref{eq:PVlog-system-auto}) has four fixed points: 
\begin{equation*}
(y, z)=
  \begin{cases}
   \left(1, \dfrac{\epsilon(\theta_0+\eta-\theta_{\infty})}{2\eta}\right)\\
   \\
   \left(Y_i, \dfrac{\theta_0}{2Y_i}+\dfrac{\eta}{2(1-Y_i)}\right), i\in\{1,2,3\},
  \end{cases}
\end{equation*}
where $Y_1$, $Y_2$, $Y_3$ are the roots of the following cubic polynomial in $Y$:
\[
  \left(-\theta_{\infty}^2 -   6 \eta^2 + 8 \eta \theta_0 + 2\theta_0^2\right) Y^3
 +\left( 2 \eta^2 - 12 \eta \theta_0 - 2\theta_0^2+\theta_{\infty}^2 \right) Y^2 
+\theta_0 (4 \eta - \theta_0) Y
+\theta_0^2.
\]


\subsection{Resolution of singularities}\label{sec:okamoto}
In this section, we explain how to construct the space of initial values for the system (\ref{eq:PVlog-system}).
First, we motivate the reason for this construction before introducing the notion of initial value spaces in Definition \ref{def:initial-values-space}, which is based on foliation theory.
We then explain how to construct such a space by carrying out resolutions or blow-ups, which are described in Definition \ref{def:blow-up}.

The system (\ref{eq:PVlog-system}) is a system of two first-order ordinary differential equations for $(y(t), z(t))$. 
Given initial values $(y_0, z_0)$ at $t_0$, local existence and uniqueness theorems provide a solution that is defined on a local polydisk $U\times V$ in $\mathbf C\times \mathbf C^2$, where $t_0\in U\subset \mathbf C$ and $(y_0, z_0)\in V\subset \mathbf C^2$.
Our interest lies in global extensions of these local solutions.

However, the occurence of movable poles in the Painlev\'e transcendents acts as a barrier to the extension of $U\times V$ to the whole of $\mathbf C\times \mathbf C^2$.
The first step to overcome this obstruction is to compactify the space $\mathbf C^2$, in order to include the poles.
We carry this out by embedding $\mathbf C^2$ into $\mathbf C\mathbf P^2$ and explicitly represent the system (\ref{eq:PVlog-system}) in the three affine coordinate charts of $\mathbf C\mathbf P^2$, which are given in Sections \ref{chart01}--\ref{chart03}.
The second step in this process results from the occurence of singularities in the Painlev\'e vector field \eqref{eq:PVlog-system}.
These occur in the $(y_{02},z_{02})$ and $(y_{03},z_{03})$ charts. 
The appearence of these singularities is related to irreducibility of the solutions of Painlev\'e equations as indicated by the following theorem, due to Painlev\'e. 

\begin{theorem}[\cite{Painleve1897}]
If the space of initial values for a differential equation is a compact rational variety, then the equation can be reduced either to a linear differential equation of higher order or to an equation for elliptic functions.
\end{theorem}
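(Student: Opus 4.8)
The plan is to analyze the theorem through the holomorphic foliation that the equation induces on its space of initial values, treating the autonomous reduction (such as \eqref{eq:PVlog-system-auto}) as the model case and then indicating the passage to the non-autonomous setting. Let $X$ denote the compact rational surface of initial values. In the autonomous case the flow is generated by a single holomorphic vector field $V$ on $X$ that possesses a meromorphic first integral $E$ (the energy \eqref{eq:E}), so the solution curves are the fibres of the pencil $E\colon X\dashrightarrow \mathbf C\mathbf P^1$. The central observation is that $V$ is tangent to each level curve and restricts there to a holomorphic section of its tangent bundle, which on the generic smooth fibre $C$ is not identically zero.

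First I would record the genus dichotomy forced by this observation. A smooth compact curve $C$ of genus $g\ge 2$ satisfies $H^0(C,\T C)=0$, so it carries no nonzero holomorphic vector field. Since $V|_C$ is holomorphic and nontrivial on the generic fibre, the generic level curve of $E$ must have genus $g\in\{0,1\}$. This is where the rationality of $X$ enters essentially: it guarantees that the base of the fibration is $\mathbf C\mathbf P^1$ and, through the theory of relatively minimal models of fibrations on rational surfaces, controls the degenerate fibres so that the generic behaviour governs the whole family.

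Next I would extract the two conclusions from the two genus values. If $g=0$, the generic fibre is a $\mathbf C\mathbf P^1$ and, in a local coordinate $w$, tangency forces $V$ to act as $\dot w = a(t)+b(t)w+c(t)w^2$, a Riccati equation; the standard substitution linearising a Riccati equation then reduces the original equation to a linear differential equation. If $g=1$, the generic fibre is an elliptic curve and $V|_C$ is a translation-invariant field whose flow uniformises $C$ by $\mathbf C$, so that $y(t)$ is expressed through elliptic functions. These are precisely the two alternatives of the statement.

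The main obstacle, and the place where the classical argument must be made rigorous, is the passage from the autonomous model to a genuinely $t$-dependent equation, where no first integral is available. Here I would replace the pencil $E$ by the one-dimensional foliation $\mathcal F$ cut out by the equation on the total space fibred over the $t$-line, and distinguish two cases according to whether $\mathcal F$ is everywhere transverse to the fibres or leaves an algebraic fibration invariant. Transversality produces a flat projective connection whose monodromy is linear, recovering the linear alternative, while an invariant fibration reproduces, fibre by fibre, the holomorphic vector field argument above and hence the genus bound. Carefully handling the singular fibres created by the blow-ups, so that the indices of $\mathcal F$ along the exceptional and inaccessible divisors remain compatible with $X$ being rational, is the delicate step; and it is exactly this compatibility that fails for the transcendental $\PV$, explaining why its space of initial values cannot be compact and rational.
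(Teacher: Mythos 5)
The paper does not prove this statement: it is quoted from \cite{Painleve1897} and used only contrapositively (Painlev\'e transcendents are irreducible, hence their initial value space cannot be a compact rational variety, forcing the removal of $\mathcal{I}$). So there is no internal proof to measure yours against, and your sketch must stand on its own. Its core mechanism --- a nonzero holomorphic vector field on a compact curve forces genus $\le 1$, whence a Riccati/linear equation in genus $0$ and elliptic functions in genus $1$ --- is the correct classical heuristic, and the observation that the hypothesis forces the lifted vector field to be holomorphic on the whole compact fibre (no points to remove, leaves transverse everywhere as in Definition \ref{def:initial-values-space}) is exactly the right use of the hypothesis.

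Nevertheless there are genuine gaps. First, the hypothesis gives you a compact rational fibre with a regular transverse foliation; it does not give you a meromorphic first integral $E$ or an invariant pencil. Your genus dichotomy needs the leaf closures to be algebraic curves sweeping out a pencil, and a general holomorphic vector field on a rational surface has non-algebraic leaves (a generic linear field on $\mathbf{CP}^2$ already does); in that case one must argue instead through the structure of the identity component of the automorphism group of the rational surface as a linear algebraic group, which you do not do. Second, the non-autonomous case --- the only one relevant to $\PV$ --- is where the real difficulty lives, and your dichotomy ``either $\mathcal F$ is everywhere transverse to an auxiliary fibration or it leaves an algebraic fibration invariant'' is neither justified nor obviously exhaustive; making this step rigorous is known to be delicate (Painlev\'e's original argument was incomplete, and modern treatments pass through Nishioka--Umemura irreducibility or Malgrange's Galois groupoid rather than the na\"ive foliation picture). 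Third, your closing sentence reasons from the behaviour of the transcendental $\PV$ back to the theorem; in the paper the implication runs the other way (theorem plus irreducibility implies non-compactness of the initial value space), so no property of $\PV$ may be invoked in the proof without circularity.
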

It is well known that the solutions of Painlev\'e equations are irreducible (in the sense of the theorem). Since $\mathbf{CP}^2$ is a compact rational variety, the theorem implies $\mathbf{CP}^2$ cannot be the space of initial values for (\ref{eq:PVlog-system}).

By the term {\em singularity} we mean points where $(dy/dt, dz/dt)$ becomes either unbounded or undefined because at least one component approaches the undefined limit $0/0$.
We are led therefore to construct a space in which the points where the singularities are regularised.
The process of regularisation is called "blowing up" or \emph{resolving a singularity}.

We now define the notion of initial value space.

\begin{definition}\label{def:initial-values-space}[\cite{Gerard1975}, \cite{GerardSec1972,Gerard1983,Okamoto1979}]
Let $(\mathcal{E},\pi,\mathcal{B})$ be a complex analytic fibration, $\Phi$ its foliation, and $\Delta$ a a holomorphic differential system on $\mathcal{E}$, such that:
\begin{itemize}
\item the leaves of $\Phi$ correspond to the solutions of $\Delta$;
\item the leaves of $\Phi$ are transversal to the fibres of $\mathcal{E}$;
\item for each path $p$ in the base $\mathcal{B}$ and each point $X\in \mathcal{E}$, such that $\pi(X)\in p$, the path $p$ can be lifted into the leaf of $\Phi$ containing point $X$.
\end{itemize}
Then each fibre of the fibration is called \emph{a space of initial values} for the system $\Delta$.
\end{definition}

The properties listed in Definition \ref{def:initial-values-space} imply that each leaf of the foliation is isomorphic to the base $\mathcal{B}$.
Since the fifth Painlev\'e transcendents (in the $t$ variable) can be globally extended as meromorphic functions on $\mathbf{C}$ \cite{JK1994,HL2001}, we are searching for the fibration with the base equal to $\mathbf{C}$. 




In order to construct the fibration, we apply the blow-up procedure, defined below, \cite{HartshorneAG,GrifHarPRINC,DuistermaatBOOK} to the singularities of the system (\ref{eq:PVlog-system}) that occur where at least one component becomes undefined of the form $0/0$.
\ocite{Okamoto1979} showed that such singular points are contained in the closure of infinitely many leaves.
Moreover, these leaves are holomorphically extended at such a point.

\begin{definition}\label{def:blow-up}
\emph{The blow-up} of the plane $\mathbf{C}^2$ at point $(0,0)$ is the closed subset $X$ of $\mathbf{C}^2\times\mathbf{CP}^1$ defined by the equation $u_1t_2=u_2t_1$, where $(u_1,u_2)\in\mathbf{C}^2$ and $[t_1:t_2]\in\mathbf{CP}^1$, see Figure \ref{fig:blow-up}.
There is a natural morphism $\varphi: X\to\mathbf{C}^2$, which is the restriction of the projection from $\mathbf{C}^2\times\mathbf{CP}^1$ to the first factor.
$\varphi^{-1}(0,0)$ is the projective line $\{(0,0)\}\times\mathbf{CP}^1$, called \emph{the exceptional line}.
\end{definition}
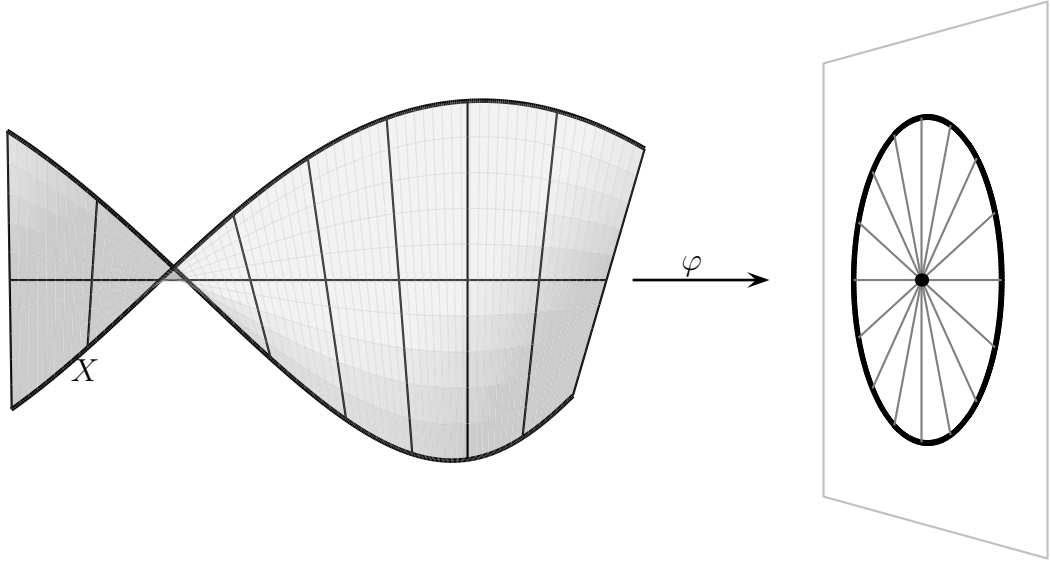
\begin{figure}[h]
\centering
\begin{pspicture*}(-10.5,-4)(4,4)

\psset{unit=0.6}

\psset{viewpoint=4 10 0,Decran=40,lightsrc=20 20 20}

 \psline[linecolor=black,fillcolor=black,incolor=black,linewidth=1pt,arrows=->,arrowsize=2pt 4](-3,0)(0,0)
\rput(-1.7,0.3){$\varphi$}

\rput(-15,-2){$X$}

\psSolid[object=plan, linecolor=gray!50, definition=equation, args={[1 0 0 1]}, base=-1.5 1.5 -1.5 1.5]

\defFunction[algebraic]{krug}(t)
{0}{cos(t)}{sin(t)}
\psSolid[object=courbe,linecolor=black,r=0.01,range=0 pi 2 mul,
linewidth=0.06,resolution=1000,
function=krug](-1,0,0)

\psSolid[object=line, linecolor=gray, args=-1 1 0 -1 -1 0]
\psSolid[object=line, linecolor=gray, args=-1 0 1 -1  0 -1]
\psSolid[object=line, linecolor=gray, args=-1 0.707 0.707 -1  -0.707 -0.707]
\psSolid[object=line, linecolor=gray, args=-1 -0.707 0.707 -1  0.707 -0.707]
\psSolid[object=line, linecolor=gray, args=-1 -0.383 0.924 -1  0.383 -0.924]
\psSolid[object=line, linecolor=gray, args=-1 0.383 0.924 -1  -0.383 -0.924]
\psSolid[object=line, linecolor=gray, args=-1  0.924 -0.383 -1   -0.924 0.383]
\psSolid[object=line, linecolor=gray, args=-1  -0.924 -0.383 -1   0.924 0.383]

\psSolid[object=point, args=-1 0 0]

\defFunction[algebraic]{helix1}(h)
{h+1}{cos(h+pi/4)}{sin(h+pi/4)}
\psSolid[object=courbe,r=0,
range=0 pi, linecolor=black, linewidth=0.1, resolution=360, function=helix1]

\defFunction[algebraic]{helix2}(h)
{h+1}{-cos(h+pi/4)}{-sin(h+pi/4)}
\psSolid[object=courbe,r=0,
range=0 pi, linecolor=black, linewidth=0.1, resolution=360, function=helix2]%

\psSolid[object=line, args=1 0 0 1 pi add 0 0]

\psSolid[object=line, linecolor=black, args=1 0.707 0.707 1 -0.707 -0.707 ]
\psSolid[object=line, linecolor=black, args=2.571 -0.707 0.707 2.571 0.707 -0.707 ]
\psSolid[object=line, linecolor=black, args=2.178 -0.383 0.924 2.178  0.383 -0.924 ]
\psSolid[object=line, linecolor=black, args=1.393 -0.383 -0.924 1.393  0.383 0.924 ]
\psSolid[object=line, linecolor=black, args=2.963  0.924 -0.383 2.963   -0.924 0.383]
\psSolid[object=line, linecolor=black, args=3.75  -0.924 -0.383 3.75  0.924 0.383]
\psSolid[object=line, linecolor=black, args=1.8   0 1 1.8  0 -1 ]
\psSolid[object=line, linecolor=black, args=4.14 0.707 0.707 4.14 -0.707 -0.707 ]

\defFunction[algebraic]{helix}(t,h)
{h+1}{t*cos(h+pi/4)}{t*sin(h+pi/4)}
\psSolid[object=surfaceparametree,linewidth=1sp,linecolor=gray!50,
     base=-1 1 0 pi,fillcolor=gray!50,incolor=gray!50,opacity=0.2,
     function=helix,
   ngrid=10 72]

\end{pspicture*}
\caption{The blow-up of the plane at a point.}\label{fig:blow-up}
\end{figure}

\begin{remark}
Notice that the points of the exceptional line $\varphi^{-1}(0,0)$ are in bijective correspondence with the lines containing $(0,0)$.
On the other hand,
$\varphi$ is an isomorphism between $X\setminus\varphi^{-1}(0,0)$ and $\mathbf{C}^2\setminus\{(0,0)\}$.
More generally, any complex two-dimensional surface can be blown up at a point \cite{HartshorneAG,GrifHarPRINC,DuistermaatBOOK}.
In a local chart around that point, the construction will look the same as described for the case of the plane.
\end{remark}

Notice that the blow-up construction separates the lines containing the point $(0,0)$ in Definition \ref{def:blow-up}, as shown in Figure \ref{fig:blow-up}.
In this way, the solutions of \eqref{eq:PVlog-system} containing the same point can be separated.
Additional blow-ups may be required if the solutions have a commont tangent line or a tangency of higher order at such a point.

The explicit resolution of the vector field \eqref{eq:PVlog-system} is carried out in Appendix \ref{sec:resolution}. As mentioned above, the process requires 11 resolutions of singularities, or, blow-ups.

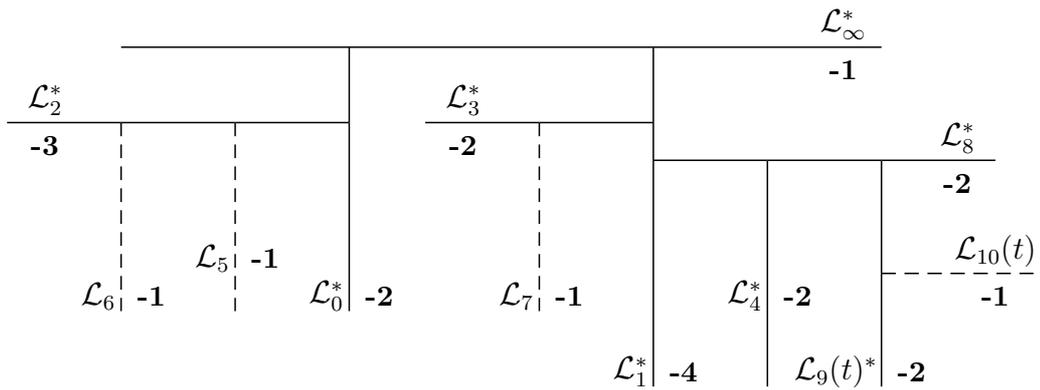
\begin{figure}[h]
\centering
\begin{pspicture}(-1.5,-5)(12,0)

\psset{linecolor=black,linewidth=0.02,fillstyle=solid}

\psline(0,-0.5)(10,-0.5)
\rput(9.5,-0.2){$\mathcal{L}_{\infty}^*$}
\rput(9.5,-0.8){\small\textbf{-1}}

\psline(3,-0.5)(3,-4)
\rput(2.7,-3.8){$\mathcal{L}_{0}^*$}
\rput(3.4,-3.8){$\small\textbf{-2}$}


\psline(3,-1.5)(-1.5,-1.5)
\rput(-1,-1.2){$\mathcal{L}_{2}^*$}
\rput(-1,-1.8){\small\textbf{-3}}


\psline[linestyle=dashed](1.5,-1.5)(1.5,-4)
\rput(1.2,-3.3){$\mathcal{L}_{5}$}
\rput(1.9,-3.3){\small\textbf{-1}}


\psline[linestyle=dashed](0,-1.5)(0,-4)
\rput(-0.3,-3.8){$\mathcal{L}_{6}$}
\rput(0.4,-3.8){\small\textbf{-1}}

\psline(7,-0.5)(7,-5)
\rput(6.7,-4.8){$\mathcal{L}_{1}^*$}
\rput(7.4,-4.8){\small\textbf{-4}}


\psline(7,-1.5)(4,-1.5)
\rput(4.5,-1.2){$\mathcal{L}_{3}^*$}
\rput(4.5,-1.8){\small\textbf{-2}}

\psline[linestyle=dashed](5.5,-1.5)(5.5,-4)
\rput(5.2,-3.8){$\mathcal{L}_{7}$}
\rput(5.9,-3.8){\small\textbf{-1}}

\psline(7,-2)(11.5,-2)
\rput(11,-1.7){$\mathcal{L}_{8}^*$}
\rput(11,-2.3){\small\textbf{-2}}

\psline(8.5,-2)(8.5,-5)
\rput(8.2,-3.8){$\mathcal{L}_{4}^{*}$}
\rput(8.9,-3.8){\small\textbf{-2}}

\psline(10,-2)(10,-5)
\rput(9.4,-4.8){$\mathcal{L}_{9}(t)^{*}$}
\rput(10.4,-4.8){\small\textbf{-2}}

\psline[linestyle=dashed](10,-3.5)(12,-3.5)
\rput(11.5,-3.2){$\mathcal{L}_{10}(t)$}
\rput(11.5,-3.8){\small\textbf{-1}}

\end{pspicture}
\caption{The fibre $\mathcal{D}(t)$ is obtained from $\mathbf{CP}^2$ by $11$ blow-ups.}\label{fig:okamoto}
\end{figure}

The resulting surface $\mathcal{D}(t)$ is shown in Figure \ref{fig:okamoto}.
Note that in this figure, $\mathcal{L}_{\infty}^*$ is the proper preimage of the line at the infinity, while
$\mathcal{L}_{0}^*$, $\mathcal{L}_{1}^*$, $\mathcal{L}_{2}^*$, $\mathcal{L}_{3}^*$, $\mathcal{L}_{4}^*$, $\mathcal{L}_{8}^*$, $\mathcal{L}_{9}(t)^*$  are proper preimages of exceptional lines obtained by blow ups at points $a_0$, $a_1$, $a_2$, $a_3$, $a_4$, $a_8$, $a_9$ respectively
and $\mathcal{L}_{5}$, $\mathcal{L}_{6}$, $\mathcal{L}_{7}$, $\mathcal{L}_{10}(t)$ are exceptional lines obtained by blowing up points $a_5$, $a_6$, $a_7$, $a_{10}$ respectively. 
The self-intersection number of each line (after all blow-ups are completed) is indicated in the figure.

Okamoto described so called \emph{singular points of the first class} that are not contained in the closure of any leaf of the foliation given by the system of differential equations.
At such points, the corresponding vector field is infinite.
(For example, in chart $(y_{02},z_{02})$ from Section \ref{chart02} such a singular point is given by $y_{02}=0$ with non-zero $z_{02}$.)
In the surface $\mathcal{D}(t)$, all remaining singular points are of the first class, and the fibre of the initial value space is obtained by removing them:
$$
\mathcal{E}(t)=\mathcal{D}(t)\setminus\left(\bigcup_{j=0}^4\mathcal{L}_j^*\cup\mathcal{L}_8^*\cup\mathcal{L}_9^*\cup\mathcal{L}_{\infty}\right).
$$

In $\mathcal{D}(t)$, each line with self-intersection number $-1$ can be blown down again. 
Blowing down $\mathcal{L}_{\infty}^*$ and then the projection of $\mathcal{L}_{0}^*$, we get the surface $\mathcal{F}(t)$, which is shown in Figure \ref{fig:okamoto-blow-down}.
The projection of each remaining line from $\mathcal{D}(t)$ is denoted by the same index but now with superscript $p$.
Notice that the self-intersection numbers of $\mathcal{L}_{1}^p$ and $\mathcal{L}_{2}^p$ are no longer the same as of the corresponding pre-images $\mathcal{L}_{1}^*$ and $\mathcal{L}_{2}^*$.
In this space,  we denote by $\mathcal{I}$ the set of all singular points of the first class in $\mathcal{F}(t)$:
$$
\mathcal{I}=\bigcup_{j=1}^4\mathcal{L}_j^p\cup\mathcal{L}_8^p\cup\mathcal{L}_9^p.
$$
The fibre $\mathcal{E}(t)$ of the initial value space can be identified with $\mathcal{F}(t)\setminus\mathcal{I}$.

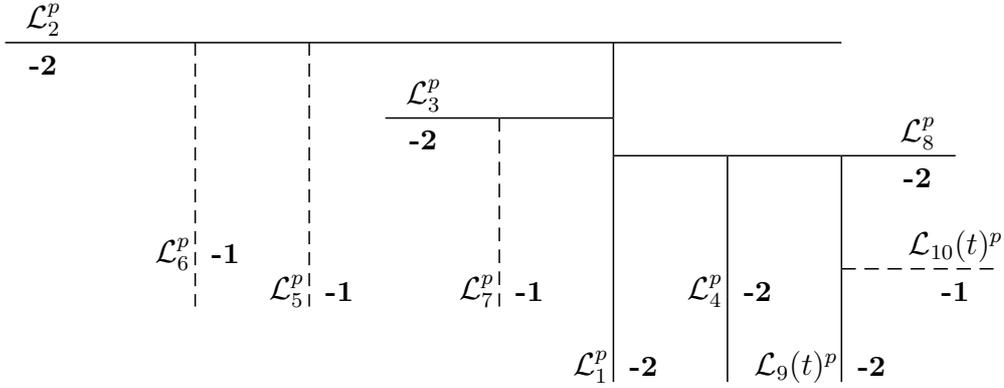
\begin{figure}[h]
\centering
\begin{pspicture}(-1,-5)(12,0)

\psset{linecolor=black,linewidth=0.02,fillstyle=solid}

\psline(-1,-0.5)(10,-0.5)
\rput(-.5,-0.2){$\mathcal{L}_{2}^p$}
\rput(-.5,-0.8){\small\textbf{-2}}

\psline[linestyle=dashed](3,-0.5)(3,-4)
\rput(2.7,-3.8){$\mathcal{L}_{5}^p$}
\rput(3.4,-3.8){\small\textbf{-1}}


\psline[linestyle=dashed](1.5,-0.5)(1.5,-4)
\rput(1.2,-3.3){$\mathcal{L}_{6}^p$}
\rput(1.9,-3.3){\small\textbf{-1}}

\psline(7,-0.5)(7,-5)
\rput(6.7,-4.8){$\mathcal{L}_{1}^p$}
\rput(7.4,-4.8){\small\textbf{-2}}


\psline(7,-1.5)(4,-1.5)
\rput(4.5,-1.2){$\mathcal{L}_{3}^p$}
\rput(4.5,-1.8){\small\textbf{-2}}

\psline[linestyle=dashed](5.5,-1.5)(5.5,-4)
\rput(5.2,-3.8){$\mathcal{L}_{7}^p$}
\rput(5.9,-3.8){\small\textbf{-1}}

\psline(7,-2)(11.5,-2)
\rput(11,-1.7){$\mathcal{L}_{8}^p$}
\rput(11,-2.3){\small\textbf{-2}}

\psline(8.5,-2)(8.5,-5)
\rput(8.2,-3.8){$\mathcal{L}_{4}^p$}
\rput(8.9,-3.8){\small\textbf{-2}}

\psline(10,-2)(10,-5)
\rput(9.4,-4.8){$\mathcal{L}_{9}(t)^p$}
\rput(10.4,-4.8){\small\textbf{-2}}

\psline[linestyle=dashed](10,-3.5)(12,-3.5)
\rput(11.5,-3.2){$\mathcal{L}_{10}(t)^p$}
\rput(11.5,-3.8){\small\textbf{-1}}

\end{pspicture}
\caption{The fibre $\mathcal{F}(t)$ is obtained from $\mathbf{CP}^2$ by $11$ blow-ups and two blow-downs.}\label{fig:okamoto-blow-down}
\end{figure}

If $\mathcal{S}$ is a surface obtained from the projective plane by a several successive blow-ups of points, then the group of all automorphisms of the Picard group $\Pic(\mathcal{S})$ preserving the canonical divisor $K$ is generated by the reflections
$X\mapsto X+(X.\omega)\omega$,
with $\omega\in\Pic(\mathcal{S})$, $K.\omega=0$, $\omega.\omega=-2$.
That group is an affine Weyl group and the lines of self-intersection $-2$ are its simple roots.
Representing each such line by a node and connecting a pair of nodes by a line only if they intersect in the fibre, we obtain the Dynkin diagram shown in Figure \ref{fig:dynkin}, which is of type $D_5^{(1)}$.
For detailed expositions on the topic of surfaces and root systems, see \cite{Demazure1980, Harbourne1985} and references therein.

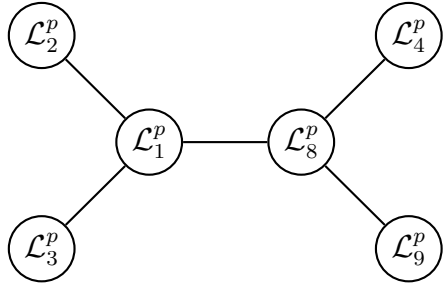
\begin{figure}[h]
\centering
\begin{pspicture}(-2,-2)(4,2)

\psset{linecolor=black}
\rput(0,0){\circlenode{1}{$\mathcal{L}_1^p$}}
\rput(2,0){\circlenode{2}{$\mathcal{L}_8^p$}}
\rput(-1.414,1.414){\circlenode{3}{$\mathcal{L}_2^p$}}
\rput(-1.414,-1.414){\circlenode{4}{$\mathcal{L}_3^p$}}
\rput(3.414,1.414){\circlenode{5}{$\mathcal{L}_4^p$}}
\rput(3.414,-1.414){\circlenode{6}{$\mathcal{L}_9^p$}}

\ncline{-}{1}{2}
\ncline{-}{1}{3}
\ncline{-}{1}{4}
\ncline{-}{2}{5}
\ncline{-}{2}{6}
    
\end{pspicture}
\caption{The Dynkin diagram of $D_5^{(1)}$.}\label{fig:dynkin}
\end{figure}

In the limit $\Real t\to-\infty$, the resulting Okamoto space is compactified by the fibre $\mathcal{F}_{\infty}$, corresponding to the autonomous system (\ref{eq:PVlog-system-auto}), see Figure \ref{fig:okamoto-blow-down-limit}.
Its infinity set is given by
$$
\mathcal{I}_{\infty}=\mathcal{L}_1^a\cup\mathcal{L}_2^p\cup\mathcal{L}_3^a.
$$
\begin{figure}[h]
\centering
\begin{pspicture}(-1,-5)(8,0)

\psset{linecolor=black,linewidth=0.02,fillstyle=solid}

\psline(-1,-0.5)(8,-0.5)
\rput(-.5,-0.2){$\mathcal{L}_{2}^a$}
\rput(-.5,-0.8){\small\textbf{-2}}

\psline[linestyle=dashed](3,-0.5)(3,-4)
\rput(2.7,-3.8){$\mathcal{L}_{5}^a$}
\rput(3.4,-3.8){\small\textbf{-1}}


\psline[linestyle=dashed](1.5,-0.5)(1.5,-4)
\rput(1.2,-3.3){$\mathcal{L}_{6}^a$}
\rput(1.9,-3.3){\small\textbf{-1}}

\psline(7,-0.5)(7,-5)
\rput(6.7,-4.8){$\mathcal{L}_{1}^a$}
\rput(7.4,-4.8){\small\textbf{-1}}


\psline(7,-1.5)(4,-1.5)
\rput(4.5,-1.2){$\mathcal{L}_{3}^a$}
\rput(4.5,-1.8){\small\textbf{-2}}

\psline[linestyle=dashed](5.5,-1.5)(5.5,-4)
\rput(5.2,-3.8){$\mathcal{L}_{7}^a$}
\rput(5.9,-3.8){\small\textbf{-1}}

\end{pspicture}
\caption{The fibre $\mathcal{F}_{\infty}$ corresponding to the autonomous system (\ref{eq:PVlog-system-auto}).}\label{fig:okamoto-blow-down-limit}
\end{figure}
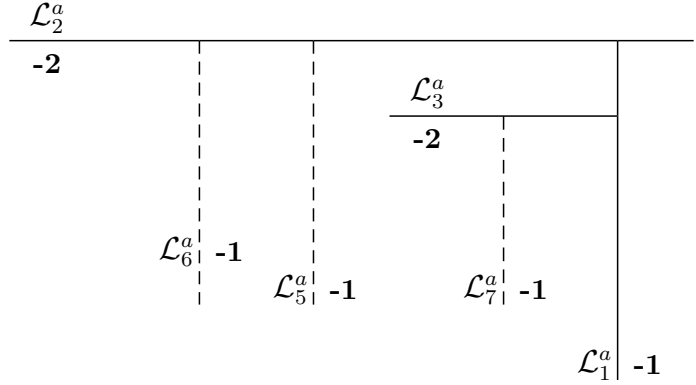
$\mathcal{F}_{\infty}$ is constructed by a sequence of blow-ups, carried out along the lines of the construction of the space of the initial values for the non-autonomous system (\ref{eq:PVlog-system}).
For those readers who may wish to carry this out separately, we note that the resolution at point $a_4$ (found in Section \ref{a1-blow}) needs to be done only for the non-autonomous system.

\subsection{Movable singularities in Okamoto's space}\label{sec:poles}
Here, we consider neighbourhoods of exceptional lines where the Painlev\'e vector field \eqref{eq:PVlog-system} becomes unbounded.
The construction given in Appendix \ref{sec:resolution} shows that these are given by the lines $\mathcal{L}_5$, $\mathcal{L}_6$, $\mathcal{L}_7$ and $\mathcal{L}_{10}$.

\subsubsection*{Movable pole with residue $-\,\theta_{\infty}$} The set $\mathcal{L}_5\setminus\mathcal{I}$ is given by $y_{62}=0$ in the $(y_{62},z_{62})$ chart, see Section \ref{a5-blow}.
Suppose $y_{62}(\tau)=0$, $z_{62}(\tau)=B$, for arbitrary complex numbers $\tau$, $B$.
Solving the system of differential equations for $y_{62}$, $z_{62}$ in Section \ref{a5-blow} in a neighbourhood of $t=\tau$, we obtain
$$
y_{62}=-\theta_{\infty}(t-\tau)+\frac{\theta_{\infty}}{2}\big(2 B-2\theta_{\infty}-\eta-\theta_1e^{\tau}\big)(t-\tau)^2+O((t-\tau)^3).
$$
Since the transformation from $(y_{62}, z_{62})$ to $(y, z)$ (also given in Section \ref{a5-blow}) is
$$
y=\frac1{y_{62}}
\quad\text{and}\quad
z=y_{62}(y_{62}z_{62}+1),
$$ 
we obtain series expansions for $(y, z)$ given by
$$
\begin{aligned}
y= & -\frac{\theta_{\infty}}{t-\tau}-\frac{2 B-2\theta_{\infty}-\eta-\theta_1e^{\tau}}{2\theta_{\infty}}+O(t-\tau),
\\
z= & -\theta_{\infty}(t-\tau)
 +
\frac{\theta_{\infty}}{2}\left(
2B(1+\theta_{\infty})-\eta-2\theta_{\infty}-\theta_1 e^{\tau}
\right)(t-\tau)^2
+O((t-\tau)^3).
\end{aligned}
$$
Clearly, $y$ has a simple pole with residue $-\theta_{\infty}$,  while $z$ has a simple zero at $t=\tau$.

\subsubsection*{Movable pole with residue $\theta_{\infty}$} At the intersection with $\mathcal{L}_6\setminus\mathcal{I}$, $y$ has a simple pole with the residue $\theta_{\infty}$, while $z$ has a simple zero.
This case is analogous to the previous one, see Sections \ref{a5-blow} and \ref{a6-blow}.

\subsubsection*{Movable zero with coefficient $\theta_{0}$} The set $\mathcal{L}_7\setminus\mathcal{I}$ is given by $z_{81}=0$ in the $(y_{81},z_{81})$ chart, see Section \ref{a7-blow}.
Suppose $y_{81}(\tau)=B$ and $z_{81}(\tau)=0$.
Then integration of the vector field gives
\[
z_{81}(t)=(t-\tau)
+\frac{\eta-2\theta_0-\theta_1e^{\tau}}{2}(t-\tau)^2
+\frac{F-2\eta\theta_0+\theta_0^2-\theta_1e^{\tau}-4B}{3}(t-\tau)^3
 +O((t-\tau)^4),
\]
with $F=\dfrac12\epsilon(\theta_0+\eta-\theta_{\infty})$.
Since
$$
y=z_{81}(y_{81}z_{81}+\theta_0),
\quad
z=\frac1{z_{81}},
$$
we obtain
\[
\begin{split}
y= &\ \theta_0(t-\tau)+\frac{2B+\theta_0(\eta-2\theta_0-\theta_1e^{\tau})}{2}(t-\tau)^2+O((t-\tau)^3),
\\
z=&\ \frac{1}{t-\tau}-\frac{\eta-2\theta_0-\theta_1e^{\tau}}{2}
+\left(\frac{(\eta-2\theta_0-\theta_1e^{\tau})^2}{4}-\frac{F-2\eta\theta_0+\theta_0^2-\theta_1e^{\tau}-4B}{3}\right)(t-\tau)
\\&\qquad\quad
+O((t-\tau)^2).
\end{split}
\]
Thus, $y$ has a simple zero and $z$ a simple pole at $t=\tau$.

\subsubsection*{Movable points where $y$ becomes unity} The set $\mathcal{L}_{10}\setminus\mathcal{I}$ is given by $z_{111}=0$ in the $(y_{111},z_{111})$ chart, see Section \ref{a10-blow}.
Suppose $y_{111}(\tau)=B$ and $z_{111}(\tau)=0$.
Then
\[
\begin{split}
z_{111}=&\ (t-\tau)+\frac{\theta_1 e^{\tau}-\eta-1}{2}(t-\tau)^2
\\&
+\frac13\left(
1+\frac52\eta+\eta^2-\frac{B}{\theta_1 e^{\tau}}+\theta_1e^{\tau}\left(1-\theta_0-\frac{\eta}2\right)+\frac12\theta_1^2e^{2\tau}
\right)(t-\tau)^3
 +O((t-\tau)^4).
\end{split}
\]
Since we have
$$
\begin{aligned}
y =&\ 1 + \theta_1 e^t z_{111} + (1+\eta)\theta_1 e^t z_{111}^2 + y_{111} z_{111}^3,
\\
z =&\ \frac{1}{ z_{111}^2 (\theta_1 e^t + (1+\eta)\theta_1 e^t z_{111} + y_{111} z_{111}^2)},
\end{aligned}
$$
it follows that $y(\tau)=1$ and $z$ has a double pole at $t=\tau$.
Their expansions around this point are given by
\[
\begin{split}
y=&\ 1+\theta_1e^{\tau}(t-\tau)+\frac{\theta_1e^{\tau}(\theta_1 e^{\tau}+\eta+3)}{2}(t-\tau)^2
\\
&+\frac{\theta_1e^{\tau}}{6}\left(
2 - 4 \eta - 4 \eta^2 - \frac{2 B}{\theta_1 e^{\tau}} + (11 +  5 \eta- 2 \theta_0)\theta_1 e^{\tau}
 + \theta_1^2e^{2\tau}
\right)(t-\tau)^3
+O((t-\tau)^4),
\\
z=&\ \frac{(\theta_1 e^{\tau})^{-1}}{(t-\tau)^2}+\frac{1+(\theta_1e^{\tau})^{-1}}{t-\tau}
+ \left(\frac{2 B}{\theta_1 e^{\tau}} + \frac{\eta \theta_1 e^{\tau}}{2}+ \theta_0 \theta_1 e^{\tau} 
+ \frac{\theta_1^2 e^{2\tau}}{4} - \frac{1+3 \eta^2}{4} \right)
+O(1).
\end{split}
\]

\section{Special solutions}
\label{sec:special}
In this section, we consider the pencil of curves corresponding to the Hamiltonian (\ref{eq:E}) of the autonomous system (\ref{eq:PVlog-system-auto}).
We show that by a birational equivalence, this pencil can be transformed to a pencil of conics.
In Section \ref{sec:conics}, we analyse the solutions of the fifth Painlev\'e equation corresponding to singular conics from the pencil.
In Section \ref{sec:other}, we give a summary of other special solutions.

\subsection{Special solutions and singular conics}
\label{sec:conics}

The pencil of curves arising from the Hamiltonian  of the autonomous system (\ref{eq:PVlog-system-auto}) is given by the zero set of the one-parameter family of polynomials (parametrised by $c$):
\begin{equation}\label{eq:pencil}
h_c(y,z)=y(y-1)^2z^2-(\theta_0+\eta)y^2z+(2\theta_0+\eta)yz-\theta_0z+\frac12\epsilon(\theta_0+\eta-\theta_{\infty})y-c.
\end{equation}
For each $c$, the curve $h_c(y,z)=0$ is birationally equivalent to
$$
h_{\tilde c}^1(y_1,z_1)=\frac{z_1^2-\left( \theta_0^2 +\tilde{c} y_1  + \theta_{\infty}^2  y_1^2 \right)}{4y_1}=0,
$$
where $\tilde{c}=4c-2\theta_0(\eta+\theta_0)$ and the birational equivalence is given by
$$
y_1=y,
\quad
z_1=2y(y-1)z-(\theta_0+\eta)y+\theta_0.
$$
The family of level curves $h_{\tilde c}^1=0$ forms a pencil of conics.
The singular conics in the pencil are:
\begin{align*}
\tilde{c}&=&2\theta_0\theta_{\infty},
&&(z_1+\theta_{\infty} y_1+\theta_0)(z_1-\theta_{\infty}y_1-\theta_0)=0,
\\
\tilde{c}&=&-2\theta_0\theta_{\infty},
&&(z_1-\theta_{\infty} y_1+\theta_0)(z_1+\theta_{\infty}y_1-\theta_0)=0,
\\
\tilde{c}&=&\infty,
&&
y_1w_1=0,
\end{align*}
where $[y_1:z_1:w_1]$ are homogeneous coordinates.
Moreover, the base points of this pencil are:
$$
[0:\theta_0:1],
\quad
[0:-\theta_0:1],
\quad
[1:\theta_{\infty}:0],
\quad
[1:-\theta_{\infty}:0].
$$

Recalling that the Hamiltonian $E$ of the autonomous system, given by Equation \eqref{eq:E}, has the time derivative 
$$
E'=-\theta_1 e^t\big( z(2\theta_0-(\eta+2\theta_0)y+2y(y-1)z ) +E\big),
$$
we can also search for conditions under which all successive derivatives of $E$ are zero. This happens if and only if
$$
z=0
\quad\text{or}\quad
2\theta_0-(\eta+2\theta_0)y+2y(y-1)z =0.
$$
These cases are investigated in further detail below.

\subsubsection*{Case $z=0$}
From the second equation of (\ref{eq:PVlog-system}) we have $\epsilon(\theta_0+\eta-\theta_{\infty})=0$, i.e.~ $\theta_0+\eta=\pm\theta_{\infty}$.
In this case $z_1=\pm\theta_{\infty}y_1+\theta_0$, and so this case corresponds to lines in the pencil of conics containing the base point $[0:\theta_0:1]$.

The first equation of (\ref{eq:PVlog-system}) is then a Riccati equation:
$$
\frac{dy}{dt}=\mp\theta_{\infty} y^2+(\theta_0\pm\theta_{\infty}-\theta_1 e^t)y-\theta_0,
$$
which is equivalent to 
\begin{equation}\label{eq:riccati1}
x\frac{dy}{dx}=\mp\theta_{\infty} y^2+(\theta_0\pm\theta_{\infty}-\theta_1 x)y-\theta_0.
\end{equation}
For $\theta_{\infty}\theta_1\neq0$, the solutions of this equation can be expressed in terms of the Whittaker functions \cite{NISThandbook}.
Note that $\theta_1\neq0$ is equivalent to $\delta\neq0$, when the $\PV$ can be renormalised to $\delta=-1/2$.
Then, the solutions of (\ref{eq:riccati1}) are given by
$$
y=-\frac{z\phi'(x)}{\theta_{\infty}\phi(x)},
$$
where
$$
\phi(x)=\frac{C_1M_{\kappa,\mu}(\xi)+C_2W_{\kappa,\mu}(\xi)}{\xi^{\kappa}}\, e^{\xi/2},
$$
with $\xi=\pm x$, $\kappa=(\mp\theta_{\infty}-\theta_0+1)/2$, $\mu=\mp\theta_{\infty}+\theta_0$, and $C_1$, $C_2$ being arbitrary constants.

Solutions from this class intersect only the pole lines $\mathcal{L}_5$ and $\mathcal{L}_6$.

\subsubsection*{Case $2\theta_0-(\eta+2\theta_0)y+2y(y-1)z =0$}
The first equation of (\ref{eq:PVlog-system}) is then:
$$
\frac{dy}{dt}=\theta_0 y^2-(2\theta_0+\theta_1 e^t)y+\theta_0,
$$
which is again a Riccati equation that can be solved analogously to the previous case of (\ref{eq:riccati1}).
Since in this case $z_1=\theta_0 y-\theta_0$, the condition on the constants is $\theta_0=\pm\theta_{\infty}$.
The solutions belong to the lines from the pencil of conics that contain the base point $[0:-\theta_0:1]$.

Solutions from this class intersect only the pole line $\mathcal{L}_7$.

\subsection{Other special solutions of $\PV$}
\label{sec:other}
In this section, we give a brief account on special solutions and B\"acklund transformations of $\PV$.
For a detailed exposition, we refer readers to 
\cite{Luk1968, Airault1979, Okamoto1987, Gromak1999, Schief2000,  FW2002, MOK2002, Clark2005b, NISThandbook} and the references therein.

All rational solutions of $\PV$ are of the form
$$
y=\lambda x+\mu+\frac{P(x)}{Q(x)},
$$
where $\lambda$, $\mu$ are constants and $P$, $Q$ polynomials of degrees $n-1$ and $n$ respectively.

Such solutions can be obtained by applying B\"acklund transformations:
$$
\begin{aligned}
& \mathcal{S}_1\ :\ (y(x);\alpha,\beta,\gamma,\delta)\mapsto(y(-x);\alpha,\beta,-\gamma,\delta),
\\
&\mathcal{S}_2\ :\ (y(x);\alpha,\beta,\gamma,\delta)\mapsto\left(\frac{1}{y(x)};-\beta,-\alpha,-\gamma,\delta\right),
\\
&\mathcal{T}_{\varepsilon_1,\varepsilon_2,\varepsilon_3}\ :\ \left(y(x);\alpha,\beta,\gamma,-\frac12\right)
\mapsto
\left(\frac{\Phi-2\varepsilon_1 xy}{\Phi};\alpha',\beta',\gamma',-\frac12\right),
\end{aligned}
$$
where $\varepsilon_1, \varepsilon_2, \varepsilon_3\in\{-1,1\}$ and
$$
\begin{aligned}
&\alpha'=\frac18\left(\gamma+\varepsilon_1\left(1-\varepsilon_3\sqrt{-2\beta}-\varepsilon_2\sqrt{2\alpha}\right)\right)^2;
\\
&\beta'=-\frac18\left(\gamma-\varepsilon_1\left(1-\varepsilon_3\sqrt{-2\beta}-\varepsilon_2\sqrt{2\alpha}\right)\right)^2;
\\
&\gamma'=\varepsilon_1\left(\varepsilon_3\sqrt{-2\beta}-\varepsilon_2\sqrt{2\alpha}\right);
\\
&\Phi=xy'-\varepsilon_2\sqrt{2\alpha} y^2+\varepsilon_3\sqrt{-2\beta}
+
(\varepsilon_2\sqrt{2\alpha}-\varepsilon_3\sqrt{-2\beta}+\varepsilon_1x)y.
\end{aligned}
$$
to so called seed solutions of $\PV$, which are given by
\begin{equation}\label{eq:rational}
y=\begin{cases}
&\kappa x+\mu, \ \text{for}\ \alpha=\frac12,\ \beta=-\frac12\mu^2,\ \gamma=\kappa(2-\mu),\ \delta=-\frac12\kappa^2;
\\
&\dfrac{\kappa}{x+\kappa}, \  \ \text{for}\ \alpha=\frac12,\ \beta=\kappa^2\mu,\ \gamma=2\kappa\mu,\ \delta=\mu;
\\
&\dfrac{\kappa+x}{\kappa-x}, \ \ \text{for} \ \alpha=\frac18,\ \beta=-\frac18,\ \gamma=-\kappa\mu,\ \delta=\mu,
\end{cases}
\end{equation}
where $\kappa$, $\mu$ are arbitrary constants.

In addition, $\PV$ has the following elementary (non-rational) solutions:
\begin{equation}\label{eq:elementary}
\begin{aligned}
&y=1+\kappa\sqrt{x},\quad \text{for} \quad\alpha=\mu,\ \beta=-\frac18,\ \gamma=-\mu\kappa^2,\ \delta=0;
\\
&y=\kappa e^{\mu x},\quad \text{for} \quad\alpha=\beta=0,\ \gamma=\mu,\ \delta=-\frac12\mu^2,
\end{aligned}
\end{equation}
where $\kappa$, $\mu$ are arbitrary constants.
For $\delta=0$, algebraic solutions are all rational functions in $\sqrt{x}$ and can be obtained by consecutive application of the B\"acklund transformations $\mathcal{S}_1$ and $\mathcal{S}_2$ to the first function of (\ref{eq:elementary}).

\section{The solutions near the infinity set}
\label{sec:infinity}
In this section, we study the behaviour of the solutions of the system (\ref{eq:PVlog-system}) near the set $\mathcal{I}$, where the vector field is infinite.
We prove that $\mathcal{I}$ is a repeller for the solutions and that each solution which comes sufficiently close to $\mathcal{I}$ at a certain point $t$ will have a pole in a neighbourhood of $t$.

\begin{lemma}\label{lemma:L2}
For every $\epsilon_1>0$ there exists a neighbourhood $U$ of $\mathcal{L}_2^p$ such that
$$
\left|
\frac{E'}{E}+\theta_1 e^t
\right|
<\epsilon_1
\quad
\text{in}\ U.
$$
\end{lemma}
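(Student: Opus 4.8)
The statement concerns the behaviour of the logarithmic derivative $E'/E$ near the line $\mathcal{L}_2^p$, where $E$ is the autonomous Hamiltonian defined in \eqref{eq:E}. The key computational input is the formula for the time derivative of $E$ recorded in Section \ref{sec:conics}, namely
$$
E'=-\theta_1 e^t\big( z(2\theta_0-(\eta+2\theta_0)y+2y(y-1)z ) +E\big).
$$
Dividing by $E$ gives
$$
\frac{E'}{E}+\theta_1 e^t
=-\theta_1 e^t\,\frac{z\big(2\theta_0-(\eta+2\theta_0)y+2y(y-1)z\big)}{E}.
$$
So the plan is to show that the fraction on the right is small in a suitable neighbourhood of $\mathcal{L}_2^p$, which reduces the lemma to understanding the behaviour of $E$ and of the numerator $z\big(2\theta_0-(\eta+2\theta_0)y+2y(y-1)z\big)$ as one approaches $\mathcal{L}_2^p$.

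First I would pass to the appropriate local coordinate chart in which $\mathcal{L}_2^p$ is described by a simple equation. From the resolution data in Appendix \ref{sec:resolution}, $\mathcal{L}_2^p$ is the proper transform of the exceptional line over the point $a_2$, and in the relevant chart it is cut out by a single coordinate vanishing. I would express both $E$ and the numerator of the fraction above in those coordinates. The crucial structural fact to extract is that $E$, viewed in the chart, has a \emph{higher-order} pole along $\mathcal{L}_2^p$ than the numerator $z\big(2\theta_0-(\eta+2\theta_0)y+2y(y-1)z\big)$ does; concretely, both $E$ (a degree-two expression in $z$ with the cubic-in-$y$ leading behaviour $y(y-1)^2z^2$) and the numerator blow up as one approaches the line, but the denominator dominates. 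Once the leading orders are identified, the quotient tends to $0$ uniformly as one approaches $\mathcal{L}_2^p$, and multiplying by the bounded factor $\theta_1 e^t$ (bounded on compact $t$-ranges, or absorbed into the neighbourhood for the limit $\Real t\to-\infty$) keeps the product controlled.

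Concretely, the argument is: given $\epsilon_1>0$, choose the neighbourhood $U$ of $\mathcal{L}_2^p$ small enough in the chart coordinates that
$$
\left|\theta_1 e^t\,\frac{z\big(2\theta_0-(\eta+2\theta_0)y+2y(y-1)z\big)}{E}\right|<\epsilon_1
\quad\text{throughout } U,
$$
which is possible precisely because the quotient vanishes to positive order along the line. Substituting the identity for $E'/E+\theta_1 e^t$ then yields the claim. The estimate is uniform over the line because $\mathcal{L}_2^p$ is compact (it is a $\mathbf{CP}^1$), so a finite cover by charts and a compactness argument upgrades the pointwise smallness to a genuine neighbourhood.

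The main obstacle I anticipate is bookkeeping rather than conceptual: one must correctly track which coordinate chart trivialises $\mathcal{L}_2^p$ and verify the exact pole orders of $E$ and of the numerator along that line, since the conclusion hinges entirely on the denominator growing strictly faster than the numerator. A secondary subtlety is ensuring the factor $e^t$ does not spoil uniformity; this is harmless on any bounded $t$-region and, for the relevant asymptotic regime $\Real t\to-\infty$, $e^t$ is in fact small, so it only helps. The only place the argument could fail is if $E$ itself vanishes somewhere on $U$, so I would check that the base points of the pencil (where $E$ could degenerate) lie on the infinity set $\mathcal{I}$ and hence can be excluded or handled by shrinking $U$ away from them.
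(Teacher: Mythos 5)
Your proposal is correct and follows essentially the same route as the paper: the paper simply writes $r=E'/E+\theta_1 e^t$ (which, by the identity for $E'$ that you quote, equals $-\theta_1 e^t\,z\bigl(2\theta_0-(\eta+2\theta_0)y+2y(y-1)z\bigr)/E$) explicitly in the two charts $(y_{31},z_{31})$ and $(y_{32},z_{32})$ covering $\mathcal{L}_2^p$ and observes that the resulting expressions carry an overall factor of the local defining equation $z_{31}$, resp.\ $y_{32}$, of the line. One small correction to your bookkeeping: in these charts the numerator $z\bigl(2\theta_0-(\eta+2\theta_0)y+2y(y-1)z\bigr)$ in fact stays \emph{bounded} on approach to $\mathcal{L}_2^p$ (it is built from $z$, $yz$ and $(yz)^2$, all of which remain finite there), while $E$ acquires a simple pole along the line, and it is precisely this mismatch of orders that makes the quotient vanish on $\mathcal{L}_2^p$ and hence small on a neighbourhood.
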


\begin{proof}
In the corresponding charts near $\mathcal{L}_2^p$ (see Section \ref{a2-blow}), the function
$$
r=\frac{E'}{E}+\theta_1 e^t
$$
is equal to:
$$
\begin{aligned}
r_{31} &= \frac{\theta_1 e^t y_{31} z_{31} (\eta y_{31}-2 (\theta_0 y_{31}-1) (y_{31} z_{31}-1))}
{\eta y_{31} (y_{31} z_{31}-1)+F y_{31}^2-\theta_0 y_{31} (y_{31} z_{31}-1)^2+y_{31}^2 z_{31}^2-2 y_{31} z_{31}+1},
\\
r_{32} &= \frac{\theta_1 e^t y_{32} z_{32} (\eta-2 (y_{32}-1) (\theta_0-z_{32}))}
{(y_{32}-1) z_{32} (\eta-(y_{32}-1) (\theta_0-z_{32}))+F}.
\end{aligned}
$$
The statement of the lemma follows immediately from these expressions, since  $\mathcal{L}_2^p$ is given by $z_{31}=0$ and $y_{32}=0$.
\end{proof}

\begin{lemma}\label{lemma:L1}
For each compact subset $K$ of
$\mathcal{L}_1^p\setminus\mathcal{L}_8^p$
there exists a neighbourhood $V$ of $K$ and a constant $C>0$ such that:
$$
\left| e^{-t}\frac{E'}{E} \right|<C
\quad
\text{in}\ V\ \text{for all}\ t.
$$
\end{lemma}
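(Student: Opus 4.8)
The key feature distinguishing this from Lemma \ref{lemma:L2} is that $e^{-t}E'/E$ carries no residual explicit dependence on $t$. Indeed, dividing the expression for $E'$ recorded above by $Ee^{t}$ gives
$$
e^{-t}\frac{E'}{E}=-\theta_1\,\frac{z\big(2\theta_0-(\eta+2\theta_0)y+2y(y-1)z\big)+E}{E},
$$
a fixed rational function of $(y,z)$. The plan is therefore to show that this function extends to a bounded holomorphic function on a neighbourhood of $K$; since neither the function nor the charts resolving $a_1$ (Section \ref{a1-blow}) depend on $t$, the resulting bound automatically holds for every $t$, as required.

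First I would rewrite the numerator $z(2\theta_0-(\eta+2\theta_0)y+2y(y-1)z)+E$ and the denominator $E$ in the affine chart(s) of Section \ref{a1-blow} covering $\mathcal{L}_1^p\setminus\mathcal{L}_8^p$, choosing coordinates so that $\mathcal{L}_1^p$ is a coordinate line $\{w=0\}$ and the excluded point $\mathcal{L}_1^p\cap\mathcal{L}_8^p$ sits at the boundary of the patch. Because $\mathcal{L}_1^p$ lies in the infinity set $\mathcal I$, both $E$ and the numerator become unbounded along $\{w=0\}$; written in the chart each equals a power $w^{-k}$ (with $k$ a nonnegative integer) times a polynomial in the chart coordinates. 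I would then track these orders in $w$ and cancel the common factor, exactly as the factorisations $r_{31},r_{32}$ were exploited in the proof of Lemma \ref{lemma:L2}.

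The main obstacle is this order bookkeeping together with the non-vanishing of the reduced denominator. I must check that the order to which the numerator blows up along $\mathcal{L}_1^p$ does not exceed that of $E$, so that after cancellation the quotient restricts to a genuine finite function on $\mathcal{L}_1^p$; and that the leading coefficient of $E$ in $w$, viewed as a function along $\mathcal{L}_1^p$, vanishes nowhere on $\mathcal{L}_1^p\setminus\mathcal{L}_8^p$, its only zero occurring precisely at the intersection $\mathcal{L}_1^p\cap\mathcal{L}_8^p$. This is exactly what forces the exclusion of $\mathcal{L}_8^p$ from the hypothesis. Granting these two facts, $e^{-t}E'/E$ is holomorphic, hence continuous, on $\mathcal{L}_1^p\setminus\mathcal{L}_8^p$; given the compact set $K$ I would then take $V$ to be a relatively compact neighbourhood of $K$ contained in the chart and disjoint from $\mathcal{L}_8^p$, and set $C:=1+\sup_{\overline V}\big|e^{-t}E'/E\big|$. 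Since the supremum is over a $t$-independent compact set and the function itself is independent of $t$, this single $C$ works uniformly in $t$, completing the proof.
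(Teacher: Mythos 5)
Your proposal takes essentially the same route as the paper: since $E$ is the autonomous Hamiltonian, $e^{-t}E'/E$ is a $t$-independent rational function of $(y,z)$, and one verifies in the ($t$-independent) charts of Section \ref{a1-blow} that it extends continuously to $\mathcal{L}_1^p\setminus\mathcal{L}_8^p$; the paper simply records the outcome of that computation, namely $e^{-t}E'/E\sim-\theta_1(y_{21}+1)/(y_{21}-1)$ in the first chart and $\theta_1(z_{22}+1)/(z_{22}-1)$ in the second, whose only pole sits over $y=1$, the projection of $\mathcal{L}_8^p$. One caveat: the second of your two verification steps, as literally stated, fails. In the chart $(y_{21},z_{21})=(y,1/z)$ the leading coefficient of $E$ along $\mathcal{L}_1^p=\{z_{21}=0\}$ is $y_{21}(y_{21}-1)^2$, which vanishes not only at $y_{21}=1$ but also at $y_{21}=0$, the point lying under $\mathcal{L}_1^p\cap\mathcal{L}_3^p$ --- a point that a compact $K\subset\mathcal{L}_1^p\setminus\mathcal{L}_8^p$ is perfectly entitled to contain. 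What rescues the argument is exactly the cancellation you anticipate: the numerator's leading coefficient is $-\theta_1\,y_{21}(y_{21}-1)(y_{21}+1)$, which shares the factors $y_{21}$ and $(y_{21}-1)$, so the reduced quotient $-\theta_1(y_{21}+1)/(y_{21}-1)$ has its only pole at $y_{21}=1$. So the condition to check is nonvanishing of the \emph{reduced} denominator (as you phrase it earlier in your plan), not of the raw leading coefficient of $E$; with that correction your argument coincides with the paper's.
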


\begin{proof}
Near $\mathcal{L}_1^p$, in the respective coordinate charts (see Section \ref{a1-blow}), we have:
$$
e^{-t}\frac{E'}{E}\sim
\begin{cases}
-\dfrac{\theta_1 (y_{21}+1)}{y_{21}-1},
\\
\dfrac{ \theta_1 (z_{22}+1)}{z_{22}-1}.
\end{cases}
$$
Since the projection of $\mathcal{L}_8^p$ to these two charts is the point on $\mathcal{L}_1^p$ given by coordinates $y_{21}=1$ and $z_{22}=0$, the statement is proved.
\end{proof}

The modulus $|d|$ of the function $d$ from the next lemma will serve as a measure for the distance from the infinity set $\mathcal{I}$.
Throughout the paper, we denote by $J_{n}$ the Jacobian of the coordinate change from $(y,z)$ to $(y_{n},z_{n})$:
$$
J_{n}=\frac{\partial y_{n}}{\partial y}\frac{\partial z_{n}}{\partial z}-\frac{\partial y_{n}}{\partial z}\frac{\partial z_{n}}{\partial y}.
$$

\begin{lemma}
There exists a continuous complex valued function $d$ on a neighbourhood of the infinity set $\mathcal{I}$ in the Okamoto space, such that:
$$
d=
\begin{cases}
\frac1E, & \text{in a neighbourhood of}\ \ \mathcal{I}\setminus(\mathcal{L}_3^p\cup\mathcal{L}_4^p\cup\mathcal{L}_8^p\cup\mathcal{L}_9^p),
\\
-J_{82}, & \text{in a neighbourhood of}\ \ \mathcal{L}_3^p\setminus\mathcal{L}_1^p,
\\
-J_{102}, & \text{in a neighbourhood of}\ \ (\mathcal{L}_4^p\cup\mathcal{L}_8^p)\setminus\mathcal{L}_1^p,
\\
-J_{112}, & \text{in a neighbourhood of}\ \ \mathcal{L}_9^p\setminus\mathcal{L}_8^p.
\end{cases}
$$
\end{lemma}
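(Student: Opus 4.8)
The plan is to treat the four formulas as local pieces of a single function and to verify that they glue. First I would check that each formula is holomorphic on a full two–dimensional neighbourhood of the corresponding piece of $\mathcal{I}$: since $E$ is a polynomial in $(y,z)$ that tends to $\infty$ along $\mathcal{L}_1^p\cup\mathcal{L}_2^p$ away from the other lines, its reciprocal $1/E$ extends holomorphically there with value $0$ on the two lines; and the Jacobians $J_{82}$, $J_{102}$, $J_{112}$ are explicit rational functions of the chart coordinates (obtained from the resolutions in Appendix \ref{sec:resolution}) that are holomorphic on neighbourhoods of $\mathcal{L}_3^p\setminus\mathcal{L}_1^p$, $(\mathcal{L}_4^p\cup\mathcal{L}_8^p)\setminus\mathcal{L}_1^p$, and $\mathcal{L}_9^p\setminus\mathcal{L}_8^p$ respectively. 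Shrinking these four neighbourhoods if necessary, their union is a neighbourhood of $\mathcal{I}=\bigcup_{j=1}^4\mathcal{L}_j^p\cup\mathcal{L}_8^p\cup\mathcal{L}_9^p$.

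The heart of the argument is compatibility of the pieces on overlaps, and here the intersection pattern of $\mathcal{I}$ (encoded in the $D_5^{(1)}$ diagram of Figure \ref{fig:dynkin}) does most of the bookkeeping. Because $\mathcal{L}_3^p$ meets only $\mathcal{L}_1^p$, $\mathcal{L}_9^p$ meets only $\mathcal{L}_8^p$, and $\mathcal{L}_2^p$ meets only $\mathcal{L}_1^p$, one sees that for sufficiently small neighbourhoods exactly three pairs of regions can overlap, and the identities to be checked are
$$
\frac1E=-J_{82}, \qquad \frac1E=-J_{102}, \qquad J_{102}=J_{112},
$$
holding near $\mathcal{L}_3^p$ (close to $\mathcal{L}_1^p$), near $\mathcal{L}_8^p$ (close to $\mathcal{L}_1^p$), and near $\mathcal{L}_9^p$ (close to $\mathcal{L}_8^p$) respectively. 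All other pairs of target lines are disjoint, so no further matching is needed.

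Each identity I would verify by a direct computation in the charts of Appendix \ref{sec:resolution}, substituting the coordinate change $(y,z)\mapsto(y_{n},z_{n})$ into $E$ and comparing with the corresponding Jacobian. The conceptual reason the identities must hold is that they all express a single statement about the meromorphic two-form $\omega=E^{-1}\,dy\wedge dz$: since $dy_n\wedge dz_n=J_n\,dy\wedge dz$, the equality $1/E=-J_{n}$ is exactly $\omega=-dy_{n}\wedge dz_{n}$, i.e. that $\omega$ extends across the exceptional lines as a nowhere-vanishing holomorphic form, and $d=\omega/(dy\wedge dz)$ is the coefficient of this global form relative to the pulled-back standard form. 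In particular $d$ vanishes precisely on $\mathcal{I}$, which is what makes $|d|$ a measure of the distance to the infinity set. Once the three identities are established as equalities of rational functions on the respective overlaps, the four pieces agree wherever two are defined, and gluing produces a single continuous (indeed holomorphic) function $d$ on a neighbourhood of $\mathcal{I}$.

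The step I expect to be the main obstacle is the explicit verification of $1/E=-J_{82}$ and $1/E=-J_{102}$: it requires composing the several blow-up (and blow-down) substitutions, expressing the polynomial $E$ in each chart, and confirming that its pole along the exceptional line cancels the zero of the Jacobian with exactly the right constant and sign. Tracking which chart each line is visible in, and the bookkeeping of signs through the tower of resolutions, is the bulk of the work; by contrast $J_{102}=J_{112}$ is comparatively direct once the transition map between the chart after the tenth and the chart after the eleventh blow-up is recorded.
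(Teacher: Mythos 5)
Your overall strategy --- read off the overlap pattern from the intersection diagram, reduce to three matching conditions ($1/E$ vs.\ $-J_{82}$, $1/E$ vs.\ $-J_{102}$, $-J_{102}$ vs.\ $-J_{112}$), and verify each in the explicit charts --- is exactly the skeleton of the paper's proof, and your identification of which pairs of regions can meet is correct. The gap is in the nature of the matching conditions: you assert that they hold as \emph{exact identities of rational functions} on the overlaps and then glue by literal agreement. They do not hold exactly, and they cannot, so the final gluing step as you describe it fails. Concretely, in the $(y_{82},z_{82})$ chart one has $J_{82}=-z_{82}$ while
$E=z_{82}^{-1}+\eta\theta_0+(\eta-2\theta_0)y_{82}-2y_{82}^2+O(z_{82})$, so $EJ_{82}=-1+O(z_{82})$ and the identity $1/E=-J_{82}$ is only an asymptotic statement as $z_{82}\to0$. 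Worse, in the $(y_{102},z_{102})$ chart the paper computes $EJ_{102}\sim-1-\theta_1e^t/y_{102}$, so $1/E$ and $-J_{102}$ agree (even asymptotically) only where $y_{102}$ is large, i.e.\ near $\mathcal{L}_1^p$, which is precisely where the overlap with the $1/E$ region sits; and $J_{112}/J_{102}=1+(1+\eta)\theta_1 e^t/y_{112}$ is an \emph{exact} formula showing the two Jacobians are never equal on the overlap --- their ratio merely tends to $1$ as $y_{112}\to\infty$, i.e.\ as one approaches $\mathcal{L}_8^p$. Your heuristic that all three conditions express the extension of a single two-form $E^{-1}dy\wedge dz$ is a good guiding picture, but it is exactly this picture that would force exact equalities, and the computations show it is only true to leading order near $\mathcal{I}$.

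What the paper actually does (and what you need to do) is establish precisely these three asymptotic relations --- $EJ_{82}\sim-1$ as $z_{82}\to0$, $EJ_{102}\sim-1-\theta_1e^t/y_{102}$ as $z_{102}\to0$, and $J_{112}/J_{102}=1+(1+\eta)\theta_1e^t/y_{112}$ --- and conclude that adjacent candidate functions have ratio tending to $1$ on the relevant overlaps as one approaches $\mathcal{I}$. A single continuous $d$ is then obtained by interpolating between the pieces (e.g.\ with a partition of unity subordinate to the four neighbourhoods), after which $d$ is only asymptotically equal to each listed formula near the corresponding stratum; this weaker conclusion is all that is used in Lemmas \ref{lemma:L3}--\ref{lemma:L4} and Theorem \ref{th:estimates}, where $|d|$ serves only as a comparison scale. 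So you should replace ``establish the identities as equalities of rational functions and glue'' by ``establish ratio-$\to1$ compatibility and interpolate''; as written, the step ``the four pieces agree wherever two are defined'' is false.
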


\begin{proof}


From Section \ref{a7-blow}, the line $\mathcal{L}_{3}^p$ is given by $z_{82}=0$ in the $(y_{82},z_{82})$ chart.
Thus as we approach $\mathcal{L}_{3}^p$, i.e., as $z_{82}\to0$, we have
$$
EJ_{82}\sim-1.
$$


From Section \ref{a9-blow}, the line $\mathcal{L}_{8}^p$ is given by $z_{102}=0$ in the $(y_{102},z_{102})$ chart.
Thus as we approach $\mathcal{L}_{8}^p$, i.e., as $z_{102}\to0$, we have
$$
EJ_{102}\sim-1-\frac{\theta_1 e^t}{y_{102}}.
$$
In the same chart, the line $\mathcal{L}_{4}^p$ is given by $y_{102}=-\theta_1e^t$.

From Section \ref{a10-blow}, the line $\mathcal{L}_9^p$ is given by $z_{112}=0$ in the  $(y_{112},z_{112})$ chart.
We have:
$$
\frac{J_{112}}{J_{102}}=1+\frac{(1+\eta)\theta_1 e^t}{y_{112}}.
$$
 \end{proof}

 \begin{lemma}[Behaviour near $\mathcal{L}_3^p\setminus\mathcal{L}_1^p$]\label{lemma:L3}
 If a solution at a complex time $t$ is sufficiently close to  $\mathcal{L}_3^p\setminus\mathcal{L}_1^p$, then there exists unique $\tau\in\mathbf{C}$ such that $(y(\tau),z(\tau))$ belongs to the line $\mathcal{L}_7$.
  In other words, the solution has a pole at $t=\tau$.
  
Moreover
$|t-\tau|=O(|d(t)||y_{82}(t)|)$
 for sufficiently small $d(t)$ and bounded $|y_{82}|$.
 
 For large $R_3>0$, consider the set $\{ t\in\mathbf{C} \mid |y_{82}|\le R_3\}$.
 Its connected component containing $\tau$ is an approximate disk $D_3$ with centre $\tau$ and radius $|d(\tau)|R_3$,
 and $t\mapsto y_{82}(t)$ is a complex analytic diffeomorphism from that approximate disk onto $\{y\in\mathbf{C}\mid|y|\le R_3\}$.
 \end{lemma}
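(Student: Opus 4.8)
The plan is to convert the statement into a quantitative inversion of the map $t\mapsto y_{82}(t)$ in the chart of Section~\ref{a7-blow}, in which $\mathcal{L}_3^p=\{z_{82}=0\}$ and the pole line $\mathcal{L}_7$ is the locus $\{y_{82}=0\}$. By the preceding lemma, $d=-J_{82}$ satisfies $EJ_{82}\sim-1$ near $\mathcal{L}_3^p$, so $d\sim 1/E$, and the hypothesis that $(y(t),z(t))$ is close to $\mathcal{L}_3^p\setminus\mathcal{L}_1^p$ means precisely that $|d(t)|$ is small while $|y_{82}(t)|$ is bounded (keeping $|y_{82}|$ bounded keeps the solution away from the intersection $\mathcal{L}_1^p\cap\mathcal{L}_3^p$). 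Geometrically, the leaf runs almost parallel to $\mathcal{L}_3^p$, sweeping rapidly in the $y_{82}$-direction, and the crossing of $\mathcal{L}_7$ occurs exactly where this sweep meets $y_{82}=0$; the content of the lemma is to make \emph{rapidly} and \emph{exactly once} quantitative.

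First I would establish a two-timescale estimate from the explicit vector field of Section~\ref{a7-blow}. I expect to show that, uniformly for $|y_{82}|$ bounded as $z_{82}\to0$,
$$
d\,\frac{dy_{82}}{dt}\longrightarrow c_0\neq0
\qquad\text{and}\qquad
\frac{dd}{dt}=O(|d|),
$$
the first asserting that $y_{82}$ is a fast clock with speed $\sim 1/d$, and the second (which follows from $\dot d\sim-(E'/E)\,d$ together with $E'/E=O(1)$ near $\mathcal{L}_3^p$) asserting that $d$ is essentially frozen. Over a time interval of length $O(|d|)$ the distance then changes only by $O(|d|^2)$, so $d\equiv d(\tau)\bigl(1+o(1)\bigr)$ throughout, and integrating the first relation yields $y_{82}(t)=c_0\,(t-\tau)/d(\tau)\,\bigl(1+o(1)\bigr)$, where the normalisation of $d$ makes $|c_0|=1$.

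From this normal form the remaining claims follow. Setting $y_{82}(\tau)=0$ places the leaf on $\mathcal{L}_7$, which by the movable-singularity analysis of Section~\ref{sec:poles} is a pole of the solution; the relation $y_{82}\approx c_0(t-\tau)/d(\tau)$ gives $|t-\tau|=O(|d(t)|\,|y_{82}(t)|)$, and on the level set $|y_{82}|=R_3$ it gives $|t-\tau|\approx|d(\tau)|R_3$, so the connected component $D_3$ of $\{|y_{82}|\le R_3\}$ containing $\tau$ is an approximate disk of that radius. To promote local invertibility to a biholomorphism of $D_3$ onto the round disk $\{|y|\le R_3\}$, I would compare $y_{82}(t)$ with its affine model $c_0(t-\tau)/d(\tau)$ on $\partial D_3$ by an argument-principle (Rouch\'e) estimate: the $o(1)$ error preserves winding number one, so $y_{82}$ is proper of degree one with no interior critical points, hence a biholomorphism, and $\tau$ is its unique preimage of $0$.

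The main obstacle is the uniform two-timescale estimate. Extracting the leading constant $c_0$ and controlling the $e^t$-terms of the vector field (which enter the $y_{82}$-equation but cancel out of $EJ_{82}\sim-1$) requires a careful leading-order expansion near $z_{82}=0$ that is uniform for bounded $|y_{82}|$ and for the admissible range of $\Real t$; the degeneration of these estimates as $y_{82}\to\infty$ is exactly what forces the exclusion of a neighbourhood of $\mathcal{L}_1^p$. Once this estimate is secured, the inversion and the disk and biholomorphism statements are routine.
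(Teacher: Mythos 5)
Your proposal is correct and follows essentially the same route as the paper: the same chart $(y_{82},z_{82})$ from Section \ref{a7-blow}, the same two-timescale structure ($y_{82}'\sim 1/z_{82}$ as the fast variable while $d=-J_{82}=z_{82}$ is essentially frozen because $J_{82}'/J_{82}=\eta-2\theta_0-\theta_1e^t-4y_{82}+O(J_{82})$ is bounded), and the same affine normal form $y_{82}(t)\sim (t-\tau)/d(\tau)$ from which the pole, the estimate $|t-\tau|=O(|d||y_{82}|)$, and the approximate disk $D_3$ all follow. The only cosmetic difference is that you make the final diffeomorphism step explicit via a Rouch\'e/argument-principle comparison with the affine model, which the paper asserts directly.
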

 
 \begin{proof}
For the study of the solutions near $\mathcal{L}_3^p\setminus\mathcal{L}_1^p$, we use coordinates $(y_{82},z_{82})$, see Section \ref{a7-blow}.
In this chart, the line $\mathcal{L}_3^p\setminus\mathcal{L}_1^p$ is given by the equation $z_{82}=0$ and parametrised by $y_{82}\in\mathbf{C}$.
Moreover, $\mathcal{L}_7^p$ is given by $y_{82}=0$ and parametrised by $z_{82}\in\mathbf{C}$.
 
Asymptotically, for $z_{82}\to0$ and bounded $y_{82}$, $e^{-t}$, we have:
\begin{subequations}
\begin{align}
y_{82}' &\sim\frac1{z_{82}},\label{eq:y82'}
\\
z_{82}' &\sim4y_{82}^3z_{82}^2,\label{eq:z82'}
\\
J_{82} &=-z_{82},\label{eq:J82}
\\
\frac{J'_{82}}{J_{82}} & = \eta-2\theta_0-\theta_1e^t-4y_{82}+O(z_{82})=\eta-2\theta_0-\theta_1e^t-4y_{82}+O(J_{82}),\label{eq:J'/J82}
\\
EJ_{82} & \sim -1.\label{eq:EJ82}
\end{align}
\end{subequations}

Integrating (\ref{eq:J'/J82}) from $\tau$ to $t$, we get
\begin{gather*}
J_{82}(t)=J_{82}(\tau)e^{K(t-\tau)}e^{-\theta_1(e^t-e^{\tau})}(1+o(1)),
\\
K=\eta-2\theta_0-4y_{82}(\tilde{\tau}),
\end{gather*}
where $\tilde{\tau}$ is on the integration path.

Because of (\ref{eq:z82'}), $z_{82}$ is approximately equal to a small constant, and from (\ref{eq:y82'}) follows that:
$$
y_{82}\sim y_{82}(\tau)+\frac{t-\tau}{z_{82}}.
$$
Thus, if $t$ runs over an approximate disk $D$ centered at $\tau$ with radius $|z_{82}|R$, then $y_{82}$ fills and approximate disk centered at $y_{82}(\tau)$ with radius $R$.
Therefore, if $z_{82}(\tau)\ll\tau$, the solution has the following properties for $t\in D$:
$$
\frac{z_{82}(t)}{z_{82}(\tau)}\sim1,
 $$ 
 and $y_{82}$ is a complex analytic diffeomorphism from $D$ onto an approximate disk with centre $y_{82}(\tau)$ and radius $R$.
 If $R$ is sufficiently large, we will have $0\in y_{82}(D)$, i.e.~the solution of the Painlev\'e equation will have a pole at a unique point in $D$.
 
 Now, it is possible to take $\tau$ to be the pole point.
 For $|t-\tau|\ll|\tau|$, we have:
 \begin{gather*}
 \frac{d(t)}{d(\tau)}\sim1,
 \quad\text{i.e.}\quad
 \frac{z_{82}(t)}{d(\tau)}\sim-\frac{J_{82}(t)}{d(\tau)}\sim1,
 \\
 y_{82}(t)\sim\frac{t-\tau}{z_{82}}\sim\frac{t-\tau}{d(\tau)}.
 \end{gather*}
 Let $R_3$ be a large positive real number.
 Then the equation $|y_{82}(t)|=R_3$ corresponds to $|t-\tau|\sim|d(\tau)|R_3$, which is still small compared to $|\tau|$ if $|d(\tau)|$ is sufficiently small.
 Denote by $D_3$ the connected component of the set of all $t\in\mathbf{C}$ such that $\{t\mid |y_{82}(t)|\le R_3\}$ is an approximate disk with centre $\tau$ and radius $2|d(\tau)|R_3$.
More precisely, $y_{82}$ is a complex analytic diffeomorphism from $D_3$ onto $\{y\in\mathbf{C}\mid|y|\le R_3\}$, and
$$
\frac{d(t)}{d(\tau)}\sim1
\quad\text{for all}\quad
t\in D_3.
$$
We have $E(t)J_{82}(t)\sim-1$ when $|z_{82}|\ll1$.
Thus $E(t)J_{82}(t)\sim-1$ for the annular disk $t\in D_3\setminus D_3'$, where $D_3'$ is a disk centered at $\tau$ with small radius compared to radius of $D_3$.
\end{proof}

\begin{lemma}[Behaviour near $\mathcal{L}_9^p\setminus\mathcal{L}_8^p$]\label{lemma:L9}
 If a solution at a complex time $t$ is sufficiently close to  $\mathcal{L}_9^p\setminus\mathcal{L}_8^p$, then there exists unique $\tau\in\mathbf{C}$ such that $(y(\tau),z(\tau))$ belongs to the line $\mathcal{L}_{10}$.
  In other words, the solution has a pole at $t=\tau$.
  
Moreover
$|t-\tau|=O(|d(t)||y_{112}(t)|)$
 for sufficiently small $d(t)$ and bounded $|y_{112}|$.
 
 For large $R_9>0$, consider the set $\{ t\in\mathbf{C} \mid |y_{112}|\le R_9\}$.
 Its connected component containing $\tau$ is an approximate disk $D_9$ with centre $\tau$ and radius $|d(\tau)|R_9$,
 and $t\mapsto y_{112}(t)$ is a complex analytic diffeomorphism from that approximate disk onto $\{y\in\mathbf{C}\mid|y|\le R_9\}$.
 \end{lemma}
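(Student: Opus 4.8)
The plan is to follow the proof of Lemma~\ref{lemma:L3} line by line, with the chart $(y_{82},z_{82})$ replaced by $(y_{112},z_{112})$ from Section~\ref{a10-blow}, the line $\mathcal{L}_3^p$ replaced by $\mathcal{L}_9^p$, and the line $\mathcal{L}_7$ replaced by $\mathcal{L}_{10}$. In the $(y_{112},z_{112})$ chart the set $\mathcal{L}_9^p\setminus\mathcal{L}_8^p$ is cut out by $z_{112}=0$ and parametrised by $y_{112}\in\mathbf{C}$, while $\mathcal{L}_{10}^p$ is given by $y_{112}=0$ and parametrised by $z_{112}$. Reaching $\mathcal{L}_{10}^p$ is precisely the event, computed in Section~\ref{sec:poles}, that $y$ takes the value unity and $z$ acquires a double pole, so a unique intersection with $\mathcal{L}_{10}$ is the same statement as a unique pole.

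First I would record the leading behaviour of the vector field and of the Jacobian in this chart as $z_{112}\to0$, with $y_{112}$ and $e^t$ bounded, by substituting into the system and transformation formulas of Section~\ref{a10-blow}. I expect these to take the same form as estimates (\ref{eq:y82'})--(\ref{eq:J'/J82}): namely $y_{112}'\sim1/z_{112}$, $z_{112}'=O(z_{112}^2)$, $J_{112}\sim-z_{112}$, and $J'_{112}/J_{112}$ affine in $y_{112}$ up to an $O(z_{112})$ remainder, with the coefficients now carrying bounded factors of $\theta_1e^t$. In place of the relation (\ref{eq:EJ82}) I would use the identity $d=-J_{112}$ supplied by the lemma defining $d$, which together with $J_{112}\sim-z_{112}$ gives $z_{112}\sim d$.

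With these estimates the rest is mechanical. Integrating $J'_{112}/J_{112}$ from $\tau$ to $t$ controls the variation of $J_{112}$, hence of $z_{112}$; since $z_{112}'=O(z_{112}^2)$, the coordinate $z_{112}$ stays approximately constant over a disk $|t-\tau|\le|z_{112}|R_9$, and $y_{112}\sim y_{112}(\tau)+(t-\tau)/z_{112}$ is, to leading order, affine in $t$. Hence $y_{112}$ carries that disk onto an approximate disk of radius $R_9$ about $y_{112}(\tau)$; for $R_9$ large the image contains the origin, giving a unique $\tau$ with $y_{112}(\tau)=0$, i.e.\ a unique intersection with $\mathcal{L}_{10}$ and the asserted pole. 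Taking $\tau$ to be this point, $z_{112}\sim d(\tau)$ gives $y_{112}(t)\sim(t-\tau)/d(\tau)$ and $d(t)/d(\tau)\sim1$ for $|t-\tau|\ll|\tau|$; then $|y_{112}(t)|=R_9$ corresponds to $|t-\tau|\sim|d(\tau)|R_9$, the component $D_9$ is an approximate disk of that radius, $y_{112}$ is a complex analytic diffeomorphism of $D_9$ onto $\{|y|\le R_9\}$, and $|t-\tau|=O(|d(t)||y_{112}(t)|)$, establishing all three claims.

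The main obstacle is the first step. Unlike the static lines $\mathcal{L}_3$ and $\mathcal{L}_7$, the lines $\mathcal{L}_9(t)$ and $\mathcal{L}_{10}(t)$ move with $t$, so the chart and the vector field in Section~\ref{a10-blow} carry additional $t$-dependent terms through powers of $\theta_1e^t$. I would need to confirm that these enter only through bounded prefactors and remain subdominant, of order $O(z_{112})$, so that the normalised leading estimates $y_{112}'\sim1/z_{112}$ and $J_{112}\sim-z_{112}$ survive and with them the picture ``$z_{112}$ nearly constant, $y_{112}$ affine in $t$'' that drives the approximate-disk argument. I would also need the $o(1)$ errors in the integration of $J'_{112}/J_{112}$ to be uniform, in order to upgrade the approximately affine map to a genuine biholomorphism.
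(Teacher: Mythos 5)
Your proposal matches the paper's proof, which is indeed a line-by-line transcription of the argument for Lemma \ref{lemma:L3} into the $(y_{112},z_{112})$ chart of Section \ref{a10-blow}, and the $t$-dependent terms you worry about do enter only through bounded nonzero prefactors of $\theta_1 e^{t}$. When you carry out the substitution a few of your anticipated estimates come out slightly differently without changing the structure of the argument: one finds $J_{112}\sim-\theta_1 e^{t}z_{112}$ rather than $-z_{112}$ (so $z_{112}\sim d/(\theta_1 e^{t})$ and $y_{112}(t)\sim(t-\tau)\,\theta_1 e^{t}/d(\tau)$), one finds $z_{112}'=O(z_{112})$ rather than $O(z_{112}^{2})$ (still negligible over a disk of radius $|z_{112}|R_9$), and the correct analogue of the matching relation $EJ_{82}\sim-1$ is not the bare identity $d=-J_{112}$ but the ratio $J_{112}/J_{102}=1+(1+\eta)\theta_1 e^{t}/y_{112}\sim1$ on the annulus $D_9\setminus D_9'$, which is what glues this chart's normalisation of $d$ to the one used near $\mathcal{L}_8^p$ in Lemma \ref{lemma:L8}.
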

 
 \begin{proof}
For the study of the solutions near $\mathcal{L}_9^p\setminus\mathcal{L}_8^p$, we use coordinates $(y_{112},z_{112})$, see Section \ref{a10-blow}.
In this chart, the line $\mathcal{L}_9^p\setminus\mathcal{L}_8^p$ is given by the equation $z_{112}=0$ and parametrised by $y_{112}\in\mathbf{C}$.
Moreover, $\mathcal{L}_{10}^p$ is given by $y_{112}=0$ and parametrised by $z_{112}\in\mathbf{C}$.
 
Asymptotically, for $z_{112}\to0$ and bounded $y_{112}$, $e^{-t}$, we have:
\begin{subequations}
\begin{align}
y_{112}' &\sim\frac{1}{z_{112}},\label{eq:y112'}
\\
z_{112}' &\sim-3(1+\eta)z_{112}-\frac{2}{\theta_1 e^t}y_{112}z_{112},\label{eq:z112'}
\\
J_{112} &\sim-\theta_1 e^t z_{112},\label{eq:J112}
\\
\frac{J'_{112}}{J_{112}} & = -2-\eta+\theta_1e^t+O(z_{112})=-2-\eta+\theta_1e^t+O(J_{112}),\label{eq:J'/J112}
\\
\frac{J_{112}}{J_{102}} & = 1+\frac{(1+\eta)\theta_1 e^t}{y_{112}}.\label{eq:J112/J102}
\end{align}
\end{subequations}

Integrating (\ref{eq:J'/J112}) from $\tau$ to $t$, we get
\begin{gather*}
J_{112}(t)=J_{112}(\tau)e^{-(2+\eta)(t-\tau)}e^{\theta_1(e^t-e^{\tau})}(1+o(1)).
\end{gather*}

Because of (\ref{eq:J112}), $z_{112}$ is approximately equal to a small constant, and from (\ref{eq:y112'}) follows that:
$$
y_{112}\sim y_{112}(\tau)+\frac{t-\tau}{z_{112}}.
$$
Thus, if $t$ runs over an approximate disk $D$ centered at $\tau$ with radius $|z_{112}|R$, then $y_{112}$ fills and approximate disk centered at $y_{112}(\tau)$ with radius $R$.
Therefore, if $z_{112}(\tau)\ll\tau$, the solution has the following properties for $t\in D$:
$$
\frac{z_{112}(t)}{z_{112}(\tau)}\sim1,
 $$ 
 and $y_{112}$ is a complex analytic diffeomorphism from $D$ onto an approximate disk with centre $y_{112}(\tau)$ and radius $R$.
 If $R$ is sufficiently large, we will have $0\in y_{112}(D)$, i.e.~the solution of the Painlev\'e equation will have a pole at a unique point in $D$.
 
 Now, it is possible to take $\tau$ to be the pole point.
 For $|t-\tau|\ll|\tau|$, we have:
 \begin{gather*}
 \frac{d(t)}{d(\tau)}\sim1,
 \quad\text{i.e.}\quad
 \frac{z_{112}(t)}{d(\tau)}\sim-\frac{1}{\theta_1 e^t}\cdot\frac{J_{112}(t)}{d(\tau)}\sim\frac{1}{\theta_1 e^t},
 \\
 y_{112}(t)\sim\frac{t-\tau}{z_{112}}\sim\frac{t-\tau}{d(\tau)}\cdot \theta_1e^t.
 \end{gather*}
 Let $R_9$ be a large positive real number.
 Then the equation $|y_{112}(t)|=R_9$ corresponds to $|(t-\tau)\theta_1 e^t|\sim|d(\tau)|R_3$, which is still small compared to $|\tau|$ if $|d(\tau)|$ is sufficiently small.
 Denote by $D_9$ the connected component of the set of all $t\in\mathbf{C}$ such that $\{t\mid |y_{112}(t)|\le R_9\}$ is an approximate disk with centre $\tau$ and radius $2|d(\tau)|R_3$.
More precisely, $y_{112}$ is a complex analytic diffeomorphism from $D_9$ onto $\{y\in\mathbf{C}\mid|y|\le R_9\}$, and
$$
\frac{d(t)}{d(\tau)}\sim1
\quad\text{for all}\quad
t\in D_9.
$$
From (\ref{eq:J112/J102}), we have:
$$
\frac{J_{112}}{J_{102}} \sim 1
\quad
\text{when}
\quad
1
\gg
\left|\frac{(1+\eta)\theta_1 e^t}{y_{112}(t)}\right|
\sim
\left| \frac{(1+\eta) d(\tau)}{t-\tau}   \right|,
$$
that is, when
$$
|t-\tau|\gg|d(\tau)|.
$$
Since $R_9\gg1$, we have
$$
|t-\tau|\sim|d(\tau)|R_9\gg|d(\tau)|.
$$
Thus $\dfrac{J_{112}}{J_{102}}\sim1$ for the annular disk $t\in D_9\setminus D_9'$, where $D_9'$ is a disk centered at $\tau$ with small radius compared to the radius of $D_9$.
\end{proof}

\begin{lemma}[Behaviour near $\mathcal{L}_8^p\setminus\mathcal{L}_1^p$]\label{lemma:L8}
For large finite $R_8>0$, consider the set of all $t\in\mathbf{C}$, such that the solution at complex time $t$ is close to 
$\mathcal{L}_8^p\setminus\mathcal{L}_1^p$, with $|y_{102}(t)|\le R_8$, but not close to $\mathcal{L}_9^p$.
Then this set is the complement of $D_9$ in an approximate disk $D_8$ with centre $\tau$ and radius $\sim\sqrt{d(\tau)}R_8$.
More precisely, $t\mapsto y_{102}$ defines a covering from the annular domain $D_8\setminus D_9$ onto the complement in
$\{ z\in\mathbf{C}\mid |z|\le R_8\}$ of an approximate disk with centre at the origin and small radius $\sim |d(\tau)|R_9$, where
$y_{102}(t)\sim d(\tau)^{-1/2}(t-\tau)$.
 \end{lemma}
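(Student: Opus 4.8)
The plan is to follow the template established in the proofs of Lemmas~\ref{lemma:L3} and~\ref{lemma:L9}, working in the chart $(y_{102},z_{102})$ in which $\mathcal{L}_8^p\setminus\mathcal{L}_1^p$ is the line $z_{102}=0$, the line $\mathcal{L}_4^p$ is $y_{102}=-\theta_1e^t$, and $\mathcal{L}_9^p$ sits at the inner boundary of the region under consideration. First I would record the leading asymptotics of the vector field as $z_{102}\to0$ with $y_{102}$ and $e^{-t}$ bounded, together with the behaviour of $J_{102}$ and of $E$. Since $E'/E$ is controlled by Lemmas~\ref{lemma:L1} and~\ref{lemma:L2}, $E$ may be treated as slowly varying along the orbit, so the relation $E J_{102}\sim -1-\theta_1e^t/y_{102}$ determines $J_{102}=-d$ up to that slowly varying factor.

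The decisive structural difference from Lemmas~\ref{lemma:L3} and~\ref{lemma:L9} is the order of vanishing of $d$ at the exceptional line. Near $\mathcal{L}_3^p$ one has $J_{82}=-z_{82}$, a simple zero, so the transverse coordinate is comparable to $d$ itself; by contrast, the pole line $\mathcal{L}_{10}$ relevant here is reached only after the two successive blow-ups at $a_9$ and $a_{10}$, and, as the expansions in Section~\ref{sec:poles} show, $z$ has a \emph{double} pole there. Consequently $E$ has a double pole in the transverse variable $z_{102}$ and $J_{102}=-d$ vanishes to second order, $d\sim c\,z_{102}^2$. I therefore expect $z_{102}\sim\sqrt{d(\tau)}$ to be approximately constant along the orbit while $y_{102}'\sim 1/z_{102}$, so that $y_{102}\sim d(\tau)^{-1/2}(t-\tau)$. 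Integrating this and imposing $|y_{102}|\le R_8$ then yields the approximate disk $D_8$ of radius $\sim\sqrt{d(\tau)}R_8$, in contrast to the radius $\sim|d(\tau)|R_9$ of Lemma~\ref{lemma:L9}; in particular $D_9\subset D_8$.

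It then remains to identify the image of $t\mapsto y_{102}$ in the body of the annulus and to excise the pole neighbourhood. The square-root relation between $z_{102}$ and $d$ carries a sign ambiguity, which is the analytic manifestation of the second-order contact of the leaves with $\mathcal{L}_8^p$ produced by the double blow-up; this is what makes $t\mapsto y_{102}$ a covering of an annulus rather than a diffeomorphism of a disk. The inner boundary is governed by the transition to the chart $(y_{112},z_{112})$ of Lemma~\ref{lemma:L9}, where the formula $y_{102}\sim d(\tau)^{-1/2}(t-\tau)$ ceases to hold: using $J_{112}/J_{102}=1+(1+\eta)\theta_1e^t/y_{112}$ I would show that the region where the solution leaves a neighbourhood of $\mathcal{L}_8^p$ and enters one of $\mathcal{L}_9^p$ is precisely $D_9$, whose $y_{102}$-image is an approximate disk about the origin of radius $\sim|d(\tau)|R_9$. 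Removing it leaves $D_8\setminus D_9$ covering the complement of that disk in $\{\,|y_{102}|\le R_8\,\}$.

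The main obstacle I anticipate is making the square-root scaling precise: establishing rigorously that $d=-J_{102}$ vanishes to exactly second order at $\mathcal{L}_8^p$ (so that the factor $\sqrt{d(\tau)}$, rather than $d(\tau)$, appears), controlling the resulting branch structure well enough to conclude that the map is genuinely a covering, and then gluing the $(y_{102},z_{102})$ estimates to the $(y_{112},z_{112})$ estimates of Lemma~\ref{lemma:L9} along the common annulus. Reconciling the two branches of the distance function $d$ across the $\mathcal{L}_8^p/\mathcal{L}_9^p$ interface, and tracking how $d$ itself varies along the orbit---unlike in the earlier lemmas, where $d(t)/d(\tau)\sim1$---is where the bookkeeping is most delicate.
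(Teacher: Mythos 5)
Your overall strategy is the paper's: work in the $(y_{102},z_{102})$ chart, use the near-constancy of $J_{102}=-d$ along the orbit, extract a square root from the second-order vanishing of $J_{102}$ at $z_{102}=0$, and glue to the $D_9$ analysis at the inner boundary. The step that fails is your scaling claim that $d\sim c\,z_{102}^2$ with $z_{102}\sim\sqrt{d(\tau)}$ \emph{approximately constant} along the orbit. You are right that $J_{102}$ vanishes to second order in $z_{102}$, but the explicit formula is $J_{102}=-y_{102}(\theta_1e^t+y_{102})z_{102}^2$, so the coefficient of $z_{102}^2$ is $\sim y_{102}^2$, which sweeps from near $0$ at the $\mathcal{L}_9^p$ end (where $|z_{102}|\sim 1/R_9$) up to $\sim R_8^2$ at the outer boundary (where $|z_{102}|\sim\sqrt{|d(\tau)|}/R_8$); thus $z_{102}$ varies across the annulus by the large factor $R_8/(R_9\sqrt{|d(\tau)|})$ and cannot be treated as constant. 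The paper keeps the coefficient: from $J_{102}\sim-d(\tau)$ it deduces $z_{102}^{-1}\sim\bigl(y_{102}(\theta_1e^t+y_{102})/d(\tau)\bigr)^{1/2}$ and only then feeds this into $y_{102}'\sim 2/z_{102}$ to reach $y_{102}(t)\sim d(\tau)^{-1/2}(t-t_0)$ and the radius $\sqrt{d(\tau)}R_8$. Your route happens to land on the same final formula, but the intermediate assertion contradicts the Jacobian formula and would not survive the first attempt to make the scaling precise.

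Two smaller points. First, your anticipated difficulty that ``$d$ itself varies along the orbit, unlike in the earlier lemmas'' is not how the paper proceeds: exactly as in Lemmas \ref{lemma:L3} and \ref{lemma:L9}, it integrates the bounded quantity $J_{102}'/J_{102}\sim(1+\eta)\theta_1e^t/y_{102}+\theta_1e^t+2y_{102}$ to conclude $J_{102}(t_1)/J_{102}(t_0)\sim1$, so $d$ is again essentially constant throughout $D_8\setminus D_9$. Second, the gluing to Lemma \ref{lemma:L9} that you only gesture at is done concretely by anchoring the integration at a point $t_0$ on the boundary of $D_9$, where $|y_{112}(t_0)|=R_9$ and the relation $z_{102}=1/\bigl(y_{112}+(1+\eta)\theta_1 e^t\bigr)$ give the seed value $|z_{102}(t_0)|\sim 1/R_9\ll1$; this is also where the radius $\sim|d(\tau)|R_9$ of the excised inner disk comes from.
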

 
 \begin{proof}
 Set $\mathcal{L}_8^p\setminus\mathcal{L}_1^p$ is visible in the chart $(y_{102},z_{102})$, where it is given by the equation $z_{102}=0$ and parametrized by $y_{102}\in\mathbf{C}$, see Section \ref{a9-blow}.
 In that chart, the line $\mathcal{L}_9^p$ (without one point) is given by the equation $y_{102}=0$ and parametrized by $z_{102}\in\mathbf{C}$.
 The line $\mathcal{L}_4^p$ (without one point) is given by the equation $y_{102}=-\theta_1 e^t$ and also parametrized by $z_{102}\in\mathbf{C}$.
 
For $z_{102}\to0$, bounded $e^t$, and $y_{102}$ bounded and bounded away from $-\theta_1 e^t$, we have:
\begin{subequations}
\begin{align}
z_{102}' &\sim-\frac{1}{y_{102}}-\frac{1}{\theta_1e^t+y_{102}}-\theta_1e^t-2y_{102},\label{eq:z102'}
\\
y_{102}' &\sim\frac{2}{z_{102}},\label{eq:y102'}
\\
J_{102} &=-y_{102}(\theta_1 e^t +y_{102}) z_{102}^2,\label{eq:J102}
\\
\frac{J'_{102}}{J_{102}} & \sim \frac{(1+\eta)\theta_1 e^t}{y_{102}}+\theta_1 e^t+2y_{102},\label{eq:J'/J102}
\\
E J_{102} & \sim 1-\frac{\theta_1 e^t}{y_{102}}.\label{eq:EJ102}
\end{align}
\end{subequations}


From (\ref{eq:J'/J102}):
\begin{gather*}
\log\frac{J_{102}(t_1)}{J_{102}(t_0)}
\sim
\theta_1(e^{t_1}-e^{t_0})+(t_1-t_0)K,
\\
K=\frac{(1+\eta)\theta_1 e^{\tilde\tau}}{y_{102}(\tilde\tau)}+2y_{102}(\tilde\tau),
\end{gather*}
where $\tilde\tau$ is on the integration path.

Therefore $J_{102}(t_1)/J_{102}(t_0)\sim1$, if for all $t$ on the segment from $t_0$ to $t_1$ we have $|t-t_0|\ll|t_0|$ and
$$
\left|\frac{\theta_1 e^t}{y_{102}(t)}\right|\ll\frac1{|t_0|},
\qquad
|y_{102}(t)|\ll\frac1{|t_0|}.
$$
We choose $t_0$ on the boundary of $D_9$ from Lemma \ref{lemma:L9}.
Then we have
$$
\frac{d(\tau)}{d(t_0)}\sim \frac{J_{112}(\tau)}{J_{102}(t_0)}\sim1
\quad\text{and}\quad
|y_{112}(t_0)|=R_9,
$$
which implies that
$$
|z_{102}|=\left|\frac{1}{y_{112}+(1+\eta)\theta_1 e^t}\right|\sim\frac1{R_9}\ll 1.
$$

Since $D_9$ is an approximate disk with centre $\tau$ and small radius $\sim|d(\tau)|R_9$, and $R_9\gg|\tau|^{-1}$, we have that
$|y_{112}(t)|\ge R_9\gg 1$ hence:
$$
|y_{102}|\ll1
\quad\text{if}\quad
t=\tau+r(t_0-\tau),
\ r\ge1,
$$
and
$$
\frac{|t-t_0|}{|t_0|}=(r-1)\left|1-\frac{\tau}{t_0}\right|\ll1
\quad\text{if}\quad
r-1\ll\frac{1}{|1-\frac{\tau}{t_0}|}.
$$

Then equation (\ref{eq:J102}) and $J_{102}\sim-d(\tau)$ yield
$$
z_{102}^{-1}
\sim
\left(\frac{y_{102}(\theta_1e^t+y_{102})}{d(\tau)}\right)^{1/2}
\sim
\left(\frac{y_{102}^2}{d(\tau)}\right)^{1/2},
$$
which in combination with (\ref{eq:z102'}) leads to
$$
\frac{d y_{102}}{dt}\sim d(\tau)^{-1/2}.
$$
Hence
$$
y_{102}(t)\sim y_{102}(t_0)+d(\tau)^{-1/2}(t-t_0),
$$
and therefore
$$
y_{102}(t) \sim d(\tau)^{-1/2}(t-t_0)
\quad\text{if}\quad
|t-t_0|\gg|y_{102}(t_0)|.
$$
For large finite $R_8>0$, the equation $|y_{102}|=R_8$ corresponds to $|t-t_0|\sim\sqrt{d(\tau)}R_8$, which is still small compared to 
$|t_0|\sim|\tau|$, therefore $|t-\tau|\le|t-t_0|+|t_0-\tau|\le|\tau|$.
This proves the statement of the lemma.
\end{proof}

\begin{lemma}[Behaviour near $\mathcal{L}_4^p$]\label{lemma:L4}
For large finite $R_4>0$, consider the set of all $t\in\mathbf{C}$, such that the solution at complex time $t$ is close to 
$\mathcal{L}_4^p$, with $|z_{91}(t)|\le R_4$, but not close to $\mathcal{L}_8^p$.
Then this set is the complement of $D_8$ in an approximate disk $D_4$ with centre $\tau$ and radius $\sim R_4/d(\tau)$.
More precisely, $t\mapsto z_{91}$ defines a covering from the annular domain $D_4\setminus D_8$ onto the complement in
$\{ z\in\mathbf{C}\mid |z|\le R_4\}$ of an approximate disk with centre at the origin and small radius $\sim |d(\tau)|/R_8$, where
$z_{91}(t)\sim d(\tau)/(t-\tau)$.
\end{lemma}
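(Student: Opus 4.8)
The plan is to treat $\mathcal{L}_4^p$ exactly as $\mathcal{L}_8^p$ was treated in Lemma~\ref{lemma:L8}, one link further along the chain of exceptional lines, and to match the estimates across the common boundary $\partial D_8$. First I would work in the chart $(y_{91},z_{91})$ of Appendix~\ref{sec:resolution}, in which $\mathcal{L}_4^p$ is cut out by $y_{91}=-\theta_1 e^t$ and parametrised by $z_{91}$, while its intersection with $\mathcal{L}_8^p$ sits at $z_{91}=0$; proximity to $\mathcal{L}_4^p$ is then measured by the small quantity $w:=y_{91}+\theta_1 e^t$, and proximity to $\mathcal{L}_8^p$ by $z_{91}$. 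As in equations (\ref{eq:z102'})--(\ref{eq:EJ102}), the first task is to record the leading asymptotics of the vector field as $w\to0$ with $e^t$ and $z_{91}$ bounded: expressions for $z_{91}'$, $w'$, the Jacobian $J_{91}$, the logarithmic derivative $J'_{91}/J_{91}$, and the relation tying $J_{91}$ to the distance function $d$. Since the transition from $(y_{91},z_{91})$ to $(y_{102},z_{102})$ is $y_{91}=y_{102}$, $z_{91}=y_{102}z_{102}$, one has $J_{91}=y_{91}J_{102}$, so on $\mathcal{L}_4^p$ that $J_{91}\sim-\theta_1 e^t J_{102}$, which connects $J_{91}$ to $d=-J_{102}$ and lets me recycle the bookkeeping already established near $\mathcal{L}_8^p$.

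The second step is the Jacobian estimate. Integrating $J'_{91}/J_{91}$ from a base point $t_0$ to a nearby $t$ should give $J_{91}(t)\sim J_{91}(t_0)$, hence $d(t)/d(\tau)\sim1$, provided $|t-t_0|\ll|\tau|$ and the accumulated contribution of $\theta_1 e^t$ along the path stays small. I would take $t_0$ on $\partial D_8$, so that Lemma~\ref{lemma:L8} supplies $J_{102}\sim-d(\tau)$ and $|y_{102}(t_0)|=R_8$; passing through the transition formula fixes the initial value $z_{91}(t_0)$ and confirms $d(t_0)\sim d(\tau)$. With $d$ essentially frozen, the $z_{91}$-equation integrates to the linear leading behaviour $z_{91}(t)\sim d(\tau)(t-\tau)$, so that the level set $|z_{91}|=R_4$ is the circle $|t-\tau|\sim R_4/|d(\tau)|$. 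This identifies the approximate disk $D_4$ of radius $\sim R_4/|d(\tau)|$ and, together with $dz_{91}/dt\sim d(\tau)$, shows that $t\mapsto z_{91}$ is a covering of the annulus $D_4\setminus D_8$ onto the stated annular region; checking that $w$ stays small throughout guarantees the solution really remains close to $\mathcal{L}_4^p$ and away from $\mathcal{L}_1^p$.

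The hard part will be that, unlike in the three preceding lemmas, the disk $D_4$ is \emph{not} small: its radius $\sim R_4/|d(\tau)|$ \emph{grows} as the solution approaches $\mathcal{I}$ and can become comparable to $|\tau|$, whereas $D_3$, $D_9$ and $D_8$ all had radii $\lesssim|d(\tau)|R$ or $\sqrt{|d(\tau)|}R$. The bound $|t-\tau|\ll|\tau|$ that justified freezing $e^t$ in Lemmas~\ref{lemma:L3}, \ref{lemma:L9} and \ref{lemma:L8} is therefore no longer automatic, and the delicate point is to show that the non-autonomous term $\theta_1 e^t$ spoils neither the near-constancy of $d$ in the integral of $J'_{91}/J_{91}$ nor the linear growth of $z_{91}$ over a disk this large. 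I expect this to force a compatibility restriction relating $R_4$, $|d(\tau)|$ and $|\tau|$ (in the spirit of the condition $R_9\gg|\tau|^{-1}$ used earlier), confining the covering to the sub-disk $|t-\tau|\lesssim|\tau|$ on which the autonomous approximation, and hence the whole estimate, remains valid.
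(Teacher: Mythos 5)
Your overall skeleton is the right one and matches the paper's: work in the $(y_{91},z_{91})$ chart of Section \ref{a8-blow}, choose the base time $t_0$ on $\partial D_8$ so that Lemma \ref{lemma:L8} supplies $d(t_0)\sim d(\tau)$ and $|y_{102}(t_0)|=R_8$, freeze $d$ by integrating a logarithmic derivative of a Jacobian, and then integrate the $z_{91}$-equation. But two concrete steps are wrong. First, the chart bookkeeping: in the $(y_{91},z_{91})$ chart the line $\mathcal{L}_4^p$ is cut out by $y_{91}=0$ (it is the line $y_{102}=-\theta_1 e^t$ only in the $(y_{102},z_{102})$ chart, and the transition is $y_{102}=y_{91}-\theta_1 e^t$, $z_{102}=z_{91}/y_{102}$, not $y_{91}=y_{102}$). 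So the proximity variable is $y_{91}$ itself, not $w=y_{91}+\theta_1 e^t$; all your ``asymptotics as $w\to0$'' would be expansions about the wrong locus. Relatedly, the paper never introduces $J_{91}$: it records $J_{102}=y_{91}z_{91}^2/(\theta_1 e^t-y_{91})$ in this chart and integrates $J_{102}'/J_{102}\sim-1-\eta-\theta_1 e^t$ directly, which is cleaner because $d=-J_{102}$ is the quantity being frozen (your $J_{91}$ differs from $J_{102}$ by the non-constant factor $y_{91}-\theta_1 e^t\sim-\theta_1 e^t$).

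Second, and more seriously, the asserted dynamics $z_{91}'\sim d(\tau)$, $z_{91}(t)\sim d(\tau)(t-\tau)$ is not what the vector field gives, and it contradicts the asymptotic $z_{91}(t)\sim d(\tau)/(t-\tau)$ stated in the lemma itself. From $z_{91}'\sim\theta_1 e^t/y_{91}$ and the frozen Jacobian $J_{102}\sim-d(\tau)$ one gets $y_{91}^{-1}\sim-z_{91}^2/(\theta_1 e^t d(\tau))$, hence $z_{91}'\sim-z_{91}^2/d(\tau)$, i.e.\ $(z_{91}^{-1})'\sim1/d(\tau)$; so it is $z_{91}^{-1}$, not $z_{91}$, that grows linearly, yielding $z_{91}\sim d(\tau)/(t-t_0)$. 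This reverses the geometry: in the paper's proof the level set $|z_{91}|=R_4$ corresponds to $|t-t_0|\sim|d(\tau)|/R_4$, which is \emph{small} compared with $|t_0|\sim|\tau|$, so the freezing of $e^t$ and of $d$ is justified exactly as in Lemmas \ref{lemma:L3}--\ref{lemma:L8}. Consequently the ``hard part'' you anticipate --- a disk of radius $R_4/|d(\tau)|$ growing to order $|\tau|$ and forcing a new compatibility restriction --- does not arise in the paper's argument; it is an artifact of the incorrect linear law for $z_{91}$ (apparently reverse-engineered from the radius quoted in the statement rather than derived from equations \eqref{eq:y91'}--\eqref{eq:J'/J91}). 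To repair the proposal you need to replace the $w$-based expansion by the expansion in $y_{91}\to0$ and rederive the $z_{91}$ evolution from $z_{91}'\sim\theta_1 e^t/y_{91}$ together with the Jacobian constraint.
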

\begin{proof}
$\mathcal{L}_4^p$ without one point is visible in the chart $(y_{91},z_{91})$, where it is given by the equation $y_{91}=0$ and parametrized by $z_{91}\in\mathbf{C}$, see Section \ref{a8-blow}.
In that chart, the line $\mathcal{L}_1$ is not visible, while $\mathcal{L}_8$ is given by the equation $z_{91}=0$.

For $y_{91}\to0$ and bounded $z_{91}$ and $e^t$, we have:
\begin{subequations}
\begin{align}
y_{91}' &\sim-\frac{2\theta_1 e^t}{z_{91}},\label{eq:y91'}
\\
z_{91}' &\sim\frac{\theta_1 e^t}{y_{91}},\label{eq:z91'}
\\
J_{102}& = \frac{y_{91}z_{91}^2}{\theta_1 e^t-y_{91}},\label{eq:J91}
\\
\frac{J_{102}'}{J_{102}} & \sim -1-\eta -\theta_1 e^t
.\label{eq:J'/J91}
\end{align}
\end{subequations}
From (\ref{eq:J'/J91}):
$$
\log\frac{J_{102}(t_1)}{J_{102}(t_0)}\sim-(1+\eta)(t_1-t_0)-\theta_1(e^{t_1}-e^{t_0}).
$$
Therefore $J_{102}(t_1)/J_{102}(t_0)\sim1$ if for all $t$ on the segment from $t_0$ to $t_1$ we have $|t-t_0|\ll t_0$.
We choose $t_0$ on the boundary of $D_8$ from Lemma \ref{lemma:L8}.
Then we have
$$
\frac{d(\tau)}{d(t_0)}\sim\frac{J_{102}(\tau)}{J_{102}(t_0)}\sim1
\quad\text{and}\quad
|y_{102}(t_0)|=R_8,
$$
which implies that
$$
|\theta_1e^t|=|y_{91}-y_{102}|\sim R_8\gg1.
$$
Hence:
$$
|y_{91}|\ll 1
\quad\text{if}\quad
|\theta_1e^t|\sim R_8.
$$
Then equation (\ref{eq:J'/J91}) and $J_{102}\sim-d(\tau)$ yield
$$
y_{91}^{-1}\sim-\frac{z_{91}^2}{\theta_1e^t d(\tau)},
$$
then, using (\ref{eq:z91'}) we get:
$$
\frac{d(z_{91}^{-1})}{dt}\sim\frac1{d(\tau)}.
$$
It follows that
$$
z_{91}^{-1}\sim z_{91}(t_0)^{-1}+\frac{t-t_0}{d(\tau)},
$$
and therefore
$$
z_{91}\sim\frac{d(\tau)}{t-t_0}
\quad\text{if}\quad
|t-t_0|\gg|z_{91}(t_0)^{-1}|.
$$
For large finite $R_4>0$, the equation $|z_{91}|=R_4$ corresponds to $|t-t_0|\sim d(\tau)/R_4$, which is small compared to 
$|t_0|\sim|\tau|$, and therefore $|t-\tau|\le|t-t_0|+|t_0-\tau|\le|\tau|$.
This proves the statement of the lemma.
\end{proof}

\begin{theorem}\label{th:estimates}
Let $\epsilon_1$, $\epsilon_2$, $\epsilon_3$ be given such that $\epsilon_1>0$, $0<\epsilon_2<|\theta_1|$, $0<\epsilon_3<1$.
Then there exists $\delta_1>0$ such that if $|e^{t_0}|<\epsilon_1$ and $|d(t_0)|<\delta_1$, then:
$$
\rho=\inf\{ r<|e^{t_0}| \ \text{such that}\ |d(t)|<\delta_1\ \text{whenever}\ |e^{t_0}|\ge|e^{t}|\ge r\}
$$
satisfies:
\begin{itemize}
\item[(i)]
$\delta_1\ge|d(t_0)|\left(\rho^{-1}|e^{t_0}|\right)^{\theta_1-\epsilon_2}(1-\epsilon_3)$;
\item[(ii)]
if $|e^{t_0}|\ge|e^{t}|\ge\rho$ then $d(t)=d(t_0)e^{\theta_1(t-t_0)+\epsilon_2(t)}(1+\epsilon_3(t))$;
\item[(iii)]
if $|e^{t}|\le\rho$ then $d(t)\ge\delta_1(1-\epsilon_3)$.
\end{itemize}
\end{theorem}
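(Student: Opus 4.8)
\emph{Approach.} The plan is to treat $|d|$ as a measure of the distance to $\mathcal{I}$ and to run a continuation argument in the variable $|e^t|$, decreasing it from $|e^{t_0}|$ towards $\rho$. By construction $\rho$ is the exit radius: for every $t$ with $|e^{t_0}|\ge|e^t|>\rho$ the solution stays in the neighbourhood of $\mathcal{I}$ on which the preceding lemma defines $d$, so $|d(t)|<\delta_1$ there, while $|d|$ reaches the threshold $\delta_1$ as $|e^t|\downarrow\rho$. The three assertions are three readings of a single integrated estimate for $d$: (ii) is the estimate on the annulus $\rho\le|e^t|\le|e^{t_0}|$, (i) is its evaluation at the exit radius, and (iii) is its continuation to $|e^t|\le\rho$.

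\emph{Step 1: a uniform estimate for $d'/d$.} First I would patch the chart-wise estimates of the preceding lemmas into one differential estimate valid on the whole neighbourhood of $\mathcal{I}$ where $|d|<\delta_1$. On the bulk piece $\mathcal{I}\setminus(\mathcal{L}_3^p\cup\mathcal{L}_4^p\cup\mathcal{L}_8^p\cup\mathcal{L}_9^p)$ one has $d=1/E$, so $d'/d=-E'/E$, which is controlled near $\mathcal{L}_2^p$ by Lemma \ref{lemma:L2} and near $\mathcal{L}_1^p$ by Lemma \ref{lemma:L1}. On the four exceptional pieces $d$ equals one of $-J_{82}$, $-J_{102}$, $-J_{112}$, and its logarithmic derivative is read off from \eqref{eq:J'/J82}, \eqref{eq:J'/J102}, \eqref{eq:J'/J91}, \eqref{eq:J'/J112} in Lemmas \ref{lemma:L3}, \ref{lemma:L8}, \ref{lemma:L4}, \ref{lemma:L9}. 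The consistency relations $EJ_{82}\sim-1$, $EJ_{102}\sim-1-\theta_1e^t/y_{102}$ and $J_{112}/J_{102}=1+(1+\eta)\theta_1e^t/y_{112}$ from the lemma defining $d$ ensure that these local definitions glue into one continuous function and that the separate bounds assemble into a single estimate whose integral produces the exponential factor $e^{\theta_1(t-t_0)}$. Choosing $\delta_1$ (and using $|e^{t_0}|<\epsilon_1$) small enough forces the remaining deviation to lie within the budget recorded by the additive term $\epsilon_2(t)$ and the multiplicative factor $1+\epsilon_3(t)$.

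\emph{Step 2: integrate and read off (i)--(iii).} Integrating this estimate along a path from $t_0$ to $t$ on which $|e^t|$ decreases monotonically (e.g.\ $\Im t\equiv\Im t_0$) gives $d(t)=d(t_0)\,e^{\theta_1(t-t_0)+\epsilon_2(t)}\,(1+\epsilon_3(t))$, which is exactly (ii). Evaluating at the exit radius, where $|d|=\delta_1$ when $|e^t|=\rho$ and the modulus $|e^{\theta_1(t-t_0)}|$ translates into the power $(\rho^{-1}|e^{t_0}|)^{\theta_1-\epsilon_2}$ of the radius ratio, yields the lower bound (i). For (iii) I would continue the same estimate past $|e^t|=\rho$: since the real part of the rate has a definite sign --- which is precisely the repeller property of $\mathcal{I}$ --- once $|d|$ has attained $\delta_1$ it cannot drop back below $\delta_1(1-\epsilon_3)$, giving the assertion for $|e^t|\le\rho$.

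\emph{Main obstacle.} The delicate point is Step 1: making the differential estimate \emph{uniform} across the several charts and, above all, through the pole excursions. As $\Real t$ decreases the solution repeatedly approaches the pole lines $\mathcal{L}_5,\mathcal{L}_6,\mathcal{L}_7,\mathcal{L}_{10}$ and returns (Lemmas \ref{lemma:L3}, \ref{lemma:L9}, \ref{lemma:L8}, \ref{lemma:L4}), and one must check that $d$, pieced together from $1/E$ and the various $J_n$, stays continuous and that its logarithmic derivative never escapes the error budget, no matter how closely an excursion approaches a pole line. The consistency relations above are what make this possible, but verifying that the accumulated deviation over the annulus stays bounded by $\epsilon_2$ and $\epsilon_3$, independently of the number and depth of the excursions, is the heart of the argument.
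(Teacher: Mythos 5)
Your overall strategy coincides with the paper's: integrate the logarithmic derivative of $d$ along a path on which $|e^t|$ decreases from $|e^{t_0}|$ to $\rho$, obtain $d(t)=d(t_0)e^{\theta_1(t-t_0)+o(1)}(1+o(1))$, and read (i) off at the exit radius, (ii) on the annulus, and (iii) beyond it. The one point of divergence is exactly the step you single out as the main obstacle, and there your plan as stated would not close. You ask for a \emph{pointwise-uniform} differential estimate for $d'/d$ valid throughout the neighbourhood of $\mathcal{I}$, including inside the excursion regions near $\mathcal{L}_3^p$, $\mathcal{L}_4^p$, $\mathcal{L}_8^p$, $\mathcal{L}_9^p$, by reading off the formulas for $J'/J$. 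No such uniform estimate exists: for instance, near $\mathcal{L}_3^p$ one has $J_{82}'/J_{82}=\eta-2\theta_0-\theta_1e^t-4y_{82}+O(J_{82})$ with $|y_{82}|$ ranging up to the large constant $R_3$ on the disk $D_3$, and near $\mathcal{L}_8^p$ the term $(1+\eta)\theta_1e^t/y_{102}$ blows up as the solution approaches $\mathcal{L}_9^p$; only the \emph{integrated} contribution of each excursion is small (of order $|d|$ times powers of the $R$'s). The paper sidesteps this by never integrating through the excursions: Lemmas \ref{lemma:L3}--\ref{lemma:L4} show that the set of times at which the solution is not close to $\mathcal{L}_1^p\cup\mathcal{L}_2^p$ is a union of approximate disks of radius $\sim|d|$, so the integration path can be deformed, staying $C^1$-close to the radial path, to run entirely where Lemmas \ref{lemma:L2} and \ref{lemma:L1} control $E'/E=-d'/d$; the only input taken from the excursion lemmas is their integrated conclusion that $d(t)/d(\tau)\sim1$ across each disk, not their differential formulas. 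If you replace your Step 1 by this path-deformation device (or, equivalently, by summing the per-excursion integrated corrections rather than seeking a uniform integrand), your argument becomes the paper's.
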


\begin{proof}
Suppose a solution of the system (\ref{eq:PVlog-system}) is close to the infinity set at times $t_0$ and $t_1$.
It follows from Lemmas \ref{lemma:L3}--\ref{lemma:L4} that for every solution close to $\mathcal{I}$, the set of complex time $t$ such that the solution is not close to $\mathcal{L}_{1}^p\cup\mathcal{L}_{2}^p$ is the union of approximate disks of radius $\sim|d|$.
Hence if the solution is near $\mathcal{I}$ for all complex times $t$ such that $|e^{t_0}|\ge|e^{t}|\ge|e^{t_1}|$, then there exists a path $\mathcal{P}$ from $t_0$ to $t_1$, such that the solution is close to $\mathcal{L}_{1}^p\cup\mathcal{L}_{2}^p$ for all $t\in\mathcal{P}$ and $\mathcal{P}$ is $C^1$-close to the path: $s\mapsto t_1^s t_0^{1-s}$, $s\in[0,1]$.

Then Lemmas \ref{lemma:L2} and \ref{lemma:L1} imply that:
$$
\log\frac{E(t)}{E(t_0)}=-\theta_1(t-t_0)\int_0^1dt+o(1).
$$
Therefore
$$
E(t)=E(t_0)e^{-\theta_1(t-t_0)+o(1)}(1+o(1)),
$$
and
\begin{equation}\label{eq:d(t)d(t0)}
d(t)=d(t_0)e^{\theta_1(t-t_0)+o(1)}(1+o(1)).
\end{equation}
From Lemmas \ref{lemma:L3}--\ref{lemma:L4}, we then have that, as long as the solution is close to $\mathcal{I}$, the ratio of $d$ remains close to $1$.

For the first statement of the theorem, we have:
$$
\delta_1>|d(t)|\ge |d(t_0)| \left| e^{\theta_1(t-t_0)-\epsilon_2}\right|(1-\epsilon_3)
$$
and so
$$
\delta_1\ge\sup_{\{t\mid |d(t)|<\delta_1\}}  |d(t_0)| \left| e^{\theta_1(t-t_0)-\epsilon_2}\right|(1-\epsilon_3).
$$
The second statement follows from (\ref{eq:d(t)d(t0)}), while the third follows by the assumption on $t$.
\end{proof}

In the following corollary, we summarise the results obtained in this section.

\begin{corollary}\label{cor:infinity}
No solution of $(\ref{eq:PVlog-system})$ intersects $\mathcal{I}$.
A solution that approached $\mathcal{I}$ will stay in its vicinity for a limited range of the independent variable $t$.
Moreover, if a solution is sufficiently close to $\mathcal{I}$ at a point $t$, then it will have a pole in a neighbourhood of $t$. 
\end{corollary}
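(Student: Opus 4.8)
The plan is to derive all three assertions by assembling Theorem~\ref{th:estimates} with the local pictures furnished by Lemmas~\ref{lemma:L3}--\ref{lemma:L4}, after recording the basic fact that the function $d$ constructed before Lemma~\ref{lemma:L3} vanishes exactly on $\mathcal I$, so that $|d|$ is a genuine measure of the distance to the infinity set.

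For the first assertion I would argue two ways. Structurally, $\mathcal I$ is by construction the set of singular points of the first class in $\mathcal F(t)$, and such points lie in the closure of no leaf of the foliation; hence no solution can meet them. I would then reconfirm this analytically from Theorem~\ref{th:estimates}(ii): along a solution that is near $\mathcal I$ one has $d(t)=d(t_0)e^{\theta_1(t-t_0)+\epsilon_2(t)}(1+\epsilon_3(t))$ with $|1+\epsilon_3(t)|\ge 1-\epsilon_3>0$, so $d(t)\neq0$ whenever $d(t_0)\neq0$. Since any genuine solution lives in $\mathcal E(t)=\mathcal F(t)\setminus\mathcal I$, its value of $d$ is nonzero wherever it is defined, and the displayed identity forbids it from ever acquiring $d=0$; thus it never intersects $\mathcal I$.

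The second assertion I would read directly off the definition of $\rho$ in Theorem~\ref{th:estimates}: a solution with $|d(t_0)|<\delta_1$ remains in the $\delta_1$-neighbourhood of $\mathcal I$ precisely while $|e^{t_0}|\ge|e^t|\ge\rho$, and part~(i) controls $\rho$ from below. Writing $\Real t=\log|e^t|$, this confines $\Real t$ to the finite interval $[\log\rho,\Real t_0]$, so the solution can stay near $\mathcal I$ only for a limited range of $t$; part~(iii) then certifies that once $|e^t|\le\rho$ the solution has genuinely left, since there $|d|\ge\delta_1(1-\epsilon_3)$.

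For the third assertion I would localise: being close to $\mathcal I$ means being close to one of its six components. Along the prongs of the $D_5^{(1)}$ diagram I would chain the covering lemmas --- Lemma~\ref{lemma:L4} carries a solution near $\mathcal L_4^p$ into a neighbourhood of $\mathcal L_8^p$, Lemma~\ref{lemma:L8} carries it on toward $\mathcal L_9^p$, and Lemma~\ref{lemma:L9} produces a transversal crossing of the pole line $\mathcal L_{10}$ at a unique $\tau$; symmetrically Lemma~\ref{lemma:L3} yields a crossing of $\mathcal L_7$. The nested approximate disks $D_9\subset D_8\subset D_4$, all centred at the pole $\tau$, show the crossing lies within a controlled neighbourhood of $t$, with $|t-\tau|$ estimated as in the lemmas. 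The main obstacle is the spine $\mathcal L_1^p\cup\mathcal L_2^p$: it is not covered by the pole-producing lemmas, and there (Lemmas~\ref{lemma:L1} and~\ref{lemma:L2}) a solution merely lingers. The delicate step is to use the repeller estimate of Theorem~\ref{th:estimates} to show that such a solution is expelled along one of the prongs after only a bounded change of $\Real t$, and to check that the several coordinate charts and covering maps glue consistently, so that every sufficiently close approach to $\mathcal I$ is funnelled onto $\mathcal L_7$ or $\mathcal L_{10}$ (or, by the analogous behaviour near $\mathcal L_2^p$, onto $\mathcal L_5$ or $\mathcal L_6$) and hence produces a pole in a neighbourhood of $t$.
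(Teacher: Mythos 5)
Your proposal is correct and follows essentially the same route as the paper, whose own proof is simply the observation that the first two assertions follow from Theorem~\ref{th:estimates} and the third from Lemmas~\ref{lemma:L2}--\ref{lemma:L4}. You have merely filled in the details the paper leaves implicit --- including, to your credit, the one genuinely delicate point (funnelling a solution off the spine $\mathcal L_1^p\cup\mathcal L_2^p$ onto a prong ending in a pole line), which the paper's one-line proof does not address explicitly.
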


\begin{proof}
The first two statements follow from Theorem \ref{th:estimates}, and the last one from Lemmas \ref{lemma:L2}--\ref{lemma:L4}.
\end{proof}

\section{The limit set}
\label{sec:limit}
In this section we consider properties of the limit set of the solutions, when $\Real t\to-\infty$, i.e.~$x\to0$.
First, we define the limit set, generalising the concept of limit sets in dynamical systems.

\begin{definition}
Let $(y(t),z(t))$ be a solution of (\ref{eq:PVlog-system}). \emph{The limit set} $\Omega_{(y,z)}$ of $(y(t),z(t))$ is the set of all $S\in\mathcal{F}_{\infty}\setminus\mathcal{I}_{\infty}$ such that there exists a sequence $t_n\in\mathbf{C}$ satisfying:
$$
\lim_{n\to\infty}\Real t_n=-\infty
\quad\text{and}\quad
\lim_{n\to\infty}(y(t_n),z(t_n))=S.
$$
\end{definition}

\begin{theorem}\label{th:limit}
There exists a compact subset $K$ of $\mathcal{F}_{\infty}\setminus\mathcal{I}_{\infty}$, such that the limit set $\Omega_{(y,z)}$ of any solution $(y,z)$ is contained in $K$.
Moreover, $\Omega_{(y,z)}$ is a non-empty, compact and connected set, which is invariant under the flow of the autonomous system (\ref{eq:PVlog-system-auto}).
\end{theorem}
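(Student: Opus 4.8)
The plan is to adapt the classical theory of $\omega$-limit sets to the Okamoto space in the direction $\Real t\to-\infty$, with the repeller estimates of Theorem \ref{th:estimates} supplying all the essential control. The first step is confinement. By Theorem \ref{th:estimates} together with Corollary \ref{cor:infinity}, the set $\mathcal{I}$ is a repeller: whenever the solution enters the neighbourhood $\{|d|<\delta_1\}$ of $\mathcal{I}_{\infty}$ it is expelled after a bounded range of $\Real t$, the modulus $|d|$ evolving as in part (ii). Consequently $|d|$ is bounded below along the solution for all sufficiently negative $\Real t$, so the solution eventually avoids a fixed open neighbourhood $N\supset\mathcal{I}_{\infty}$ apart from transient excursions. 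Setting $K:=\mathcal{F}_{\infty}\setminus N$, which is closed in the compact surface $\mathcal{F}_{\infty}$ and hence compact and disjoint from $\mathcal{I}_{\infty}$, every accumulation point as $\Real t\to-\infty$ must lie in $K$, so $\Omega_{(y,z)}\subseteq K$. Non-emptiness is then immediate: the solution lies in the compact set $K$ along a sequence of times with $\Real t_n\to-\infty$, so a subsequence converges to a point of $\Omega_{(y,z)}$; and $\Omega_{(y,z)}$ is compact, being closed and contained in $K$, hence bounded away from $\mathcal{I}_{\infty}$.

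For connectedness I would argue by contradiction. Suppose $\Omega_{(y,z)}=A\sqcup B$ with $A,B$ non-empty and separated by disjoint open sets $U_A,U_B$ inside $\mathcal{F}_{\infty}\setminus\mathcal{I}_{\infty}$. Since both $A$ and $B$ are accumulation sets, the solution visits $U_A$ and $U_B$ at times of arbitrarily negative real part. As it describes a continuous curve, each passage from $U_A$ to $U_B$ must cross the compact separating region $K\setminus(U_A\cup U_B)$; collecting crossing times $s_n$ with $\Real s_n\to-\infty$ and using compactness yields an accumulation point outside $A\cup B$, contradicting the definition of $\Omega_{(y,z)}$. Two points require care: one must realise the solution inside the single limiting surface $\mathcal{F}_{\infty}$ although the ambient fibre $\mathcal{F}(t)$ varies with $t$ (the $t$-dependent lines $\mathcal{L}_9(t),\mathcal{L}_{10}(t)$ degenerate as $e^t\to0$), and one must ensure that a transition from $U_A$ to $U_B$ cannot evade the separating region by means of an excursion toward $\mathcal{I}_{\infty}$. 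This is precisely where confinement to $K$ and the bounded-excursion estimates of Theorem \ref{th:estimates} are invoked.

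Invariance under the autonomous flow of \eqref{eq:PVlog-system-auto} follows from the observation that the non-autonomous field \eqref{eq:PVlog-system} converges to the autonomous one as $\Real t\to-\infty$, since every $t$-dependent coefficient carries a factor $e^t$ or $e^{2t}$ tending to $0$. Given $S\in\Omega_{(y,z)}$ with $(y(t_n),z(t_n))\to S$ and a fixed autonomous time $s$, continuous dependence on initial conditions and on the vanishing perturbation gives $(y(t_n+s),z(t_n+s))\to\phi_s(S)$, where $\phi_s$ denotes the autonomous flow. Because $\Real(t_n+s)\to-\infty$ and $\phi_s(S)$ again lies in $K\subset\mathcal{F}_{\infty}\setminus\mathcal{I}_{\infty}$ (the autonomous system sharing $\mathcal{I}_{\infty}$ as its repelling infinity set), we conclude $\phi_s(S)\in\Omega_{(y,z)}$.

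I expect the main obstacle to be the connectedness step combined with the reconciliation of the varying fibres $\mathcal{F}(t)$ with the single limiting surface $\mathcal{F}_{\infty}$: one must make uniform, as $\Real t\to-\infty$, both the convergence of the spaces and the convergence of the flows, and simultaneously rule out that excursions toward the repelling set $\mathcal{I}_{\infty}$ spuriously connect the pieces of a putative disconnection. The continuous-dependence estimate needed for invariance must likewise be uniform in $n$, which again rests on the confinement furnished by Theorem \ref{th:estimates}.
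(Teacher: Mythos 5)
Your argument is sound and its confinement and invariance steps coincide with the paper's: both use Theorem \ref{th:estimates} to trap the solution in a region where $|d|\ge\delta_1$ for all $t$ with $|e^t|$ sufficiently small, take $K$ to be the corresponding compact subset of $\mathcal{F}_{\infty}\setminus\mathcal{I}_{\infty}$, and obtain invariance from the $C^1$ convergence of the vector field \eqref{eq:PVlog-system} to \eqref{eq:PVlog-system-auto} together with continuous dependence on initial data. Where you genuinely diverge is the treatment of non-emptiness, compactness and connectedness. You prove the three properties separately, with connectedness handled by the classical separation argument: a putative splitting $\Omega_{(y,z)}=A\sqcup B$ forces infinitely many crossings of a compact separating region and hence a limit point outside $A\cup B$. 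The paper instead sets $T_r=\{t\in\mathbf{C}\mid|e^t|\le r\}$, observes that $\Omega_{(y,z),r}:=\overline{(y,z)(T_r)}$ is non-empty, compact and connected (being the closure of a continuous image of the connected set $T_r$ inside the compact set $K_{\delta_1,r}$), and realises $\Omega_{(y,z)}$ as the intersection of the decreasing family $\Omega_{(y,z),r}$ as $r\to0$; all three properties then follow in one stroke from the fact that a decreasing intersection of non-empty compact connected sets is non-empty, compact and connected. This route is worth noting because it dissolves precisely the two difficulties you flag as the main obstacles: there is no need to track individual transitions between $U_A$ and $U_B$ or to rule out excursions toward $\mathcal{I}_{\infty}$ during a transition, since confinement to $K_{\delta_1,r}$ is imposed once and for all on the whole of $T_r$; and the reconciliation of the varying fibres $\mathcal{F}(t)$ with the limiting surface $\mathcal{F}_{\infty}$ is absorbed into the single statement that the sets $K_{\delta_1,r}$ shrink to the compact set $K_{\delta_1,0}\subset\mathcal{F}_{\infty}\setminus\mathcal{I}_{\infty}$. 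Your separation argument does work, but to make it airtight you would have to carry out the uniformity checks you yourself identify; the nested-intersection argument renders them unnecessary.
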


\begin{proof}
For any positive numbers $\delta_1$, $r$, let $K_{\delta_1,r}$ denote the set of all $s\in\mathcal{F}(t)$ such that $|e^t|\le r$ and $d(s)\ge\delta_1$.
Since $\mathcal{F}(t)$ is a complex analytic family over $\mathbf{C}$ of compact surfaces, $K_{\delta_1,r}$ is also compact.
Furthermore $K_{\delta_1,r}$ is disjoint from the union of the infinity sets $\mathcal{I}(t)$, $t\in\mathbf{C}$, and therefore $K_{\delta_1,r}$ is a compact susbset of the Okamoto space $\mathcal{O}\setminus\mathcal{F}_{0}$.
When $r$ approches zero, the sets $K_{\delta_1,r}$ shrink to the set
$$
K_{\delta_1,0}=\{ s\in\mathcal{F}(0) \mid |d(s)|\ge\delta_1\}\subset\mathcal{F}_{\infty}\setminus\mathcal{I}_{\infty},
$$
which is compact.

It follows from Theorem \ref{th:estimates} that there exists $\delta_1\ge0$ such that for every solution $(y,z)$ there exists $r_0>0$ with the following property:
$$
(y(t),z(t))\in K_{\delta,r_0}
\quad
\text{for every}\ t\ \text{such that}\ |e^t|\le r_0.
$$
In the sequel, we take $r\le r_0$, when it follows that $(y(t),z(t))\in K_{\delta_1,r}$ whenever $|e^t|\le r$.
Let 
$T_r=\{t\in\mathbf{C}\mid |e^t|\le r\}$
and let $\Omega_{(y,z),r}$ denote the closure of $(y,z)(T_r)$ in $\mathcal{O}$.
Since $T_r$ is connected and $(y,z)$ continuous, $\Omega_{(y,z),r}$ is also connected.
Since $(y,z)(T_r)$ is contained in the compact subset $K_{\delta_1,r}$, its closure $\Omega_{(y,z),r}$ is also contained there and therefore $\Omega_{(y,z),r}$ is a non-empty compact and connected subset of $\mathcal{O}\setminus\mathcal{F}(0)$.
The intersection of a decreasing sequence of non-empty compact and connected sets is non-empty, compact and connected: therefore, as $\Omega_{(y,z),r}$ decrease to $\Omega_{(y,z)}$ when $r$ tends to zero, it follows that $\Omega_{(y,z)}$ is a non-empty, compact and connected set of $\mathcal{O}$.
Since $\Omega_{(y,z),r}\subset K_{\delta_1,r}$ for all $r\le r_0$, and the sets $K_{\delta_1,r}$ shrink to the compact subset $K_{\delta_1},0$ of $\mathcal{F}_{\infty}\setminus\mathcal{I}_{\infty}$ as $r$ tends to zero, it follows that $\Omega_{(y,z)}\subset K_{\delta_1,0}$.
This proves the first statement of the theorem with $K=K_{\delta_1,0}$.

Since $\Omega_{(y,z)}$ is the intersection of the decreasing family of compact sets $\Omega_{(y,z),r}$, there exists for every neighbourhood $A$ of $\Omega_{(y,z)}$ in $\mathcal{O}$ and $r>0$ such that $\Omega_{(y,z),r}\subset A$, hence $(y(t),z(t))\in A$ for every $t\in\mathbf{C}$ such that $|e^t|\le r$.
If $t_j$ is any sequence in $\mathbf{C}$ such that $\Real t_j\to-\infty$, then the compactness of $K_{\delta_1,r}$, in combination with $(y,z)T_r\subset K_{\delta_1,r}$, implies that there is a subsequence $j=j(k)\to\infty$ as $k\to\infty$, such that:
$$
(y(t_{j(k)}),z(t_{j(k)}))\to s
\ \ \text{as}\ \ k\to\infty.
$$
Then it follows that $s\in\Omega_{(y,z)}$.

Next we prove that $\Omega_{(y,z)}$ is invariant under the flow $\Phi^{\tau}$ of the autonomous Hamiltonian system (\ref{eq:PVlog-system-auto}).
Let $s\in\Omega_{(y,z)}$ and $t_j$ be a sequence in $\mathbf{C}$ such that $\Real t_j\to-\infty$ and $(y(t_j),z(t_j))\to s$.
Since the $t$-dependent vector field of the system (\ref{eq:PVlog-system}) converges in $C^1$ to the vector field of the autonomous system (\ref{eq:PVlog-system-auto}) as $\Real t\to-\infty$, it follows from the continuous dependence on initial data and parameters, that the distance between $(y(t_j+\tau),z(t_j+\tau))$ and $\Phi^{\tau}(y(t_j),z(t_j))$ converges to zero as $j\to\infty$.
Since $\Phi^{\tau}(y(t_j),z(t_j))\to\Phi^{\tau}(s)$ and $\Real t_j\to-\infty$ as $j\to\infty$, 
it follows that $(y(t_j+\tau),z(t_j+\tau))\to\Phi^{\tau}(s)$ and $t_j+t\to\infty$ as $j\to\infty$, hence $\Phi^{\tau}(s)\in\Omega_{(y,z)}$.
\end{proof}

\begin{proposition}\label{prop:intersections}
If $y$ is a solution of (\ref{eq:PV}) with essential singularity at $x=0$, than the flow $(y,z)$ of the vectory field (\ref{eq:PVlog-system})  meets each of the pole lines $\mathcal{L}_5$, $\mathcal{L}_6$, $\mathcal{L}_7$ infinitely many times.
\end{proposition}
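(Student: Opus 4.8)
The plan is to locate the three pole lines inside the limit set furnished by Theorem~\ref{th:limit} and then transfer that information to the solution itself. First I would show that $E$ has a finite limit along the solution. By Theorem~\ref{th:limit} the orbit eventually remains in a fixed compact subset of $\mathcal{F}_{\infty}\setminus\mathcal{I}_{\infty}$, and on that set the bracketed factor in
$$
E'=-\theta_1 e^t\big( z(2\theta_0-(\eta+2\theta_0)y+2y(y-1)z ) +E\big)
$$
is bounded; hence $E'=O(|e^t|)$ and $E(t)\to E_{\infty}$ as $\Real t\to-\infty$ for some constant $E_{\infty}$. Consequently $\Omega_{(y,z)}$ lies in the single level set $\{E=E_{\infty}\}$ of the autonomous Hamiltonian, regarded as a curve in $\mathcal{F}_{\infty}$.

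Next I would exploit invariance. The set $\Omega_{(y,z)}$ is nonempty, compact, connected and invariant under the autonomous flow, so it contains a complete complex orbit. After the birational change of variables of Section~\ref{sec:conics} the level curves form a pencil of conics, all passing through the four base points, which are precisely the poles ($y=\infty$, with residues $\pm\theta_{\infty}$) and zeros ($y=0$) of the autonomous solution; the exceptional curves over these base points are the pole lines. On a smooth fibre the induced holomorphic vector field on the rational curve has two simple zeros, situated at finite nonzero values of $y$, and the complex orbit of any other point is the entire fibre with those two fixed points removed. That orbit therefore still passes through the base points at $y=\infty$ and $y=0$, so after blow-up its closure crosses $\mathcal{L}_5^a$, $\mathcal{L}_6^a$ and $\mathcal{L}_7^a$. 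Hence whenever $E_{\infty}$ is a regular value, $\Omega_{(y,z)}$ meets all three pole lines.

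It then remains to discard the singular values of $E_{\infty}$. The singular fibres of the pencil are the reducible conics listed in Section~\ref{sec:conics}, whose components carry the Riccati solutions and meet only $\mathcal{L}_5,\mathcal{L}_6$ or only $\mathcal{L}_7$, and the four fixed points sit on these same fibres. These Riccati solutions linearise to Whittaker equations, which have a regular singular point at $x=0$ and hence no essential singularity there, so the hypothesis excludes them. I expect the delicate point to be exactly here: one must rule out that a genuinely transcendental solution nevertheless has its limit set trapped inside a singular fibre, and this is where the classification of Section~\ref{sec:special}, together with the connectedness and invariance of $\Omega_{(y,z)}$, must be used carefully. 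This is the main obstacle.

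Finally I would transfer the conclusion to the solution. Fix a pole line, say $\mathcal{L}_5$, and choose $s_0\in\Omega_{(y,z)}\cap\mathcal{L}_5^a$ away from $\mathcal{I}_{\infty}$. In the chart $(y_{62},z_{62})$ of Section~\ref{a5-blow} the line $\mathcal{L}_5$ is $y_{62}=0$ and the vector field has $y_{62}$-component $-\theta_{\infty}\neq0$ there, so the flow is transversal to $\mathcal{L}_5$ at $s_0$. Picking a sequence $t_n$ with $\Real t_n\to-\infty$ and $(y(t_n),z(t_n))\to s_0$, the $C^1$-convergence of the non-autonomous field \eqref{eq:PVlog-system} to its autonomous limit, together with transversality, yields by the implicit function theorem a nearby time $\tau_n$ with $y_{62}(\tau_n)=0$, that is, a pole of residue $-\theta_{\infty}$. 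After passing to a subsequence the $\tau_n$ are distinct with $\Real\tau_n\to-\infty$, giving infinitely many such poles. Repeating the argument at $\mathcal{L}_6^a$ (residue $\theta_{\infty}$) and at $\mathcal{L}_7^a$ (a zero, in the chart $(y_{81},z_{81})$ of Section~\ref{a7-blow}) completes the proof.
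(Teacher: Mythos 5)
Your strategy --- limit set, invariance under the autonomous flow, the pencil of conics with its base points, then transversality plus the implicit function theorem to convert limit points on a pole line into actual crossings --- is the same as the paper's, and your observation that $E'=O(|e^t|)$ on the compact set furnished by Theorem \ref{th:limit}, so that $E\to E_\infty$ and $\Omega_{(y,z)}$ lies in a single fibre, is a genuine sharpening of the paper's bare assertion that the limit set contains a whole component of some curve $h_c=0$.

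The gap is that you have dropped the step in which the essential-singularity hypothesis actually does its work. The paper begins with a contradiction argument: if the solution met $\mathcal{L}_5\cup\mathcal{L}_6\cup\mathcal{L}_7$ only finitely often, transversality would force $\Omega_{(y,z)}$ to avoid all three pole lines, so $\Omega_{(y,z)}$ would be a compact subset of the first affine chart, $y$ and $z$ would stay bounded as $\Real t\to-\infty$, and $y$ would be analytic at $x=0$ --- a contradiction. This disposes of limit sets that reduce to a single finite fixed point or are otherwise confined to the affine part of a fibre. In your version that burden falls on the claim that the vector field restricted to a smooth fibre has ``two simple zeros at finite nonzero values of $y$'' with the complex orbit filling the rest of the fibre. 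That claim is not correct: a zero of the restricted field at an affine point is a fixed point of the planar autonomous system, hence a critical point of $E$, and such points lie only on singular fibres; on a generic fibre the degeneracies of the restricted field sit on the boundary, so you still need an argument that the orbit closure reaches the base points rather than remaining bounded --- which is exactly what the omitted analyticity argument supplies. The singular-fibre case you flag as ``the main obstacle'' is indeed left open (the paper is also terse there, but it at least has the boundedness-implies-analyticity step to guarantee that the limit set reaches some pole line); and your dismissal of the Riccati solutions on the grounds that a regular singular point of the linearisation precludes an essential singularity of $y$ is not automatic, since the logarithmic derivative of a generic Whittaker combination need not be meromorphic at $0$. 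Reinstating the paper's opening contradiction argument would repair most of this at once.
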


\begin{proof}
First, suppose that a solution $(y(t),z(t))$ intersects the union $\mathcal{L}_5\cup\mathcal{L}_6\cup\mathcal{L}_7$ only finitely many times.

According to Theorem \ref{th:limit}, the limit set $\Omega_{(y,z)}$ is a compact set in $\mathcal{F}_{\infty}\setminus\mathcal{I}_{\infty}$.
If $\Omega_{(y,z)}$ intersects one of the pole lines $\mathcal{L}_5$, $\mathcal{L}_6$, $\mathcal{L}_7$ at a point $p$, then there exists $t$ with arbitrarily large negative real part such that $(u(t),v(t))$ is arbitrarily close to $p$, when the transversality of the vector field to the pole line implies thet $(y(\tau),z(\tau))\in\mathcal{L}_5\cup\mathcal{L}_6\cup\mathcal{L}_7$ for a unique $\tau$ near $t$.
As this would imply that $(y(t),z(t))$ intersects $\mathcal{L}_5\cup\mathcal{L}_6\cup\mathcal{L}_7$ infinitely many times, it follows that $\Omega_{(y,z)}$ is a compact subset of $\mathcal{F}_{\infty}\setminus(\mathcal{I}_{\infty}\cup\mathcal{L}_5\cup\mathcal{L}_6\cup\mathcal{L}_7)$.
It follows that $\Omega_{(y,z)}$ is a compact subset contained in the first affine chart, which implies that $y$ and $z$ remain bounded for large negative $\Real t$.
Thus $y$, $z$ are holomorphic functions of $x=e^t$ in a neighbourhood of $x=0$, which implies that there are complex numbers $y(\infty)$, $z(\infty)$ which are the limit points of $y(t)$, $z(t)$ as $\Real t\to-\infty$.
That means that $y$ is analytic at $x=0$, which contradicts the assumption that it has there an essential singularity.

Since the limit set $\Omega_{(y,z)}$ is invariant under the autonomous flow, it means that it will contain the whole irreducible component of a curve from the pencil $h_c(y,z)=0$ given by (\ref{eq:pencil}), for some constant $c$.
It is shown in Section \ref{sec:conics} that this pencil of curves is birationally equivalent to a pencil of conics.
We identified in Section \ref{sec:conics} the three singular conics in the pencil and found the special solutions corresponding to them.

In all other cases, all three base points $b_5$, $b_6$, $b_7$ will be contained in the limit set, which are projections of the pole lines $\mathcal{L}_5(\infty)$, $\mathcal{L}_6(\infty)$, $\mathcal{L}_7(\infty)$ respectively.
For a general solution $(y,z)$, the base point $b_4$ will not be contained in the limit set, because that point is not a base point of the autonomous system (\ref{eq:PVlog-auto}).
\end{proof}

\begin{remark}
If the limit set $\Omega_{(y,z)}$ contains only one point, that point must be a fixed point of the autonomous system (\ref{eq:PVlog-system-auto}).
As we obtained in Section \ref{sec:auto}, there are four such points.
One of the points has $y$-coordinate equal to unity and it corresponds to the rational solutions of the form $\dfrac{\kappa}{x+\kappa}$ and $\dfrac{\kappa+x}{\kappa-x}$.
\end{remark}

\begin{theorem}\label{th:zeroespoles}
Every solution of (\ref{eq:PV}) with essential singularity at $x=0$ has infinitely many poles and infinitely many zeroes in each neighbourhood of that singular point.
\end{theorem}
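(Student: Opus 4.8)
The plan is to derive the statement from Proposition~\ref{prop:intersections} by translating intersections with the pole lines into poles and zeros of the Painlev\'e transcendent $y$. The local expansions collected in Section~\ref{sec:poles} provide the dictionary: a crossing of $\mathcal{L}_5$ or $\mathcal{L}_6$ occurs exactly when $y$ acquires a simple pole (with residue $\mp\theta_{\infty}$ and a simultaneous simple zero of $z$), whereas a crossing of $\mathcal{L}_7$ occurs exactly when $y$ acquires a simple zero with leading coefficient $\theta_0$ (and $z$ a simple pole). Hence the poles of $y$ are in bijection with the times at which the lifted orbit $(y,z)$ meets $\mathcal{L}_5\cup\mathcal{L}_6$, and the zeros of $y$ with the times at which it meets $\mathcal{L}_7$.

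First I would invoke Proposition~\ref{prop:intersections}: under the hypothesis that $y$ has an essential singularity at $x=0$, the orbit $(y,z)$ meets each of $\mathcal{L}_5$, $\mathcal{L}_6$ and $\mathcal{L}_7$ infinitely many times. Through the dictionary above this already yields infinitely many poles and infinitely many zeros of $y$. The only remaining point is that these are not confined to a bounded region of the $t$-plane but instead accumulate at $x=0$; equivalently, that the crossing times $\tau$ can be taken with $\Real\tau\to-\infty$.

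For this localisation I would revisit the limit-set mechanism behind the proposition. Its proof shows that, for a solution with an essential singularity, the limit set $\Omega_{(y,z)}\subset\mathcal{F}_{\infty}\setminus\mathcal{I}_{\infty}$ (formed as $\Real t\to-\infty$) contains the three base points $b_5$, $b_6$, $b_7$, which are precisely the images of the limiting pole lines $\mathcal{L}_5(\infty)$, $\mathcal{L}_6(\infty)$, $\mathcal{L}_7(\infty)$. By the very definition of the limit set, for each base point there is a sequence $t_n$ with $\Real t_n\to-\infty$ along which $(y(t_n),z(t_n))$ approaches that point, and hence approaches the corresponding pole line in the nearby fibre $\mathcal{F}(t_n)$. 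The series of Section~\ref{sec:poles} show that the coordinate cutting out each pole line vanishes to first order along the flow, so the vector field is transversal to $\mathcal{L}_5$, $\mathcal{L}_6$, $\mathcal{L}_7$; as in Lemma~\ref{lemma:L3}, this transversality upgrades each near-approach to a genuine crossing at a unique nearby time $\tau_n$, still with $\Real\tau_n\to-\infty$. Passing back through $x=e^t$, the associated poles and zeros accumulate at $x=0$, so every punctured neighbourhood of $x=0$ contains infinitely many of each.

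The main obstacle is this last transversality-to-crossing step, namely converting ``the limit set meets a pole line'' into ``there are honest crossings at times whose real part tends to $-\infty$.'' Care is needed to work in the chart adapted to each base point so that the vanishing coordinate genuinely changes sign, and to use that the residues $\mp\theta_{\infty}$ and the leading coefficient $\theta_0$ are nonzero, so that the crossings produce authentic poles and zeros rather than removable coincidences. I would also record that the special solutions of Section~\ref{sec:conics}, which meet only some of the pole lines, are no counterexample: being solvable in terms of Whittaker functions, they carry a regular singular point (at worst a branch point) at $x=0$ rather than an essential singularity, and therefore fall outside the hypothesis of the theorem.
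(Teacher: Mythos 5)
Your proposal is correct and follows essentially the same route as the paper: the paper's proof is a two-line combination of Proposition~\ref{prop:intersections} with the local expansions of Section~\ref{sec:poles}, exactly the ``dictionary'' you describe. The extra points you elaborate --- that the crossing times have $\Real\tau\to-\infty$, and that transversality converts near-approaches of the limit set into genuine crossings --- are already built into the statement and proof of Proposition~\ref{prop:intersections}, so your write-up is simply a more explicit version of the paper's argument.
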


\begin{proof}
Applying results from Section \ref{sec:poles} and Proposition \ref{prop:intersections}, we get that each solution has a simple pole at the intersections with $\mathcal{L}_5$ and $\mathcal{L}_6$ and a simple zero at the intersection with $\mathcal{L}_7$.
\end{proof}

\section{Limit $x\to\infty$}
\label{sec:x_infinity}
For studying the limit $x\to\infty$, it is convenient to represent the $\PV$ as the following system:
\begin{equation}\label{eq:PVsystem}
\begin{split}
&y'=\frac1x\big(
2y(y-1)^2z-(\theta_0+\eta)y^2+(2\theta_0+\eta-\theta_1x)y-\theta_0
    \big),
\\
&z'=-\frac1x\bigg(
(y-1)(3y-1)z^2-\big(2(\theta_0+\eta)y-2\theta_0-\eta+\theta_1x\big)z
+\frac12\epsilon(\theta_0+\eta-\theta_{\infty})
\bigg),
\end{split}
\end{equation}
where
$\theta_{\infty}^2=2\alpha$,
$\theta_0^2=-2\beta$,
$\theta_1^2=-2\delta$ $(\theta_1\neq0)$,
$\eta=-\frac{\gamma}{\theta_1}-1$,
$\epsilon=\frac12(\theta_0+\theta_{\infty}+\eta)$.

\begin{remark}
Using the change of the independent variable $t=\log x$, equation (\ref{eq:PVsystem}) will give $\PV$ in the form (\ref{eq:PVlog}). 
\end{remark}

The resolution of the singularities of (\ref{eq:PVsystem}) will lead to the same space of the initial values as described in Section \ref{sec:space}, and shown in Figure \ref{fig:okamoto-blow-down}.

In the limit $x\to\infty$, the fifth Painlev\'e equation (\ref{eq:PV}) becomes:
\begin{equation}\label{eq:PV-auto}
y''=\left(\frac{1}{2y}+\frac{1}{y-1}\right) y'^2+\frac{\delta y(y+1)}{y-1},
\end{equation}
which has a first integral:
$$
\frac{y'^2}{2y(y-1)}+\frac{\delta y}{(y-1)^2}.
$$
The solutions of (\ref{eq:PV-auto}) are elliptic functions satisfying:
$$
y'^2=2y\left(C(y-1)^2 - \delta y\right),
$$
where $C$ is an arbitrary constant.
Notice that $y\equiv0$ is the only solution of that equation taking the value $0$.
Thus, in contrast to the case when $x\to0$, $a_3$ is not a base point for the autonomous equation, while $a_4$ will be a base point for that equation.

Now, analysing the system (\ref{eq:PVsystem}) in the similar way as shown in Sections \ref{sec:infinity} and \ref{sec:limit}, 
it follows in the limit $x\to\infty$, that a general solution of that system has a compact limit set which is invariant with the respect to the autonomous flow.
This implies, as in the proof of Proposition \ref{prop:intersections}, that the flow $(y,z)$ of the vector field (\ref{eq:PVsystem}) meets each of the pole lines $\mathcal{L}_5$, $\mathcal{L}_6$, $\mathcal{L}_{10}$ infinitely many times.
Thus every solution of (\ref{eq:PV}) with essential singularity at $x=\infty$ has infinitely many poles and take the value $1$ infinitely many times in each neighbourhood of that singular point.


\appendix

\section{Resolution of the system}
\label{sec:resolution}
In this section, we give the explicit construction of the space of initial conditions for (\ref{eq:PVlog-system}).
This constructions consists of eleven successive blow-ups of points in $\mathbf{CP}^2$.

We use the following notation.
The coordinates in three affine charts of $\mathbf{CP}^2$ are denoted by $(y_{01},z_{01})$,  $(y_{02},z_{02})$, and  $(y_{03},z_{03})$.
The exceptional line obtained in the $n$-th blow-up is covered by two coordinate charts, denoted by $(y_{n1},z_{n1})$ and $(y_{n2},z_{n2})$.

In each of these charts, we write the system (\ref{eq:PVlog-system}) in the corresponding coordinates and look for \emph{base points} -- the points contained by infinitely many solutions.
We calculate the coordinates of the base points in local coordinates in the following way.
In each chart $(y_{nj},z_{nj})$, the system can be written in the form:
$$
y_{nj}'=\frac{P(y_{nj},z_{nj},e^t)}{Q(y_{nj},z_{nj},e^t)},
\quad
z_{nj}'=\frac{R(y_{nj},z_{nj},e^t)}{S(y_{nj},z_{nj},e^t)},
$$
for some polynomial expressions $P$, $Q$, $R$, $S$.
The uniqueness property for the given initial conditions is broken whenever $P=Q=0$ or $R=S=0$,
so solving these equations yields to base points.
We note that, after blowing up, a new base points can appear only on the exceptional line.

We remark that a base point in algebraic geometry is a joint point of all curves from a given pencil.
The solutions of the autonomous system (\ref{eq:PVlog-system-auto}) are algebraic curves from the pencil (\ref{eq:pencil}), hence the notions of base points of a system of differential equations and base points of a pencil of curves coincide in the autonomous case.


\subsection{Affine chart $(y_{01},z_{01})$}\label{chart01}
The first affine chart is defined by the original coordinates:
$y_{01}=y$,
$z_{01}=z$.
The energy (\ref{eq:E}) is:
$$
 E=y(y-1)^2z^2-(\theta_0+\eta)y^2z+(2\theta_0+\eta)yz-\theta_0z+\frac12\epsilon(\theta_0+\eta-\theta_{\infty})y.
$$

\subsection{Affine chart $(y_{02},z_{02})$}\label{chart02}

The second affine chart is given by the coordinates:
\begin{gather*}
y_{02}=\frac1{y},\quad  z_{02}=\frac{z}{y},
\\
y=\frac1{y_{02}},\quad z=\frac{z_{02}}{y_{02}}.
\end{gather*}
The line at infinity is $\mathcal{L}_{\infty} : y_{02}=0$.

The Painlev\'e vector field (\ref{eq:PVlog-system}) is:
$$
\begin{aligned}
y_{02}'&= - \frac{ 2 z_{02}}{y_{02}^2} + \frac{4 z_{02}}{y_{02}}+\eta + \theta_0 - (\eta + 2 \theta_0) y_{02} + \theta_0 y_{02}^2 - 2 z_{02} 
+\theta_1 e^t y_{02},
\\
z_{02}'&= - \frac{5 z_{02}^2}{y_{02}^3} + \frac{8 z_{02}^2}{y_{02}^2} + 
 y_{02} ( \theta_0 z_{02}-F) + \frac{3z_{02} (\eta + \theta_0- z_{02})}{y_{02}}-2 (\eta+2\theta_0) z_{02}
 + 2 \theta_1 e^t  z_{02}.
\end{aligned}
$$
In this chart, there is one visible base point:
$$
a_0(y_{02}=0,z_{02}=0).
$$

\subsection{Affine chart $(y_{03},z_{03})$}\label{chart03}

The coordinates:
\begin{gather*}
y_{03}=\frac{y}{z},
\quad
z_{03}=\frac{1}{z},
\\
y=\frac{y_{03}}{z_{03}},
\quad
z=\frac1{z_{03}}.
\end{gather*}
The line at infinity is $\mathcal{L}_{\infty} : z_{03}=0$.

The vector field (\ref{eq:PVlog-system}) is:
$$
\begin{aligned}
y_{03}'&=  \frac{5 y_{03}^3}{z_{03}^3} - \frac{8 y_{03}^2}{z_{03}^2} + 
\frac{ 3 y_{03}(1-( \eta+\theta_0) y_{03})}{z_{03}} + ( F y_{03}-\theta_0 ) z_{03}
+2 (\eta+2\theta_0) y_{03}
-2 \theta_1  e^t   y_{03}
,
\\
z_{03}'&=\frac{3 y_{03}^2}{z_{03}^2} - \frac{4 y_{03}}{z_{03}} + 
 (\eta + 2 \theta_0) z_{03} + F z_{03}^2 +1 - 2 (\eta + \theta_0) y_{03} 
 - \theta_1 e^t z_{03}.
\end{aligned}
$$
In this chart, there is one base point:
$$
a_1(y_{03}=0,z_{03}=0).
$$


\subsection{Resolution at $a_0$}\label{a0-blow}
\paragraph{First chart:}
\begin{gather*}
y_{11}=\frac{y_{02}}{z_{02}}=\frac{1}{z},
\quad
z_{11}=z_{02}=\frac{z}{y},
\\
y=\frac{1}{y_{11}z_{11}},
\quad
z=\frac{1}{y_{11}}.
\end{gather*}
The exceptional line is $\mathcal{L}_0 : z_{11}=0$, while the proper preimage of the line at infinity $\mathcal{L}_{\infty}$ is given by $y_{11}=0$. 

The vector field (\ref{eq:PVlog-system}) in this chart is:
$$
\begin{aligned}
y_{11}'&=
\frac{3}{y_{11}^2 z_{11}^2} - \frac{ 2 (\eta+ \theta_0)}{z_{11}} - \frac{4}{y_{11} z_{11}}+
1 + (\eta  + 2 \theta_0) y_{11} + F y_{11}^2 - \theta_1 e^t y_{11}
,
\\
z_{11}'&=
 - \frac{5}{y_{11}^3 z_{11}}+
\frac{8}{y_{11}^2} + \frac{3( \eta+\theta_0)}{y_{11}}- \frac{3 z_{11}}{y_{11}} - 2( \eta+2 \theta_0) z_{11}  - F y_{11} z_{11} + \theta_0 y_{11} z_{11}^2+2  \theta_1 e^t z_{11}
.
\end{aligned}
$$
In this chart, there are no base points on the exceptional line $\mathcal{L}_0$.

\paragraph{Second chart:}
\begin{gather*}
y_{12}=y_{02}=\frac{1}{y},
\quad
z_{12}=\frac{z_{02}}{y_{02}}=z,
\\
y=\frac{1}{y_{12}},
\quad
z=z_{12}.
\end{gather*}
The exceptional line is $\mathcal{L}_0 : y_{12}=0$, while the proper preimage of the line at infinity $\mathcal{L}_{\infty}$ is not visible in this chart.

The vector field (\ref{eq:PVlog-system}) is:
$$
\begin{aligned}
y_{12}'&=
 -\frac{ 2 z_{12}}{y_{12}}+\eta + \theta_0 - (\eta + 2 \theta_0) y_{12} + \theta_0 y_{12}^2 + 4 z_{12}  - 2 y_{12} z_{12}
 +\theta_1 e^t y_{12}
,
\\
z_{12}'&=
- \frac{3 z_{12}^2}{y_{12}^2} + \frac{4 z_{12}^2}{y_{12}}+ \frac{2 (\eta + \theta_0) z_{12}}{y_{12}}-F - (\eta + 2 \theta_0) z_{12}  - z_{12}^2 
+\theta_1 e^t z_{12}
.
\end{aligned}
$$
In this chart, there is one base point on the exceptional line $\mathcal{L}_0$:
$$
a_2(y_{12}=0,z_{12}=0).
$$

\subsection{Resolution at $a_1$}\label{a1-blow}
\paragraph{First chart:}
\begin{gather*}
 y_{21}=\frac{y_{03}}{z_{03}}=y,
 \quad
z_{21}=z_{03}=\frac1z,
\\
y=y_{21},
\quad
z=\frac{1}{z_{21}}.
\end{gather*}
The exceptional line is $\mathcal{L}_1 : z_{21}=0$, while the proper preimage of the line at infinity $\mathcal{L}_{\infty}$ is not visible in this chart.

The vector field (\ref{eq:PVlog-system}) is:
$$
\begin{aligned}
y_{21}'&=
\frac{ 2 ( y_{21}-1)^2 y_{21}}{z_{21}}
-\theta_0 + (\eta + 2 \theta_0- \theta_1 e^t) y_{21} - (\eta + \theta_0) y_{21}^2 
,
\\
z_{21}'&=1 - 4 y_{21} + 3 y_{21}^2 + (\eta + 2 \theta_0-\theta_1e^t  ) z_{21} - 2 (\eta+ \theta_0) y_{21} z_{21} + F z_{21}^2
.
\end{aligned}
$$
In this chart, there are two base points on the exceptional line $\mathcal{L}_1$:
$$
a_3(y_{21}=0,z_{21}=0),
\quad
a_4(y_{21}=1,z_{21}=0).
$$

The energy (\ref{eq:E}) and its rate of change are:
$$
\begin{aligned}
E&=
\frac{y_{21} (y_{21}-1)^2}{z_{21}^2}
-\frac{(y_{21}-1) \big((\eta +\theta_0) y_{21}-\theta_0\big)}{z_{21}}
+F y_{21}
,
\\
E'&=
\theta_1 e^t
\left(
\frac{y_{21}-y_{21}^3}{z_{21}^2}
+
\frac{(\eta+\theta_0) y_{21}^2-\theta_0}{z_{21}}-F y_{21}
\right)
. 
\end{aligned}
$$

\paragraph{Second chart:}
\begin{gather*}
y_{22}=y_{03}=\frac{y}{z},
\quad
z_{22}=\frac{z_{03}}{y_{03}}=\frac1y,
\\
y=\frac1{z_{22}},
\quad
z=\frac1{y_{22}z_{22}}.
\end{gather*}
The exceptional line is $\mathcal{L}_1 : y_{22}=0$, while the proper preimage of the line at infinity $\mathcal{L}_{\infty}$ is $z_{22}=0$. 

The vector field (\ref{eq:PVlog-system}) is:
$$
\begin{aligned}
y_{22}'&=
\frac{5}{z_{22}^3} - \frac{8}{z_{22}^2} + \frac{3(1-( \eta+\theta_0) y_{22})}{z_{22}} 
- \theta_0 y_{22} z_{22} + F y_{22}^2 z_{22}+2 (\eta +2 \theta_0-\theta_1 e^t) y_{22} 
,
\\
z_{22}'&=
-\frac{2 ( z_{22}-1)^2}{y_{22} z_{22}^2}+
\eta + \theta_0  - (\eta+ 2 \theta_0-\theta_1 e^t) z_{22} + \theta_0 z_{22}^2
.
\end{aligned}
$$
In this chart, the only visible base point on the exceptional line $\mathcal{L}_1$ is $(y_{22}=0,z_{22}=1)$, which is $a_4$.

The energy (\ref{eq:E}) and its rate of change are:
$$
\begin{aligned}
E&=
\frac{1}{y_{22}^2 z_{22}^5}-\frac{2}{y_{22}^2 z_{22}^4}+\frac{1}{y_{22}^2 z_{22}^3}
-\frac{\eta+\theta_0}{y_{22} z_{22}^3}+\frac{\eta+2 \theta_0}{y_{22} z_{22}^2}-\frac{\theta_0}{y_{22} z_{22}}+\frac{F}{z_{22}}
,
\\
E'&=
\frac{\theta_1 e^t}{y_{22}^2 z_{22}^5}
\left(
-F y_{22}^2 z_{22}^4-\theta_0 y_{22} z_{22}^4+(\eta +\theta_0) y_{22} z_{22}^2+z_{22}^2-1
\right)
. 
\end{aligned}
$$

\subsection{Resolution at $a_2$}\label{a2-blow}
\paragraph{First chart:}
\begin{gather*}
y_{31}=\frac{y_{12}}{z_{12}}=\frac{1}{yz},
\quad
z_{31}=z_{12}=z,
\\
y=\frac{1}{y_{31}z_{31}},
\quad
z=z_{31}.
\end{gather*}
The exceptional line is $\mathcal{L}_2 : z_{31}=0$, while the proper preimage of the line $\mathcal{L}_{0}$ is $y_{31}=0$. 

The vector field (\ref{eq:PVlog-system}) is:
$$
\begin{aligned}
y_{31}'&=\frac{1 - (\eta + \theta_0) y_{31} + F y_{31}^2}{y_{31} z_{31}}-y_{31} z_{31} + \theta_0 y_{31}^2 z_{31}
,
\\
z_{31}'&=
- \frac{3}{y_{31}^2} + \frac{2( \eta+ \theta_0)}{y_{31}}+ \frac{ 4 z_{31}}{y_{31}} - (\eta + 2 \theta_0-\theta_1 e^t) z_{31}  - z_{31}^2-F 
.
\end{aligned}
$$
There are two base points on the exceptional line $\mathcal{L}_2$; their $y_{31}$ coordinates are the solutions of the quadratic equation:
$$
1 - (\eta + \theta_0) y_{31} + F y_{31}^2=0.
$$
Thus, the base points are
$$
a_5\left(y_{31}=\frac{2}{\theta_0+\eta+\theta_{\infty}},z_{31}=0\right),
\quad
a_6\left(y_{31}=\frac{2}{\theta_0+\eta-\theta_{\infty}},z_{31}=0\right).
$$

The energy (\ref{eq:E}) and its rate of change are:
$$
\begin{aligned}
E&=
\frac{1}{y_{31}^3 z_{31}}
-\frac{\eta+\theta_0}{y_{31}^2 z_{31}}
-\frac{2}{y_{31}^2}
+\frac{F}{y_{31} z_{31}}+\frac{2 \theta_0+\eta+z_{31}}{y_{31}}-\theta_0 z_{31}
,\\
E'&=
-\theta_1 e^t
\left(
\frac{1}{y_{31}^3 z_{31}}
-\frac{\eta+\theta_0}{y_{31}^2 z_{31}}+\frac{F}{y_{31} z_{31}}-\frac{z_{31}}{y_{31}}+\theta_0 z_{31}
\right)
.
\end{aligned}
$$

\paragraph{Second chart:}
\begin{gather*}
y_{32}=y_{12}=\frac1{y},
\quad
z_{32}=\frac{z_{12}}{y_{12}}=yz,
\\
y=\frac1{y_{32}},
\quad
z=y_{32}z_{32}.
\end{gather*}
The exceptional line is $\mathcal{L}_2 : y_{32}=0$, while the proper preimage of the line $\mathcal{L}_{0}$ is not visible in this chart. 

The vector field (\ref{eq:PVlog-system}) is:
$$
\begin{aligned}
y_{32}'&=
\eta + \theta_0 - (\eta + 2 \theta_0-\theta_1 e^t) y_{32} + \theta_0 y_{32}^2 - 2 z_{32} + 
 4 y_{32} z_{32} - 2 y_{32}^2 z_{32}
,
\\
z_{32}'&=
-\frac{F-(\eta+\theta_0) z_{32}+z_{32}^2}{y_{32}} - \theta_0 y_{32} z_{32} + y_{32} z_{32}^2
.
\end{aligned}
$$
In this chart, the only base points on the exceptional line $\mathcal{L}_2$ are $a_5$ and $a_6$.

The energy (\ref{eq:E}) and its rate of change are:
$$
\begin{aligned}
E&=
\frac{F-(\eta+\theta_0) z_{32}+z_{32}^2}{y_{32}}
+(\eta+ 2 \theta_0)z_{32}-\theta_0 y_{32} z_{32}+y_{32} z_{32}^2-2 z_{32}^2
,
\\
E'&=
-\theta_1 e^t
\left(
\frac{F-(\eta+\theta_0) z_{32}+z_{32}^2}{y_{32}}
+\theta_0 y_{32} z_{32}-y_{32} z_{32}^2
\right)
. 
\end{aligned}
$$

\subsection{Resolution at $a_3$}
\paragraph{First chart:}
\begin{gather*}
y_{41}=\frac{y_{21}}{z_{21}}=yz,
\quad
z_{41}=z_{21}=\frac1z,
\\
y=y_{41}z_{41},
\quad
z=\frac{1}{z_{41}}.
\end{gather*}
The exceptional line is $\mathcal{L}_3 : z_{41}=0$, while the proper preimage of the line $\mathcal{L}_{1}$ is not visible in this chart. 

The vector field  (\ref{eq:PVlog-system}) is:
$$
\begin{aligned}
y_{41}'&=
\frac{y_{41}-\theta_0}{z_{41}}  - F y_{41} z_{41} + \eta y_{41}^2 z_{41} + \theta_0 y_{41}^2 z_{41} - 
 y_{41}^3 z_{41}
 ,
\\
z_{41}'&= 
1 + (\eta + 2 \theta_0-\theta_1 e^t ) z_{41} - 4 y_{41} z_{41} + F z_{41}^2 - 2 \eta y_{41} z_{41}^2 - 
 2 \theta_0 y_{41} z_{41}^2 + 3 y_{41}^2 z_{41}^2
.
\end{aligned}
$$
In this chart, there is one base point on the exceptional line $\mathcal{L}_3$:
$$
a_7(y_{41}=\theta_0,z_{41}=0).
$$

The energy (\ref{eq:E}) is:
$$
E=
\frac{y_{41}-\theta_0}{z_{41}}
+y_{41}^3 z_{41}
-(\eta+\theta_0) y_{41}^2 z_{41}
-2 y_{41}^2
+F y_{41} z_{41}+(2 \theta_0+\eta) y_{41}.
$$

\paragraph{Second chart:}
\begin{gather*}
y_{42}=y_{21}=y,
\quad
z_{42}=\frac{z_{21}}{y_{21}}=\frac1{yz},
\\
y=y_{42},
\quad
z=\frac1{y_{42}z_{42}}.
\end{gather*}
The exceptional line is $\mathcal{L}_3 : y_{42}=0$, while the proper preimage of the line $\mathcal{L}_{1}$ is $z_{42}=0$. 

The vector field  (\ref{eq:PVlog-system}) is:
$$
\begin{aligned}
y_{42}'&=
\frac{ 2 ( y_{42}-1)^2}{z_{42}}
-\theta_0 + (\eta + 2 \theta_0-\theta_1 e^t) y_{42} - (\eta + \theta_0) y_{42}^2
,
\\
z_{42}'&=
\frac{\theta_0 z_{42}-1}{y_{42}}+ y_{42}  - (\eta + \theta_0) y_{42} z_{42} + 
 F y_{42} z_{42}^2
.
\end{aligned}
$$
In this chart, the only base point on the exceptional line $\mathcal{L}_3$ is $a_7(y_{42}=0,z_{42}=1/\theta_0)$.

The energy (\ref{eq:E}) is:
$$
E=
\frac{1}{y_{42} z_{42}^2}
+\frac{y_{42}-2}{z_{42}^2}-\frac{\theta_0}{y_{42} z_{42}}
+\frac{2 \theta_0+\eta-(\theta_0+\eta)y_{42}}{z_{42}}
+F y_{42}
.
$$

\subsection{Resolution at $a_4$}
\paragraph{First chart:}
\begin{gather*}
y_{51}=\frac{y_{21}-1}{z_{21}}=(y-1)z,
\quad
z_{51}=z_{21}=\frac1z,
\\
y=y_{51}z_{51}+1,
\quad
z=\frac1{z_{51}}.
\end{gather*}
The exceptional line is $\mathcal{L}_4 : z_{51}=0$, while the proper preimage of the line $\mathcal{L}_{1}$ is not visible in this chart. 

The vector field  (\ref{eq:PVlog-system}) is:
$$
\begin{aligned}
y_{51}'&=
-\frac{\theta_1 e^t}{z_{51}}
-F y_{51} z_{51} + (\eta + \theta_0) y_{51}^2 z_{51} - y_{51}^3 z_{51}
,
\\
z_{51}'&=
-(\eta+\theta_1 e^t) z_{51}+ 2 y_{51} z_{51} + F z_{51}^2 - 2 (\eta + \theta_0) y_{51} z_{51}^2 + 3 y_{51}^2 z_{51}^2
.
\end{aligned}
$$
There are no base points on the exceptional line $\mathcal{L}_4$ in this chart.

The energy (\ref{eq:E}) is:
$$
E=
y_{51}^3 z_{51}
-(\eta+\theta_0) y_{51}^2 z_{51}
+y_{51}^2
+F y_{51} z_{51}
-\eta y_{51}+F
.
$$

\paragraph{Second chart:}
\begin{gather*}
y_{52}=y_{21}-1=y-1,
\quad
z_{52}=\frac{z_{21}}{y_{21}-1}=\frac1{(y-1)z},
\\
y=y_{52}+1,
\quad
z=\frac{1}{y_{52}z_{52}}.
\end{gather*}
The exceptional line is $\mathcal{L}_4 : y_{52}=0$, while the proper preimage of the line $\mathcal{L}_{1}$ is $z_{52}=0$. 

The vector field  (\ref{eq:PVlog-system}) is:
$$
\begin{aligned}
y_{52}'&=
-\theta_0 + (\eta + 2 \theta_0-\theta_1 e^t) (1 + y_{52}) - (\eta + \theta_0) (1 + y_{52})^2 + 
\frac{ 2 y_{52} (1 + y_{52})}{z_{52}}
 ,
\\
z_{52}'&=\frac{\theta_1 e^t  z_{52}}{y_{52}}+
y_{52} - (\eta + \theta_0) y_{52} z_{52} + F y_{52} z_{52}^2
.
\end{aligned}
$$
In this chart, there is one base point on the exceptional line $\mathcal{L}_4$:
$$
a_8(y_{52}=0,z_{52}=0).
$$
Notice that $a_8$ is the intersection point of $\mathcal{L}_4$ and the proper preimage of  $\mathcal{L}_{1}$.

The energy (\ref{eq:E}) is:
$$
E=
\frac{y_{52}+1}{z_{52}^2}
-\frac{(\eta+\theta_0) y_{52}+\eta}{z_{52}}+F y_{52}+F
.
$$

\subsection{Resolution at $a_5$}\label{a5-blow}

\paragraph{First chart:}
\begin{gather*}
\tilde{y}_{61}=\frac{y_{32}}{z_{32}-\epsilon}=\frac{1}{y(yz-\epsilon)},
\quad
\tilde{z}_{61}=z_{32}-\epsilon=yz-\epsilon,
\\
y=\frac{1}{\tilde{y}_{61}\tilde{z}_{61}},
\quad
z=\tilde{y}_{61}\tilde{z}_{61}(\tilde{z}_{61}+\epsilon).
\end{gather*}
The exceptional line is $\mathcal{L}_5 : z_{61}=0$, and the proper preimage of $\mathcal{L}_{2}$ is $y_{61}=0$.
$\mathcal{L}_0$ is not visible in this chart. 

The vector field  (\ref{eq:PVlog-system}) is:
$$
\begin{aligned}
y_{61}'=&
-1 + (4 \epsilon - \eta  - 2 \theta_0  + \theta_1 e^t) y_{61} 
+\epsilon( \epsilon -  \theta_0) y_{61}^2 + 4 y_{61} z_{61} +2(\theta_0- 2\epsilon) y_{61}^2 z_{61}-  3 y_{61}^2 z_{61}^2
,
\\
z_{61}'=&
\frac{\eta+\theta_0-2\epsilon-z_{61}}{y_{61}}
 + \epsilon (\epsilon-\theta_0)y_{61} z_{61} + (2 \epsilon - \theta_0) y_{61} z_{61}^2 + y_{61} z_{61}^3
.
\end{aligned}
$$
In this chart, there are no base points on the exceptional line $\mathcal{L}_5$.

The energy (\ref{eq:E}) is:
$$
\begin{aligned}
E=& 
\frac{2 \epsilon-\eta-\theta_0+z_{61}}{y_{61}}
+\epsilon(\eta+2\theta_0-2\epsilon)
+(\eta-4 \epsilon+2 \theta_0) z_{61}
-2 z_{61}^2
\\
&
+\epsilon(\epsilon-\theta_0) y_{61} z_{61}
+(2 \epsilon-\theta_0)y_{61} z_{61}^2
+y_{61} z_{61}^3
.
\end{aligned}
$$

\paragraph{Second chart:}
\begin{gather*}
\tilde{y}_{62}=y_{32}=\frac{1}{y},
\quad
\tilde{z}_{62}=\frac{z_{32}-\epsilon}{y_{32}}=y(yz-\epsilon),
\\
y=\frac{1}{\tilde{y}_{62}},
\quad
z=\tilde{y}_{62}(y_{62}\tilde{z}_{62}+\epsilon).
\end{gather*}
The exceptional line is $\mathcal{L}_5\ :\ y_{62}=0$, while the proper preimages of $\mathcal{L}_2$ and $\mathcal{L}_0$ are not visible in this chart.

The vector field  (\ref{eq:PVlog-system}) is:
$$
\begin{aligned}
y_{62}'=&
\eta-2 \epsilon+\theta_0
+(4 \epsilon-\eta-2 \theta_0+\theta_1 e^t ) y_{62}
+(\theta_0-2 \epsilon) y_{62}^2
-2 y_{62} z_{62}
+4 y_{62}^2 z_{62}
-2 y_{62}^3 z_{62}
,\\
z_{62}'=&
\epsilon(\epsilon- \theta_0)
+(\eta +2 \theta_0 -4 \epsilon -\theta_1 e^t )z_{62}
+z_{62}^2
+2(2 \epsilon -\theta_0) y_{62} z_{62}
-4 y_{62} z_{62}^2
+3 y_{62}^2 z_{62}^2
.
\end{aligned}
$$
In this chart, there are no base points on the exceptional line $\mathcal{L}_5$.

The energy (\ref{eq:E}) is:
$$
\begin{aligned}
E=&
\epsilon(\eta-2\epsilon+2  \theta_0)
+\epsilon(\epsilon- \theta_0) y_{62}
+(2 \epsilon-\eta -\theta_0 ) z_{62}
\\&
+(\eta -4 \epsilon+2 \theta_0 ) y_{62} z_{62}
+(2 \epsilon -\theta_0) y_{62}^2 z_{62}
+y_{62} z_{62}^2
-2 y_{62}^2 z_{62}^2
+y_{62}^3 z_{62}^2
.
\end{aligned}
$$

\subsection{Resolution at $a_6$}\label{a6-blow}

Same as at $a_5$, replacing $\theta_{\infty}$ with $-\theta_{\infty}$.
The resolution does not yield new base points.

\subsection{Resolution at $a_7$}\label{a7-blow}
\paragraph{First chart:}
\begin{gather*}
y_{81}=\frac{y_{41}-\theta_0}{z_{41}}=z(yz-\theta_0),
\quad
z_{81}=z_{41}=\frac1z,
\\
y=z_{81}(y_{81}z_{81}+\theta_0),
\quad
z=\frac1{z_{81}}.
\end{gather*}
The exceptional line is $\mathcal{L}_7 : z_{81}=0$, while the proper preimages of the lines $\mathcal{L}_{3}$ and $\mathcal{L}_1$ are not visible in this chart. 

The vector field  (\ref{eq:PVlog-system}) is:
$$
\begin{aligned}
y_{81}'=&
 \theta_0(\eta \theta_0-F) + (2 \theta_0-\eta+\theta_0 e^t) y_{81}
+2( 2 \eta \theta_0  -  \theta_0^2-F) y_{81} z_{81} 
\\&
+ 4 y_{81}^2 z_{81} + 
 3( \eta - 2 \theta_0 )y_{81}^2 z_{81}^2 - 4 y_{81}^3 z_{81}^3
,
\\
z_{81}'=&
1 + (\eta - 2 \theta_0-\theta_1 e^t) z_{81} + (F  - 2 \eta \theta_0  +  \theta_0^2) z_{81}^2 - 4 y_{81} z_{81}^2
+2(2\theta_0-  \eta) y_{81} z_{81}^3  + 
 3 y_{81}^2 z_{81}^4
.
\end{aligned}
$$
There are no base points on $\mathcal{L}_7$ in this chart.

The energy (\ref{eq:E}) is:
$$
\begin{aligned}
E=&
\eta \theta_0
+y_{81}
+ \theta_0(F-\eta \theta_0) z_{81}
+(\eta-2 \theta_0) y_{81} z_{81}
\\&
+(F +\theta_0^2 -2 \eta \theta_0) y_{81} z_{81}^2
-2 y_{81}^2 z_{81}^2
+(2 \theta_0 -\eta) y_{81}^2 z_{81}^3
+y_{81}^3 z_{81}^4
.
\end{aligned}
$$

\paragraph{Second chart:}
\begin{gather*}
y_{82}=y_{41}-\theta_0=yz-\theta_0,
\quad
z_{82}=\frac{z_{41}}{y_{41}-\theta_0}=\frac{1}{z(yz-\theta_0)},
\\
y=y_{82}z_{82}(y_{82}+\theta_0),
\quad
z=\frac1{y_{82}z_{82}}.
\end{gather*}
The Jacobian and its derivative are:
$$
\begin{aligned}
J_{82}=& \frac{\partial y_{82}}{\partial y}\frac{\partial z_{82}}{\partial z}-\frac{\partial y_{82}}{\partial z}\frac{\partial z_{82}}{\partial y}
=\frac{1}{z(\theta_0 -yz)}=-z_{82},
\\
J_{82}'=& -z_{82}(
\eta - 2 \theta_0 - \theta_1 e^t- 4 y_{82} + 
\theta_0(F - \eta \theta_0) z_{82} 
+ 2( F-2 \eta \theta_0 + \theta_0^2) y_{82} z_{82} 
\\
&\qquad\quad
+3(2\theta_0- \eta) y_{82}^2 z_{82} + 4 y_{82}^3 z_{82}
 ).
\end{aligned}
$$
The exceptional line is $\mathcal{L}_7 : y_{82}=0$, while the proper preimage of the line $\mathcal{L}_{3}$ is $z_{82}=0$. 
$\mathcal{L}_1$ is not visible in this chart.

The vector field  (\ref{eq:PVlog-system}) is:
$$
\begin{aligned}
y_{82}'=&
\frac{1}{z_{82}} 
+\theta_0(\eta\theta_0-F) y_{82} z_{82}
+(  2 \eta \theta_0  - \theta_0^2-F) y_{82}^2 z_{82} + (\eta  -  2 \theta_0) y_{82}^3 z_{82} - y_{82}^4 z_{82}
,
\\
z_{82}'=&
z_{82}(
\eta - 2 \theta_0 - \theta_1 e^t- 4 y_{82} + 
\theta_0(F - \eta \theta_0) z_{82} 
+ 2( F-2 \eta \theta_0 + \theta_0^2) y_{82} z_{82} 
\\
&\qquad
+3(2\theta_0- \eta) y_{82}^2 z_{82} + 4 y_{82}^3 z_{82}
 )
.
\end{aligned}
$$
In this chart, no base points are visible on the exceptional line $\mathcal{L}_7$.

The energy (\ref{eq:E}) is:
$$
\begin{aligned}
E=&
\frac{1}{z_{82}}
+\eta \theta_0
+(\eta -2 \theta_0) y_{82}
-2 y_{82}^2
+\theta_0(F-\eta \theta_0) y_{82} z_{82}
\\&
+(F-2 \eta \theta_0 +\theta_0^2) y_{82}^2 z_{82}
+(2 \theta_0 -\eta) y_{82}^3 z_{82}
+y_{82}^4 z_{82}
.
\end{aligned}
$$

\subsection{Resolution at $a_8$}\label{a8-blow}
\paragraph{First chart:}
\begin{gather*}
y_{91}=\frac{y_{52}}{z_{52}}=(y-1)^2z,
\quad
z_{91}=z_{52}=\frac1{(y-1)z},
\\
y=y_{91}z_{91}+1,
\quad
z=\frac1{y_{91}z_{91}^2}.
\end{gather*}
The exceptional line is $\mathcal{L}_8 : z_{91}=0$, while the proper preimage of the line $\mathcal{L}_{4}$ is $y_{91}=0$.
$\mathcal{L}_{1}$ is not visible in this chart and it corresponds to the infinite value of $y_{91}$.

The vector field  (\ref{eq:PVlog-system}) is:
$$
\begin{aligned}
y_{91}'=&
 \frac{2 (y_{91}-\theta_1 e^t)}{z_{91}}-(\eta+\theta_1 e^t) y_{91} + y_{91}^2  - F y_{91}^2 z_{91}^2
,
\\
z_{91}'=&
\frac{\theta_1 e^t}{y_{91}}+
y_{91} z_{91} (1 - (\eta+\theta_0) z_{91} + F z_{91}^2)
.
\end{aligned}
$$
In this chart, there is one base point on the exceptional line $\mathcal{L}_8$:
$$
a_9(y_{91}=\theta_1 e^t,z_{91}=0).
$$

The energy (\ref{eq:E}) is:
$$
E=
\frac{1}{z_{91}^2}
+\frac{y_{91}-\eta}{z_{91}}
+F
-(\eta+\theta_0) y_{91}
+F y_{91} z_{91}
.
$$

\paragraph{Second chart:}
\begin{gather*}
y_{92}=y_{52}=y-1,
\quad
z_{92}=\frac{z_{52}}{y_{52}}=\frac{1}{(y-1)^2z},
\\
y=y_{92}+1,
\quad
z=\frac{1}{y_{92}^2z_{92}}.
\end{gather*}
The exceptional line is $\mathcal{L}_8 : y_{92}=0$. The proper preimage of the line $\mathcal{L}_{1}$ is $z_{92}=0$, while the proper preimage of $\mathcal{L}_{4}$ is not visible in this chart. 

The vector field  (\ref{eq:PVlog-system}) is:
$$
\begin{aligned}
y_{92}'=&
\frac{ 2(1+ y_{92})}{z_{92}}-\theta_1 e^t
-(\eta+\theta_1 e^t) y_{92} - (\eta + \theta_0) y_{92}^2 
,
\\
z_{92}'=&
 \frac{2(\theta_1 e^t z_{92}-1)}{y_{92}} + (\eta + \theta_1 e^t) z_{92} + F y_{92}^2 z_{92}^2-1 
.
\end{aligned}
$$
In this chart, there is one base point on the exceptional line $\mathcal{L}_8$: $a_9\left(y_{92}=0,z_{92}=\dfrac{1}{\theta_1 e^t}\right)$.

The energy (\ref{eq:E}) is:
$$
E=
\frac{1}{y_{92}^2 z_{92}^2}
+\frac{1}{y_{92} z_{92}^2}
-\frac{\eta}{y_{92} z_{92}}
-\frac{\eta+\theta_0}{z_{92}}
+F y_{92}
+F
.
$$

\subsection{Resolution at $a_9$}\label{a9-blow}
\paragraph{First chart:}
\begin{gather*}
y_{101}=\frac{y_{91}-\theta_1e^t}{z_{91}}= (y-1)z ( (y-1)^2z - \theta_1 e^t ),
\quad
z_{101}=z_{91}=\frac1{(y-1)z},
\\
y=1 + \theta_1e^t z_{101} + y_{101} z_{101}^2,
\quad
z=\frac{1}{z_{101}^2 (\theta_1e^t + y_{101} z_{101})}.
\end{gather*}
The exceptional line is $\mathcal{L}_9 : z_{101}=0$, while the proper preimage of $\mathcal{L}_{8}$ is not visible in this chart. 

The vector field  (\ref{eq:PVlog-system}) is:
$$
\begin{aligned}
y_{101}'=&
\frac{\theta_1 e^t  (y_{101}-(1 + \eta)\theta_1  e^t )}{z_{101} (\theta_1 e^t  + y_{101} z_{101})}
+
\frac{y_{101} (2y_{101}-(1 + \eta)  \theta_1 e^t)}{\theta_1 e^t  + y_{101} z_{101}}
-\eta y_{101}   - F\theta_1^2 e^{2t} z_{101}
\\
&
+\theta_1(\eta+\theta_0) e^t y_{101} z_{101}
-3F\theta_1 e^ty_{101}z_{101}^2 + (\eta+\theta_0) y_{101}^2 z_{101}^2 - 2 F y_{101}^2 z_{101}^3
,\\
z_{101}'=&
\frac{\theta_1 e^t}{\theta_1 e^t  + y_{101} z_{101}}
+
\theta_1 e^t z_{101} - (\eta +\theta_0) \theta_1 e^t z_{101}^2 + 
 y_{101} z_{101}^2 +  F\theta_1 e^t z_{101}^3
\\& 
   - (\eta +\theta_0) y_{101} z_{101}^3 + F y_{101} z_{101}^4
 .
\end{aligned}
$$
In this chart, there is one base point on the exceptional line $\mathcal{L}_9$:
$$
a_{10}\left(y_{101}=(1 + \eta)\theta_1  e^t,z_{101}=0\right).
$$

The energy (\ref{eq:E}):
$$
E=
\frac{1}{z_{101}^2}
+\frac{\theta_1 e^t-\eta}{z_{101}}
-(\eta+\theta_0) \theta_1 e^t
+F
+y_{101}
+\theta_1 e^t F  z_{101}
-(\eta+\theta_0) y_{101} z_{101}
+F y_{101} z_{101}^2
.
$$

\paragraph{Second chart:}
\begin{gather*}
y_{102}=y_{91}-\theta_1 e^t=(y-1)^2z-\theta_1 e^t,
\quad
z_{102}=\frac{z_{91}}{y_{91}-\theta_1 e^t}=\frac{1}{(y-1)z}\cdot\frac{1}{(y-1)^2z-\theta_1 e^t}
,
\\
y= 1 + \theta_1 e^t y_{102} z_{102} + y_{102}^2 z_{102},
\quad
z=\frac{1}{y_{102}^2 (\theta_1 e^t + y_{102}) z_{102}^2}.
\end{gather*}
The Jacobian is:
$$
\begin{aligned}
J_{102}=&\frac{\partial y_{102}}{\partial y}\frac{\partial z_{102}}{\partial z}-\frac{\partial y_{102}}{\partial z}\frac{\partial z_{102}}{\partial y}
=\frac{1}{z (\theta_1 e^t - ( y-1)^2 z)}
=-y_{102} (\theta_1 e^t + y_{102}) z_{102}^2,
\\
J_{102}'=&
-2 y_{102}^3 z_{102}^2\big(
1- (\eta  +  \theta_0) y_{102} z_{102} 
+  F y_{102}^2 z_{102}^2 
\big)
- 
 F \theta_1^3 e^{3t}  y_{102}^2 z_{102}^4 
\\& 
 - \theta_1^2  e^{2t} z_{102}^2 \big(1+ \eta +y_{102}-   2 (\eta + \theta_0)  y_{102}^2 z_{102} 
 +   4 F  y_{102}^3 z_{102}^2\big) 
\\& 
 -
\theta_1 e^{t} y_{102} z_{102}^2  \big(1 + \eta +   3  y_{102} - 4 (\eta+\theta_0)  y_{102}^2 z_{102} + 5 F  y_{102}^3 z_{102}^2\big).
\end{aligned}
$$
The exceptional line is $\mathcal{L}_9 : y_{102}=0$.
The proper preimages of $\mathcal{L}_{8}$ is $z_{102}=0$ and of $\mathcal{L}_4$ is $y_{102}=-\theta_1 e^t$.
 $\mathcal{L}_1$ is not visible in this chart and it corresponds the infinite value of $y_{102}$.

The vector field  (\ref{eq:PVlog-system}) is:
$$
\begin{aligned}
y_{102}'=&
\frac{2}{z_{102}}
-\eta y_{102} + y_{102}^2  - F y_{102}^4 z_{102}^2
-(1+ \eta) \theta_1 e^t  
\\&
+ \theta_1 e^t  y_{102} - 
F \theta_1^2 e^{2t}  y_{102}^2 z_{102}^2 - 2 F \theta_1 e^t  y_{102}^3 z_{102}^2
,
\\
z_{102}'=&
\frac{\theta_1 e^t ((1 + \eta) \theta_1 e^t z_{102}-1)}{y_{102} (\theta_1 e^t + y_{102})}
+\frac{ (1+\eta) \theta_1 e^t z_{102}-2}{\theta_1 e^t + y_{102}}
\\&
-\theta_1 e^t + (1+\eta)\theta_1^2 e^{2t}  z_{102}  -2y_{102}
   + (1+ \eta )\theta_1 e^t y_{102}z_{102}
.
\end{aligned}
$$
In this chart, there is one visible base point on the exceptional line $\mathcal{L}_9$ : 
$$a_{10} \left( y_{102}=0,z_{102}=\frac{1}{(1 + \eta)\theta_1  e^t}\right).$$

The energy (\ref{eq:E}) is:
$$
\begin{aligned}
E=&
\frac{1}{y_{102}^2 z_{102}^2}
+\frac{\theta_1 e^t-\eta}{y_{102} z_{102}}
+\frac{1}{z_{102}}
\\&
+F
-(\eta+\theta_0) \theta_1 e^t
-(\eta+\theta_0) y_{102}
+e^tF \theta_1 y_{102} z_{102}
+F y_{102}^2 z_{102}
.
\end{aligned}
$$

\subsection{Resolution at $a_{10}$}\label{a10-blow}
\paragraph{First chart:}
$$
\begin{aligned}
y_{111}&=\frac{y_{101}-(1 + \eta)\theta_1  e^t}{z_{101}}
= (y-1)^4z^3 - \theta_1 e^t(y-1)^2z^2  -(1 + \eta)\theta_1  e^t   (y-1)z
,
\\
z_{111}&=z_{101}=\frac1{(y-1)z},
\\
y &= 1 + \theta_1 e^t z_{111} + \theta_1 e^t z_{111}^2 + \eta \theta_1 e^t z_{111}^2 + y_{111} z_{111}^3,
\\
z &= \frac{1}{ z_{111}^2 (\theta_1 e^t + \theta_1 e^t z_{111} + \eta \theta_1 e^t z_{111} + y_{111} z_{111}^2)}.
\end{aligned}
$$
The exceptional line is $\mathcal{L}_{10} : z_{111}=0$, while the proper preimage of $\mathcal{L}_{9}$ is not visible in this chart. 

The vector field  (\ref{eq:PVlog-system}) is:
$$
\begin{aligned}
y_{111}'=&\frac{1}{y_{111} z_{111}^2 + \theta_1 e^t (1 + (1+\eta) z_{111})}
\times
\\&
\times
\big(
(e^{3t} \theta_1^3 (1 + (1+\eta) z_{111})^2 (\eta^2 + \theta_0 - 
     F (1 + 2 z_{111}) + \eta (1 + \theta_0 - 2 F z_{111})) 
\\&
\qquad     
     - 
  y_{111}^2 z_{111} ( y_{111} z_{111}^3 (1 - 2 \theta_0 z_{111} + 3 F z_{111}^2) + 
     \eta (z_{111} - 2 y_{111} z_{111}^4)-2) 
 \\&
\qquad      
     + 
  \theta_1 e^t y_{111} (3 - z_{111} - 2 y_{111} z_{111}^2 
  + 
     2 (2 \theta_0-1) y_{111} z_{111}^3 + ( 5 \theta_0 y_{111}-7 F y_{111}) z_{111}^4)
\\&
 \qquad
+ \theta_1 e^t y_{111} (
         \eta^2 z_{111} (5 y_{111} z_{111}^3-2)  - 
     8 F y_{111} z_{111}^5)
\\&
 \qquad
+\eta \theta_1 e^t y_{111} (
     2 - 3 z_{111} + 2 y_{111} z_{111}^3 + 5 (1 + \theta_0) y_{111} z_{111}^4 - 
        8 F y_{111} z_{111}^5) 
\\&
 \qquad        
        -
  e^{2t} \theta_1^2 (1+ 
     y_{111} (1 + z_{111}) (1 + z_{111} - 
        2 \theta_0 z_{111} + (5 F - 4 \theta_0) z_{111}^2 + 7 F z_{111}^3))
\\&
 \qquad        
       +
 \eta  \theta_1^2 e^{2t}(
        2 y_{111} z_{111}^2 (2 + 3 \theta_0 + 2(1-3F+2 \theta_0+2\eta^2) z_{111} - 
           7 F z_{111}^2)-3-\eta^2)
 \\&
 \qquad        
       +
 \eta^2\theta_1^2 e^{2t}                   
      ( y_{111} z_{111}^2 (5 + 4 (2 + \theta_0) z_{111} - 7 F z_{111}^2)-3)
\big)
 ,\\
z_{111}'=&
\frac{1}{y_{111} z_{111}^2 + \theta_1 e^t (1 + z_{111} + \eta z_{111})}
\times
\\&
\times
\big(
y_{111}^2 z_{111}^5 (1 - \eta z_{111} - \theta_0 z_{111} + F z_{111}^2) 
\\
&
\qquad
+ 
 e^{2t} \theta_1^2 z_{111} (1 + z_{111} + \eta z_{111})^2 (1 - \eta z_{111} - 
    \theta_0 z_{111} + F z_{111}^2) 
\\
&
\qquad
        + 
 \theta_1 e^t (1 + 
    2 y_{111} z_{111}^3 (1 + z_{111} + \eta z_{111}) (1 - \eta z_{111} - \theta_0 z_{111} + 
       F z_{111}^2))
\big)
.
\end{aligned}
$$
In this chart, there are no base points on the exceptional line $\mathcal{L}_{10}$.

The energy (\ref{eq:E}) is:
$$
\begin{aligned}
E= &
\frac{1}{z_{111}^2}
+\frac{\theta_1 e^t-\eta}{z_{111}}
+F+(1-\theta_0) \theta_1e^t
+e^t \theta_1(F-(1+\eta)(\eta+\theta_0)) z_{111}
\\&
+(1+\eta)\eta e^tF \theta_1 z_{111}^2
+y_{111} z_{111}
-(\eta+ \theta_0) y_{111} z_{111}^2
+F y_{111} z_{111}^3
.
\end{aligned}
$$

\paragraph{Second chart:}
$$
\begin{aligned}
y_{112}=&y_{101}-(1 + \eta)\theta_1  e^t
= (y-1)^3z^2 - (y-1)z \theta_1 e^t -(1 + \eta)\theta_1  e^t
,
\\
z_{112}=&\frac{z_{101}}{y_{101}-(1 + \eta)\theta_1  e^t}
=\frac{1}{(y-1)^4z^3 - (y-1)^2z^2 \theta_1 e^t -(y-1)z(1 + \eta)\theta_1  e^t}
,
\\
y=&1 +  \theta_1 e^t y_{112} z_{112} + (1 + \eta)\theta_1  e^t y_{112}^2 z_{112}^2 + y_{112}^3 z_{112}^2,
\\
z=&\frac1{\theta_1  e^t y_{112}^2 z_{112}^2 (1 + (1 + \eta) y_{112} z_{112} + y_{112}^2 z_{112})}.
\end{aligned}
$$
The Jacobian and its derivative are:
$$
\begin{aligned}
J_{112} = &
-\theta_1 e^t z_{112} -(1+\eta) \theta_1 e^t y_{112} z_{112}^2-y_{112}^2 z_{112}^2,
\\
J_{112}' = &
2 y_{112}^4 z_{112}^3\big( (\eta+\theta_0) y_{112} z_{112} - 1 +   F y_{112}^2 z_{112}^2\big)
\\&
+\theta_1 e^t z_{112}\big(
1 + \eta - (1+\eta)^2y_{112} z_{112}  - 
 3 y_{112}^2 z_{112} + 4 (\theta_0-1) y_{112}^3 z_{112}^2 \big.
 \\&\qquad\qquad\quad
 \big.
 + 
 5 (\eta+ \eta^2 - F +\theta_0+ \eta \theta_0) y_{112}^4 z_{112}^3 
 - 6 F(1+ \eta) y_{112}^5 z_{112}^4
\big)
\\&
+\theta_1^2 e^{2t} z_{112}\big(
1 + (1+\eta)^3 z_{112} 
+ (3 +  \eta  - 2 \theta_0) y_{112} z_{112} 
\big.
\\&\qquad\qquad\quad
\big.
+ 2 (1 - \eta - 2 \eta^2  + 2 F  - 3 \theta_0(1+\eta)) y_{112}^2 z_{112}^2
\big.
\\&\qquad\qquad\quad
\big.
 - 2(1+\eta)( 2(\eta+\theta_0) (1+\eta) - 5 F ) y_{112}^3 z_{112}^3 + 
 6 F (1+\eta)^2y_{112}^4 z_{112}^4
\big)
.
\end{aligned}
$$
The exceptional line is $\mathcal{L}_{10} : y_{112}=0$, while the proper preimage of $\mathcal{L}_{9}$ is $z_{112}=0$. 
$\mathcal{L}_{8}$ is not visible in this chart.

The vector field  (\ref{eq:PVlog-system}) is:
$$
\begin{aligned}
y_{112}'=&
\frac{2 y_{112}^2 z_{112} + \theta_1 e^t (1 + 2 (1 + \eta) y_{112} z_{112})}{z_{112} (y_{112}^2 z_{112} + \theta_1 e^t (1 + (1 + \eta) y_{112} z_{112}))}
+
\frac{(1 + \eta) \theta_1 e^t ((1 + \eta) \theta_1 e^t + y_{112})}
{y_{112}^2 z_{112} + \theta_1 e^t (1 + (1 + \eta) y_{112} z_{112})}
\\&
-(1+\eta)^2\theta_1 e^t   - 
 2 F y_{112}^5 z_{112}^3 -\eta  y_{112}+
\theta_1^2 e^{2t} y_{112}z_{112} ( \eta   + \eta^2    - 
     F   +  \theta_0   + \eta    \theta_0  ) 
\\&     
     + 
 \theta_1 e^t  y_{112}^2 z_{112}(\eta  + \theta_0  ) + 
e^{2t} \theta_1^2 (1+\eta) y_{112}^2 z_{112}^2(    \eta   + 2 \eta^2  - 3  F     + \theta_0  + \eta \theta_0   ) 
\\&    
    + 
(\eta  + \theta_0) y_{112}^4 z_{112}^2 -4 e^t F \theta_1(1+\eta) y_{112}^4 z_{112}^3
 +\theta_1 e^t y_{112}^3 z_{112}^2(2 \eta  + 2 \eta^2  -     3  F + 2  \theta_0   +     2 \eta  \theta_0 )     
\\&     
    -2 e^{2t} F \theta_1^2(1  +   \eta)^2  y_{112}^3z_{112}^3
,
\\
z_{112}'=&
\frac{1}{y_{112}^2 z_{112} + \theta_1 e^t  (1 + (1 + \eta) y_{112} z_{112})}
\times
\\
&\qquad\times
\big(
{-}3 (1+\eta)\theta_1 e^t  z_{112} + (1+\eta)^3 \theta_1^2  e^{2t} z_{112}^2  
-2  y_{112} z_{112} + (1+\eta)^2\theta_1 e^t y_{112}   z_{112}^2 
\big)
\\&
+
(\eta  + \theta_1 e^t)  z_{112} 
- e^{2t} \theta_1^2(\eta  +  \eta^2  -  F   +   \theta_0   + \eta  \theta_0 ) z_{112}^2 
\\&
+ \theta_1 e^t  y_{112}  z_{112}^2(1  - \eta  - 2  \theta_0  )
-(1 + \eta) e^{2t} (\eta + \eta^2 - 3 F + \theta_0 +  \eta \theta_0) \theta_1^2 y_{112} z_{112}^3
\\&
 + y_{112}^2z_{112}^2
 +\theta_1 e^t  y_{112}^2z_{112}^3 (  4 F -  3 (\eta+\theta_0)(1+\eta)) 
    +2 (1+\eta)^2e^{2t} F \theta_1^2 y_{112}^2  z_{112}^4
\\& 
  -2(\eta+\theta_0) y_{112}^3 z_{112}^3  
 + 5F \theta_1( 1  +    \eta  ) e^{t} y_{112}^3z_{112}^4 
 + 3 F y_{112}^4 z_{112}^4 
.
\end{aligned}
$$
There are no base points on the exceptional line $\mathcal{L}_{10}$ in this chart.

The energy (\ref{eq:E}) is:
$$
\begin{aligned}
E=&
\frac{1}{y_{112}^2 z_{112}^2}
+\frac{\theta_1 e^t-\eta}{y_{112} z_{112}}
+F +\theta_1 e^t(1-  \theta_0)
+y_{112}
+\theta_1 e^t(F -(1+\eta)(\eta+\theta_0))y_{112} z_{112}
\\
&
-(\eta+\theta_0) y_{112}^2 z_{112}
+(1+\eta) e^{t} F \theta_1 y_{112}^2 z_{112}^2
+F y_{112}^3 z_{112}^2
.
\end{aligned}
$$

\section{Notation}
\label{sec:notation}
In this appendix, we collect the notation for base points, provide the charts in which they are defined and their coordinates in these charts, and give the relationships between constants used in the paper. 

\begin{longtable}{|c|l|c|}
\hline
base point & coordinate system & coordinates\\
\hline
  $a_0$ & $(y_{02},z_{02})=(\frac1y,\frac{z}{y})$ & $(0,0)$ 
  \\
  $a_1$ & $(y_{03},z_{03})=(\frac{y}{z},\frac1{z})$ & $(0,0)$  
  \\
  $a_2$ & $(y_{12},z_{12})=(y_{02},\frac{z_{02}}{y_{02}})=(\frac1y,z)$ & $(0,0)$
  \\
  $a_3$ & $(y_{21},z_{21})=(\frac{y_{03}}{z_{03}},z_{03})=(y,\frac1z)$ & $(0,0)$
  \\  
  $a_4$ & $(y_{21},z_{21})=(\frac{y_{03}}{z_{03}},z_{03})=(y,\frac1z)$ & $(1,0)$
  \\  
  $a_5$ & $(y_{31},z_{31})=(\frac{y_{12}}{z_{12}},z_{12})=(\frac1{yz},z)$ & $(\frac2{\theta_0+\eta+\theta_{\infty}},0)$
  \\
   $a_6$ & $(y_{31},z_{31})=(\frac{y_{12}}{z_{12}},z_{12})=(\frac1{yz},z)$ & $(\frac2{\theta_0+\eta-\theta_{\infty}},0)$
  \\  
   $a_7$ & $(y_{41},z_{41})=(\frac{y_{21}}{z_{21}},z_{21})=({yz},\frac1z)$ & $(\theta_0,0)$
  \\   
   $a_8$ & $(y_{52},z_{52})=(y_{21}-1,\frac{z_{21}}{y_{21}-1})=(y-1,\frac1{(y-1)z})$ & $(0,0)$
  \\   
   $a_9$ & $(y_{91},z_{91})=(\frac{y_{52}}{z_{52}},z_{52})=((y-1)^2z,\frac1{(y-1)z})$ & $(\theta_1 e^t,0)$
  \\ 
  $a_{10}$ & $(y_{101},z_{101})=(\frac{y_{91}-\theta_1e^t}{z_{91}},z_{91})$
  & $((1+\eta)\theta_1 e^t,0)$
  \\    
  &$ \phantom{(y_{101},z_{101})} =((y-1)z((y-1)^2z-\theta_1e^t),\frac1{(y-1)z})$&\\
  \hline
\end{longtable}
The constants used throughout the paper are related as follows.
\begin{align*}
&\theta_{\infty}^2=2\alpha,
\\
&\theta_0^2=-2\beta,
\\
&\theta_1^2=-2\delta\qquad(\theta_1\neq0) ,
\\
&\eta=-\frac{\gamma}{\theta_1}-1,
\\
&\epsilon=\frac12(\theta_0+\theta_{\infty}+\eta),
\\
&F=\dfrac12\epsilon(\theta_0+\eta-\theta_{\infty}).
\end{align*}


\begin{bibdiv}
\begin{biblist}
\bibselect{reference}
\end{biblist}
\end{bibdiv}

\end{document}